\let \P \relax
\DeclareMathOperator{\P}{\mathbb{P}}
\newcommand*{\QEDA}{\null\nobreak\hfill\ensuremath{\blacksquare}}%
\newtheorem{thm}{Theorem}
\newtheorem{lem}[thm]{Lemma}
\newtheorem{defn}{Definition}
\newtheorem{rem}{Remark}
\def\figref#1{figure~\ref{#1}}
\def\Figref#1{Figure~\ref{#1}}
\def\eqref#1{Equation~(\ref{#1})}
\def\1{\bm{1}}
\DeclareMathAlphabet{\mathsfit}{\encodingdefault}{\sfdefault}{m}{sl}
\SetMathAlphabet{\mathsfit}{bold}{\encodingdefault}{\sfdefault}{bx}{n}
\newcommand{\E}{\mathbb{E}}
\DeclareMathOperator{\sign}{sign}
\newcommand{\ssb}[1]{\textcolor{red}{#1}}
\title{Steering the Herd: A Framework for LLM-based Control of Social Learning}
\author{Raghu Arghal, Kevin He, Shirin Saeedi Bidokhti \& Saswati Sarkar \\University of Pennsylvania\\\texttt{\{rarghal,hekevin,saeedi,swati\}@upenn.edu}}
\begin{document}

\maketitle

\begin{abstract}

Algorithms increasingly serve as information mediators -- from social media feeds and targeted advertising to the increasing ubiquity of LLMs. This engenders a joint process where agents combine private, algorithmically-mediated signals with observational learning from peers to arrive at decisions. To study such settings, we introduce a model of controlled sequential social learning in which an information-mediating planner (e.g., an LLM) controls the information structure of agents while they also learn from the decisions of earlier agents. The planner may seek to improve social welfare (an altruistic planner) or to induce a specific action the planner prefers (a biased planner). Our framework presents a new optimization problem for social learning that combines dynamic programming with decentralized action choices and Bayesian belief updates. In this setting, we prove the convexity of the value function and characterize the optimal policies of altruistic and biased planners, which attain desired tradeoffs between the costs they incur and the payoffs they earn from induced agent choices. The characterization reveals that the optimal planner operates in different modes depending on the range of belief values. The modes include investing the maximum allowed resource, not investing any resource, or the investment increasing or decreasing with increase in the belief. Notably, for some ranges of belief the biased planner even intentionally obfuscates the agents' signals. Even under stringent transparency constraints—information parity with individuals, no lying or cherry‑picking, and full observability—we show that information mediation can substantially shift social welfare in either direction. We complement our theory with simulations in which LLMs act as both planner and agents. Notably, the LLM-based planner in our simulations exhibits emergent strategic behavior in steering public opinion that broadly mirrors the trends predicted, though key deviations suggest the influence of non-Bayesian reasoning—consistent with the cognitive patterns of both human users and LLMs trained on human-like data. Together, we establish our framework as a tractable basis for studying the impact and regulation of LLM information mediators that corresponds to real behavior.

\end{abstract}

\section{Introduction}
\label{sec:intro}

Large Language Models (LLMs) are increasingly deployed as information mediators in socially-critical functions, from search engines and news aggregators \citep{zhang_benchmarking_2024, gao_generative_2024} to personal assistants offering medical or political advice \citep{huo_large_2025, haupt_ai-generated_2023, schiele_voting_2024, argyle_leveraging_2023}. While promising, LLMs have also demonstrated the ability to be persuasive \citep{potterHiddenPersuadersLLMs2024, carrasco-farre_large_2024, rogiers_persuasion_2024}, manipulative \citep{williams_targeted_2025, liu_llm_2025, jones_lies_2024}, and strategic \citep{lore_strategic_2024,payne_strategic_2025, zhang_llm_2024} in their user interactions. When deployed at scale, this algorithmic influence does not occur in a vacuum. It intersects with the organic spread of information as people observe and learn from one another, creating complex social dynamics that can magnify or mitigate the models' impact.

To make this concrete, consider an LLM-powered recommendation system that suggests a new restaurant to a sequence of potential customers. The quality of the restaurant---the true state of the world---is either good or bad, but this is unknown to the users and the system alike. Each user sees a personalized recommendation (a private signal) and then decides whether to patronize the restaurant. The recommendation system (the \emph{planner}) can invest resources to make its signal more informative; for example, by having the LLM generate a highly tailored review based on the user's known preferences. This investment represents a realistic cost, corresponding to expenditures on market research, user surveys, or data acquisition needed to tailor information effectively. 

Crucially, however, a user's decision is not based solely on this private signal. They also observe the actions of their predecessors---
if the last ten users decided to patronize the restaurant, the next user will infer that they likely received positive information. This phenomenon, known as \emph{social learning}, is a powerful force in shaping human opinion in a variety of settings including the adoption of vaccines (e.g., \citet{rao2007social, bauch2012evolutionary}), new technology (e.g. \citet{gillingham2021social, weber2012social}), and political and moral opinions (e.g., \citet{brady2021social,guilbeault2018social}). Our work's central contribution is to study the strategic interaction between a planner who controls private information and a population of agents who engage in social learning. The planner must therefore anticipate how its actions will not only influence the current user but also create an information \emph{externality} that shapes the public belief confronting all future users.

While real-world interactions are immensely complex, we distill this dynamic into a tractable formal model of \emph{controlled social learning}, where an algorithmic planner strategically chooses the precision of private signals for a sequence of agents at some cost. The planner's influence is thus subtle: it does not falsify information, but rather decides how much to invest in making its signals clear and informative. Our study aims to answer the critical questions that arise from this new paradigm: How might a planner wield its power to steer collective beliefs? How should its strategy adapt to evolving public opinion? And what are the ultimate impacts on social welfare?

The answers depend critically on the planner's objective. An \emph{altruistic} planner, like an ideal recommendation system, seeks to maximize user utility by helping them make the correct choice (patronize the restaurant if and only if it is good). In contrast, a \emph{biased} planner, perhaps receiving a kickback from the restaurant, wants to induce a specific action (patronize) regardless of the true state.

This distinction is different from, though related to, the concept of alignment. An altruistic planner's objective is, by definition, aligned with maximizing the agent's \emph{expected utility} under state uncertainty. A biased planner's objective, being state-independent, is not; however, this does not mean its actions are always detrimental. If the true state happens to favor the planner's preferred action (e.g., the restaurant is indeed good), its influence becomes aligned with the agent's \emph{realized utility} in that specific instance. We explore the full spectrum of these interactions, examining both altruistic and biased planners in scenarios where their goals may be aligned or misaligned with that of the user.
\paragraph{Contributions and Outline}
\label{sec:contributions}
This paper makes the following principal contributions:
\begin{enumerate}
    \item \textbf{A Novel Theoretical Framework for Controlled Social Learning.} We introduce the first formal model that integrates a dynamic control problem for a centralized information planner with the mechanism of sequential social learning. The planner strategically chooses the precision of agents' private signals to achieve an objective, which may be altruistic (maximizing social welfare) or biased (inducing a specific action).
    \vspace{-0.05cm}
    \item \textbf{A Rigorous Characterization of Optimal Planner Policies.} We characterize the optimal policies for both altruistic and biased planners as a function of the evolving public belief. For the altruistic case, our results are founded upon a novel proof of the value function's convexity. These characterizations illuminate the strategic trade-offs a planner must make when its actions create informational externalities for future agents.
    \vspace{-0.05cm}
    \item \textbf{Empirical Validation and Strategic Analysis Using LLMs.} We conduct simulations where LLMs act as both planner and agents. Our experiments demonstrate three key findings: (a) A planner that accounts for social learning can dramatically influence public opinion and social welfare, far more than a myopic one. (b) The strategic behavior that emerges from the LLM planner largely aligns with our theoretical predictions, suggesting the model is robust to non-Bayesian agent behavior. (c) LLMs exhibit sophisticated strategic reasoning, highlighting both their potential and the societal risks they present.
\end{enumerate}
We review related work in Section~\ref{sec:related_work}, introduce our formal model in Section~\ref{sec:model}, and derive the optimal policies in Sections~\ref{sec:altruistic_results} and \ref{sec:biased_results}. We then evaluate our findings with LLM-based simulations in Section~\ref{sec:evaluation} and conclude in Section~\ref{sec:conclusion}. Proofs and additional results are relegated to the appendices.

\section{Related Work}
\label{sec:related_work}

Our work builds on social learning, information design, online persuasion and RL, and LLMs. 

\textbf{Social Learning}
\label{sec:social_learning_literature}
\citet{Bikhchandani1992} and \citet{economics1992} introduced the sequential social learning framework in economics, and subsequent work has thoroughly characterized social-learning dynamics in more general settings (e.g., \citet{Smith2000, Arieli2021}). This literature has also studied many variations in the kind of social information available to agents and various biases and misperceptions of the agents (see \citet{Dasaratha2022,Bistritz2022,Eyster2010} as well as \citet{bikhchandani_information_2024} and the references therein).

The literature on the \emph{control} of social learning is relatively limited \citep{wei2022, smith_informational_2021, krishnamurthyQuickestDetectionPOMDPs2012, bhattControlledSequentialInformation2021}. The closest works to our own are \citet{wei2022} and \citet{smith_informational_2021}. \citet{wei2022} assumes two-way communication between the planner and agents (i.e., agents can report their private signals to the planner in exchange for action recommendations). \citet{smith_informational_2021} considers a benevolent planner who directly alters the choice rules of agents and also derives a decentralized pivot mechanism to improve learning. Our model does not rely on two-way communication between the agents and the planner and maintains the agents' control of their actions, making it more suited for information-mediating algorithms which are often invisible or black-boxed to the users.

\textbf{Information Design}
\label{sec:info_design_literature}
In our setting, the planner optimizes their objective function by choosing the agents' private signal precision. This is a constrained form of more classical information design questions \citep{bergemann_information_2019}.
These models came into prominence with \citet{emir_bayesian_2011}'s model of Bayesian persuasion --- a planner chooses the signal structure informing a rational, Bayesian agent about an unknown state. Our planner's problem is then a costly, constrained information design problem in each period, where the planner must choose a binary, symmetric signal at some cost. While many variations of information design have been studied, our primary distinction is to incorporate social learning between the agents.

There are very few works which consider information design problems in sequential social learning settings \citep{arieli_herd_2022, wu_observational_2025}. Both works consider a one-shot setting where the sender chooses an information structure at onset which is fixed for the sequence of agents. \citet{arieli_herd_2022} considers a self-interested sender who chooses a general information structure as in the classic Bayesian Persuasion problem. \cite{wu_observational_2025} constrain the information structure to be more informative in the sense of Blackwell and maximize the probability of a correct cascade. We consider a dynamic problem where the sender chooses a new structure for each agent in the sequence, engendering a more complex class of sender strategies.

\textbf{Online Persuasion and RL}
\label{sec:online_rl_literature}
There is growing recent work at the interface of online learning and Bayesian persuasion/information design. These works study online sequential persuasion under unknown environment parameters, in which a sender interacts repeatedly with short-lived receivers while lacking knowledge of the prior over states and/or receivers’ utility functions (e.g. \cite{castiglioni_online_2020,bacchiocchiOnlineBayesianPersuasion2024,castiglioniMultiReceiverOnlineBayesian2021,agrawal_dynamic_2023}). A related set of papers examines sequential persuasion in dynamic environments or MDPs, introducing state transitions and/or foresight by the receiver; these models emphasize history-dependent or promise-based signaling policies and often connect to reinforcement learning (e.g. \cite{wuSequentialInformationDesign2022, suBayesianPersuasionSequential2022, bernasconiPersuadingFarsightedReceivers2023}). Across these works, the focus lies in learning-based performance guarantees such as sublinear regret relative to the best signaling scheme in hindsight and/or approximation algorithms under computational constraints.

In contrast, our model does not require the planner to learn unknown parameters; sequential dependence arises from social learning among receivers, rather than from the sender updating beliefs about the environment. Moreover, the planner is not informationally advantaged: it does not observe the underlying state or private signals. Within this setting, we are able to characterize optimal policies for the planner. While sequential, our setting is not an online-learning or RL environment: actions do not modify the underlying payoff-relevant state (no reward externalities), and the planner does not learn unknown environmental parameters. The planner updates its knowledge only through observing agents’ actions, which reflect social learning and belief propagation rather than payoff exploration or state evolution. Sequential dependence therefore arises from information externalities, as agents' actions shape the beliefs of future agents.

\textbf{LLMs}
\label{sec:strategic_llm_literature}
More recently, some works have specifically studied LLMs in the context of information design \citep{harris_algorithmic_2025, chengStrategicPersuasionLanguage2025, raiferDesigningAutomaticAgent2022, baiEfficientModelagnosticAlignment2024, duetting_information_2025}. These works all consider one-on-one interactions and, as such, do not capture the social dynamics we aim to investigate. The most similar work in this area is \citet{duetting_information_2025}. They study information design where a sender chooses both a signal structure and a 'framing' that conditions the posterior belief of the receiver, possibly non-Bayesian. They implement an LLM-based oracle approach for the computationally difficult task of optimizing over the framing space. This builds on the notion that LLMs can mimic humans as economic agents \citep{dillion_can_2023, horton_large_2023}. Their work does not address our main considerations of repeated interactions or social-learning; but we adapt parts of their methodology for our empirical study in Section~\ref{sec:evaluation}.

\section{Social Learning Model}
\label{sec:model}

We consider a planner and a countable sequence of Bayes-rational agents indexed by $i \in \mathbb{N}_{>0}$. At time $i = 0$, nature determines a fixed, \emph{unknown} exogenous state of the world $\omega \in \Omega := \{G, B\}$, where $\mathbb{P}(\omega = G) = b_1$, and $b_1$ is known to everyone. In the recommendation system example of Section~\ref{sec:intro}, the state $G$ corresponds to the case where the restaurant is good.

At each time $i$, the planner decides whether or not to invest in personalization for agent $i$. Subsequently, it provides agent $i$ with a private signal $s_i \in \Omega$ to aid in decision-making. In our model, $s_i$ is assumed binary and it matches the state of the world $\omega$ with probability $q_i\in[0.5,1]$ i.e. $\P(s_i=\omega)=q_i$. 
We refer to $q_i$ as agent $i$'s signal precision.

Each agent's signal is independent of those of other agents and the history when conditioned on $\omega$. 
Based on this signal and the observed actions of previous agents $j < i$, agent $i$ selects an action $a_i \in \Omega$. The agent receives utility $0$ if her action matches the true state ($a_i = \omega$), and utility $-C$ otherwise, where $C > 0$. Each player chooses her action to maximize her own expected utility. For example, this corresponds to patronizing a business if and only if the state is $G$. 

\subsection{Agents' Decision Problems}
\label{sec:agents}

Each agent observes the actions of all her predecessors and their respective signal precisions. 
This history is denoted $\mathcal{H}_i:=\left(b_1,(q_j,a_j)_{j<i}\right)$. Because all agents and the planner have identical information $\mathcal{H}_i$, there is a shared \emph{public} belief about $\omega$, which is updated after each agent acts. The public belief $b_i$ just before agent $i$ acts is $\P(\omega=G|\mathcal{H}_i)$, that before any agent acts is the a priori distribution over $\Omega$, $b_1$. As in the classic model of \citet{economics1992} and \citet{Bikhchandani1992}, $(b_i)_{i\in\mathbb{N}}$ is a Markov process and a sufficient statistic for the history $\mathcal{H}_i$ (see Appendix~\ref{sec:app_markov} for proof). 

Informed by $\mathcal{H}_i$ (equivalently, by the Markov property, $b_i$) and $q_i$, agent $i$ chooses action $a_i\in\Omega$ so as to maximize her utility. If both actions fetch the same utility, she chooses the action that matches $s_i$. Agent $i$ obtains a \emph{private} belief $\tilde{b}_i$ about $\omega$ using $\mathcal{H}_i$ and her private signal $s_i$ of precision $q_i$ as follows (derivation in Appendix~\ref{sec:app_private_update}):\begin{align}
    \tilde{b}_i&=\P(\omega=G|\mathcal{H}_i,q_i,s_i)
    =\begin{cases}
        \frac{q_i}{1+2b_iq_i-b_iq_i}b_i&s_i=G\\
        \frac{1-q_i}{b_i+q_i-2b_iq_i}b_i&s_i=B
    \end{cases}
\end{align}
Using this private belief, she chooses the action corresponding to the state of the world that is more likely as per her posterior belief $\tilde{b}_i$. In other words, she chooses $a_i=G$ if $\tilde{b}_i>0.5$; $a_i=B$ if $\tilde{b}_i<0.5$; and $a_i=s_i$ if $\tilde{b}_i=0.5$. Substituting for $\tilde{b}_i$, we find:
\begin{align}
\label{actions}
    a_i=\begin{cases}
        s_i&1-q_i\leq b_i\leq q_i\\
        G&q_i<b_i\\
        B&q_i<1-b_i
    \end{cases}
\end{align}
Agent $i$'s action is then observed by subsequent agents and incorporated into the updated public belief $b_{i+1}$ as follows (see \eqref{app_pub_update}, Appendix~\ref{sec:app_actions}):
\begin{align}
\label{public_update}
    b_{i+1}=f(b_i,q_i)=\begin{cases}
        \tilde{b}_i&1-q_i\leq b_i\leq q_i\\
        b_i&\text{o.w.}
    \end{cases}
\end{align}
\begin{rem} When $1-q_i\leq b_i\leq q_i$, agent $i$'s action perfectly reveals her private signal via \eqref{actions}. Thus, the updated public belief is identical to the private belief of agent $i$. Otherwise, from \eqref{actions}, agent $i$'s private signal is uninformative and has no effect on her action. Thus, the public belief is unchanged, and an absorbing state, referred to as \emph{information cascade} or \emph{herding} in classic social learning literature such as \citet{Bikhchandani1992, economics1992}, is reached. At such points, society (public belief) stop learning from agents' private information.
\end{rem}

Agent $i$'s expected utility is $-C \P(a_i\neq\omega|b_i,q_i)$ which has the following form (see Appendix~\ref{sec:appendix_deriv_mistake_prob}):
\begin{align}
    \label{mistake_probability}
    -C\P(a_i\neq\omega|b_i,q_i) = -C\min(b_i, 1-b_i, 1-q_i).
\end{align}

\subsection{The Planner's Problem}
\label{sec:planners}

We consider two types of planners: (1) an altruistic planner who wishes to induce agents to take the correct action ($a_i=\omega$) and (2) a biased planner who wishes to induce a specific action, say $G$, regardless of $\omega$. We denote these different planners with subscripts $A$ and $B$, respectively. In each case, the planner determines the precision of the private signal of each agent. A function $\beta(\cdot)$, which is non-negative, increasing, continuous, and concave in its argument, will denote the cost associated with the chosen precisions. The planner has an information set $\mathcal{H}_i$ identical to those of the agents.

\paragraph{Altruistic Planner}
\label{sec:alt_planner_prob}

At time $i$, the altruistic planner chooses the precision $q_i$ for agent $i$ and incurs a cost of $\beta(q_i)$ where $\beta(p) = 0$, $p\in[0.5,1)$. For the altruistic planner, decreasing precision is never beneficial (see Appendix~\ref{sec:app_welfare_mono}). Thus, to simplify notation, the planner incurs additional cost only if it increases the precision above a baseline value of $p$, and the additional cost increases with further increase in the precision, with decreasing marginal costs. The agents know $p$ and the function $\beta(\cdot)$. 

The planner seeks to maximize social welfare minus the cost of precision investment, where social welfare is the expected total utility of the agents. 
Let $r_A(b_i, q_i)$ be the instantaneous reward of the altruistic planner beginning at public belief $b_i$ and choosing signal precision $q_i$ for agent $i$. Starting from a public belief $b_1$ and following a sequence of policies $\pi=(\pi_i)_{i=1}^{\infty}$ such that $\pi_i(\mathcal{H}_i)=q_i$, the planner attains the following expected total discounted utility, for a discount factor $\delta\in[0,1)$: 
\begin{align}
\label{alt_reward}
V^{\pi}_A(b_1)=\sum_{i=1}^\infty \delta^{i-1}r_A(b_i,\pi(b_i)),
\qquad\text{where}\qquad r_A(b_i,q_i)=-\beta(q_i) -C\P(a_i\neq\omega|b_i,q_i).
\end{align}
The optimal utility and policy of the altruistic planner are then defined as the supremum and arg supremum of $V_A^{\pi}(\cdot)$ over $\pi\in\Pi$ where $\Pi$ is the set of all policies. From this formulation, the myopic value and policy are recovered by setting $\delta=0$.

\paragraph{Biased Planner} 
\label{sec:bias_planner_prob} 

The difference between the biased planner's problem and the altruistic planner's is in their objectives and, therefore, in the cost and reward functions. Refer to the example for a political campaign in Appendix~\ref{sec:applications} for elucidation of a biased planner. The biased planner seeks to induce action $G$ from each agent regardless of $\omega.$ When an agent chooses action $G$ (respectively, $B$), the planner, incurs cost $0$ (respectively, $C>0$), regardless of $\omega$. The biased planner can make a private signal more or less precise than the baseline value of $p$, both of which incur costs.
Any choice of precision other than $p$ incurs a cost for the biased planner as it requires him to tailor the ad to an agent, which in turn needs research on how the agent best understands any content. The biased planner incurs cost $\beta(|q_i-p|)$ for choosing signal precision $q_i$, with $\beta(0)=0$. 

The biased planner's expected instantaneous reward at time $i$ is then defined as follows:
\begin{align}
\label{biased_reward}
    r_B(b_i,q_i)=-\beta(|q_i-p|)-C\P(a_i=B|b_i,q_i)
\end{align}

Using \eqref{biased_reward}, $V^{\pi}_B(\cdot)$, $V^*_B(\cdot)$, and $\pi^*_B(\cdot)$ can now be defined for the biased planner, as $V^{\pi}_A(\cdot)$, $V^*_A(\cdot)$, and $\pi^*_A(\cdot)$ were defined for the altruistic planner using \eqref{alt_reward}. 

As noted in the first paragraph of Section \ref{sec:model}, each agent still receives a higher utility by choosing an action that matches $\omega$. Thus, if $\omega = B$, the biased planner’s success with an agent lowers the agent's utility. In contrast, since the altruistic planner seeks to have each agent's action match $\omega$, regardless of what $\omega$ is, his success increases the agent's utility. Thus, the altruistic planner's objective is always aligned with that of each agent, while for the biased planner, this is only true when $\omega = G.$ 

Both planners' utility maximization problems constitute infinite horizon discounted stationary Markov Decision Processes (MDPs) with state $b_i\in[0,1]$, control $q_i\in[0.5,1]$, and transition function defined by \eqref{public_update} \citep{PUTERMAN1990331}. Thus, there exists a unique optimal value function \citep{kallenberg2011markov}. We restrict our focus to deterministic Markov policies, thus $\Pi:=\{\pi:[0,1]\rightarrow[0.5,1]\}$ such that $\pi(b_i)=q_i$.

\begin{rem}{\label{planner_assumptions}
There are a few noteworthy assumptions we make in our model: 
(1) The planner has the same history as the agents. This is limiting when the planner could have richer data, but not restrictive when all parties rely on the same public signals, such as early findings or survey data.
(2) The planner can change agents’ signal precision, but signals are still passed through a binary symmetric channel. This is limiting if planners can cherry-pick, censor, or falsify signals, but not restrictive when their role is only to improve the quality of noisy data sources without altering content. 
(3) The planner’s control choices are fully observable. This is limiting when covert policies or framing are possible, but not restrictive in more transparent environments.}
\end{rem}
\section{Optimal Altruistic Policies}
\label{sec:altruistic_results}

We first consider the myopic case which corresponds to disregarding the role of social learning. In our formulation, this is achieved by setting the discount factor to be $\delta=0$, i.e., the planner ignores all future costs. When $\delta = 0$, $V^{\pi}_A(b)=r_A(b, \pi(b))$. 
The optimal myopic policy $\pi^0_A(\cdot)$ is: 
\begin{align}
\label{myopic_opt_policy}
    \pi^0_A(b)\in \arg\sup_{q\in[0.5,1]}\ r_A(b,q)\ \forall b\in[0,1]
\end{align}
Note that the myopic altruistic problem can equivalently be stated as a decentralized case in which each agent chooses the precision of her own private signal and incurs the associated cost with the goal of maximizing the sum of her own expected utility and cost.

\begin{thm} \label{myopic_alt_policy}The optimal myopic altruistic policy $\pi_A^0$ is given as follows: \\
        \centerline{$\pi^0_A(b) = \begin{cases}
            1&b\in(t_M,1-t_M)\\
            p&\text{o.w.}
        \end{cases}$
    where $t_M=\begin{cases}
        \frac{\beta(1)}{C}&\beta(1) < C(1-p)\\
        0.5&\text{o.w.}
    \end{cases}$}
    Proof in Appendix~\ref{sec:appendix_myopic_alt_proof}.
\end{thm}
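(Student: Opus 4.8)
The plan is to solve the one-shot problem $\sup_{q\in[0.5,1]} r_A(b,q)$ pointwise in $b$, where by \eqref{alt_reward} together with the closed form \eqref{mistake_probability} the objective is $r_A(b,q) = -\beta(q) - C\min(b,1-b,1-q)$. First I would use symmetry: since $\min(b,1-b,1-q)$ is invariant under $b\mapsto 1-b$, so is $r_A(\cdot,q)$, hence an optimal policy may be taken with $\pi^0_A(b)=\pi^0_A(1-b)$, and it suffices to find the optimal $q$ for $b\in[0,1/2]$, on which $\min(b,1-b,1-q)=\min(b,1-q)$. The form of the policy on $(1/2,1]$ then follows by reflection, which is exactly what produces the symmetric interval $(t_M,1-t_M)$.

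Next I would reduce the maximization to the two candidates $q\in\{p,1\}$. On $[0.5,p]$ the cost vanishes (from $\beta(p)=0$ together with $\beta$ non-negative and non-decreasing) while $q\mapsto\min(b,1-b,1-q)$ is non-increasing, so $r_A(b,\cdot)$ is non-decreasing there and $q=p$ weakly dominates every $q\le p$. On $[p,1]$, writing $-C\min(b,1-b,1-q)=\max\{-Cb,\,-C(1-b),\,Cq-C\}$ exhibits this term as a maximum of affine functions, hence convex in $q$; since $\beta$ is concave on $[p,1]$, $-\beta$ is convex there, so $r_A(b,\cdot)$ is convex on $[p,1]$ and is maximized at an endpoint. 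Combining the two intervals gives $\sup_q r_A(b,q)=\max\{r_A(b,p),\,r_A(b,1)\}$, with $r_A(b,p)=-C\min(b,1-b,1-p)$ and $r_A(b,1)=-\beta(1)$.

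Then I would compare the two candidates to pin down $t_M$. Full precision $q=1$ is weakly optimal iff $\beta(1)\le C\min(b,1-b,1-p)$, which for $b\in[0,1/2]$ reads $\beta(1)\le C\min(b,1-p)$. If $\beta(1)<C(1-p)$: for $b\ge 1-p$ the right-hand side equals $C(1-p)>\beta(1)$, so $q=1$ wins; for $b<1-p$ the condition becomes $b\ge\beta(1)/C$, and since $\beta(1)/C<1-p$ this holds on $[\beta(1)/C,1-p)$ and fails strictly on $[0,\beta(1)/C)$. Hence $q=1$ is optimal precisely on $(\beta(1)/C,1/2]$ and $q=p$ on $[0,\beta(1)/C)$ (with a harmless tie at $b=\beta(1)/C$), i.e.\ $t_M=\beta(1)/C$. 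If instead $\beta(1)\ge C(1-p)$, then $C\min(b,1-p)\le C(1-p)\le\beta(1)$ for all $b\le 1/2$, so $q=p$ is always optimal and the region using $q=1$ is empty, i.e.\ $t_M=1/2$. Reflecting via the first step gives the stated $\pi^0_A$.

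The hard part is getting the threshold exactly right: one must notice that the relevant comparison is against the \emph{largest zero-cost precision} $q=p$ rather than the naive $q=0.5$, because for $b\in(1-p,p)$ the baseline precision already caps the error probability at $1-p$, so the marginal value of full precision over not investing is at most $C(1-p)$ — and this is precisely what yields the dichotomy between $\beta(1)<C(1-p)$ and $\beta(1)\ge C(1-p)$ and the two regimes for $t_M$. The remaining pieces — the monotonicity/convexity observations and the case split on $\min(b,1-p)$ — are routine. A minor point worth flagging is non-uniqueness: when $b\in[0,1-p]$ every $q\in[0.5,p]$ is optimal, so the displayed $\pi^0_A$ is one valid optimal selection rather than the unique one.
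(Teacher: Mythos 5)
Your proof is correct and follows essentially the same route as the paper's: both reduce the one-shot problem to the two candidates $q\in\{p,1\}$ (monotonicity of $r_A(b,\cdot)$ below the informative threshold, convexity of $-\beta(q)-C\min(b,1-b,1-q)$ above it) and then obtain $t_M$ from the comparison $\beta(1)\lessgtr C\min(b,1-b,1-p)$. Your explicit symmetry reduction and the remarks on ties and non-uniqueness of the optimal selection are slightly more careful than the paper's version but do not constitute a different argument.
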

Thus the myopic optimal policy takes a threshold form: if the public belief is sufficiently strong, the planner chooses the baseline precision $p$, which incurs $0$ cost. If public belief is weak, then he provides a perfect signal, i.e., precision $1$. The threshold value depends only upon the costs of the perfect signal ($\beta(1)$) and an incorrect action ($C$). When $\beta(1) \geq C$, then the perfect signal is overly expensive relative to the cost of an incorrect action and never applied. Thus, $t_M = 0.5$, and the interval of public belief corresponding to myopic optimal precision of $1$ is empty. 

We now present a fundamental result for the altruistic optimal value function: 

\begin{thm}
\label{alt_convexity}
    $V^*_A(\cdot)$ is convex with respect to public belief.
\end{thm}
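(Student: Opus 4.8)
The plan is to prove convexity by value iteration. Starting from the convex function $V_0\equiv 0$, set $V_{n+1}=\mathcal{T}V_n$, where $\mathcal{T}$ is the Bellman operator
\[
(\mathcal{T}V)(b)=\sup_{q\in[0.5,1]}\Big[\,r_A(b,q)+\delta\,\E\big[\,V(b_{i+1})\mid b_i=b,\ q_i=q\,\big]\,\Big].
\]
Because $\delta<1$ and the instantaneous rewards are bounded, $\mathcal{T}$ is a contraction and $V_n\to V^*_A$ pointwise; since a pointwise limit of convex functions on $[0,1]$ is convex, it suffices to show that $\mathcal{T}$ maps convex functions to convex functions. I would first make the transition explicit: when agent $i$'s action reveals her signal, i.e.\ $1-q\le b\le q$, the next public belief equals the Bayesian posterior, which takes value $b_G(b,q)$ with probability $\mu_G(b,q):=\P(s_i=G\mid b_i=b)$ and $b_B(b,q)$ with probability $\mu_B(b,q)$; for $b\notin[1-q,q]$ a cascade occurs and $b_{i+1}=b$. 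Denote by
\[
\Psi_q(b):=\mu_G(b,q)\,V\!\big(b_G(b,q)\big)+\mu_B(b,q)\,V\!\big(b_B(b,q)\big)
\]
the expected continuation value under a precision-$q$ signal at prior $b$, which is well defined for every $b\in[0,1]$.

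The key structural step is that, for each fixed $q$, $b\mapsto\Psi_q(b)$ is convex on $[0,1]$. This is a consequence of the martingale structure of Bayesian beliefs: $\mu_s(b,q)\,b_s(b,q)=\P(s_i=s,\ \omega=G\mid b_i=b)=b\,\P(s_i=s\mid\omega=G)$ and $\mu_s(b,q)=\P(s_i=s\mid b_i=b)$ are both \emph{affine} in $b$, so each summand $\mu_s(b,q)V(b_s(b,q))$ equals the perspective function $(x,t)\mapsto t\,V(x/t)$ of the convex $V$ evaluated at an affine image of $b$. Perspectives of convex functions are jointly convex, and convexity is preserved by affine precomposition and by summation; the degenerate cases $\mu_s(b,q)=0$ at the endpoints are dispatched by a limiting argument.

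The delicate point is that the per-$q$ objective $b\mapsto r_A(b,q)+\delta\,\E[V(b_{i+1})\mid b,q]$ is \emph{not} itself convex — at $b=q$ and $b=1-q$ it drops discontinuously, since by Jensen the revealing continuation $\Psi_q$ exceeds the cascade continuation $V$ — and moreover the feasible control set $\{q:1-q\le b\le q\}$ varies with the state, so one cannot conclude directly that $\mathcal{T}V$ is convex as a supremum over $q$. To overcome this I would use the altruistic cost structure $\beta\equiv 0$ on $[0.5,1)$. In the cascade region $\min(b,1-b,1-q)=\min(b,1-b)$, so the one-step reward is $-C\min(b,1-b)$ there, independent of $q$; in the revealing region $\min(b,1-b,1-q)=1-q$, so the one-step reward is $-\beta(q)-C(1-q)$ there, independent of $b$. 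Since a more precise binary symmetric signal is Blackwell-more-informative, $q\mapsto\Psi_q(b)$ is nondecreasing, and since increasing $q$ up to $\max(b,1-b)$ costs nothing, the value $-\beta(q)-C(1-q)+\delta\Psi_q(b)$ is, for $q<\max(b,1-b)$, dominated by its value at $q=\max(b,1-b)$. Extending the supremum over all of $[0.5,1]$ therefore does not change it, and
\[
(\mathcal{T}V)(b)=\max\Big\{\,-C\min(b,1-b)+\delta\,V(b)\ ,\ \sup_{q\in[0.5,1]}\big[-\beta(q)-C(1-q)+\delta\,\Psi_q(b)\big]\,\Big\}.
\]
The first term is a sum of two convex functions of $b$; the second is a pointwise supremum over a \emph{fixed} index set of functions that are convex in $b$ by the key step; hence $\mathcal{T}V$, being the maximum of two convex functions, is convex. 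By induction every $V_n$ is convex, and therefore so is $V^*_A=\lim_n V_n$.

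The step I expect to be the main obstacle is the one just described: reconciling the state-dependent feasible set with the discontinuity of the per-$q$ value at the cascade/revealing boundary. The two facts that make it go through are that the instantaneous reward stops depending on $b$ once we are in the revealing region, and that the altruistic cost is zero below full precision — a feature specific to the altruistic planner's cost.
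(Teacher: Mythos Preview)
Your value-iteration/perspective-function approach is sound and genuinely different from the paper's direct tree-level argument, but there is a gap stemming from a misreading of the cost. The paper's ``$\beta(p)=0$, $p\in[0.5,1)$'' means the cost vanishes at the \emph{baseline precision} $p$ (which lies in $[0.5,1)$), not on the whole interval; by monotonicity and non-negativity $\beta\equiv 0$ on $[0.5,p]$ but $\beta>0$ on $(p,1]$ (Theorem~\ref{myopic_alt_policy}'s threshold $t_M=\beta(1)/C$ would be vacuous otherwise). Consequently your domination step --- that for $q<\max(b,1-b)$ the extended objective $-\beta(q)-C(1-q)+\delta\Psi_q(b)$ is bounded by its value at $q'=\max(b,1-b)$ because ``increasing $q$ up to $\max(b,1-b)$ costs nothing'' --- fails whenever $\max(b,1-b)>p$, i.e.\ for $b\in[0,1-p)\cup(p,1]$. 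On that range $\beta(q')>\beta(q)$ can outweigh the gains in $-C(1-q)$ and $\delta\Psi_q$, and your decomposition $(\mathcal{T}V)(b)=\max\{A(b),B(b)\}$ is unjustified.

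The approach is salvageable, using the concavity of $\beta$ rather than a vanishing assumption. The same perspective argument (now with $b$ fixed and $q$ varying, since $\mu_s$ and $\mu_s b_s$ are also affine in $q$) shows $q\mapsto\Psi_q(b)$ is convex; combined with convexity of $-\beta$, the whole map $q\mapsto g(b,q):=-\beta(q)-C(1-q)+\delta\Psi_q(b)$ is convex on $[0.5,1]$, so $B(b)=\max\{g(b,0.5),g(b,1)\}$. Now $g(b,0.5)=-\tfrac{C}{2}+\delta V(b)\le -C\min(b,1-b)+\delta V(b)=A(b)$ (since $\Psi_{0.5}(b)=V(b)$ and $\beta(0.5)=0$), while $g(b,1)$ is the actual value at the always-revealing precision $q=1$, hence $\le(\mathcal{T}V)(b)$. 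This restores $B(b)\le(\mathcal{T}V)(b)$ and therefore the decomposition. By contrast, the paper fixes $x_0=tm_0+(1-t)n_0$, mimics the optimal policy at $x_0$ by a non-Markovian policy at $m_0,n_0$, and verifies the convexity inequality level-by-level in the decision tree via the Bayesian-martingale identity; that route is longer but does not use concavity of $\beta$.
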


The proof of Theorem~\ref{alt_convexity} (Appendix~\ref{sec:app_alt_convexity}) is quite involved and may be of independent interest. 
The challenge is rooted in the dependence of agents' actions on the public belief. 
In contrast, if the actions did not depend on the public belief process (e.g., as in \citet{nyarkoConvexityValueFunction1994}), the expected utility is linear function of the belief state, and the convexity of the value function then directly follows. 

The convexity of the value function is instrumental in characterizing the optimal policy.

\begin{thm}
\label{alt_policy}
There exist $d_A,t_A$ such that $0<d_A\leq t_A\leq t_M\leq 0.5$ and
    \begin{align*}
        \pi^*_A(b)=\begin{cases}
            p&b\in[0,d_A)\cup(1-d_A,1]\\
            1&b\in(t_A,1-t_A)\\
            \max(b,1-b)&\text{o.w.}
        \end{cases}
    \end{align*}
    Furthermore, if $t_M<0.5$, then $d_A<t_M$. See \ref{sec:app_alt_policy} for the proof and \figref{fig:policy_comp} for an example.
\end{thm}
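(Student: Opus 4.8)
The plan is to run the Bellman optimality equation through the convexity of $V^*_A$ (Theorem~\ref{alt_convexity}), first using the symmetry $V^*_A(b)=V^*_A(1-b)$ — immediate from the invariance of the primitives under $G\leftrightarrow B$ — so that it suffices to describe $\pi^*_A$ on $[0,1/2]$, where $\max(b,1-b)=1-b$. Since lowering precision below the baseline never helps, the action set is effectively $[p,1]$; for $b\le 1/2$ the condition ``$1-q\le b\le q$'' becomes $q\ge 1-b$, so precision $q$ \emph{reveals} agent~$i$'s signal when $q\ge 1-b$ and otherwise produces a \emph{cascade} (the public belief stays at $b$). Hence a cascade option exists only when $p<1-b$ (call this \emph{Regime~A}; the complement \emph{Regime~B} is $b\ge 1-p$), with best value $-Cb+\delta V^*_A(b)$ at $q=p$. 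Writing $P_q$ for the (two-point, Bayes‑plausible) law of $b_{i+1}$ when $q\ge 1-b$ and $\psi(v):=-C\min(v,1-v)+\delta V^*_A(v)$, the reveal branch contributes $\E_{P_q}[\psi(v)]-\beta(q)$, and
\[
V^*_A(b)=\max\Bigl\{\,-Cb+\delta V^*_A(b)\ \text{(Regime A only)},\ \ \sup_{q\in[\max(p,1-b),1]}\bigl(\E_{P_q}[\psi(v)]-\beta(q)\bigr)\Bigr\}.
\]
Two facts organize the argument: $V^*_A(0)=V^*_A(1)=0$ (an extreme belief entails no error and no investment), so $\psi(0)=\psi(1)=0$ and $q=1$ has the \emph{constant} value $-\beta(1)$; and cascading once freezes the state, so the cascade option is optimal exactly where $V^*_A(b)=-Cb/(1-\delta)$.

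The central lemma is that $q\mapsto\E_{P_q}[\psi(v)]$ is \emph{convex} on $[1/2,1]$. By Theorem~\ref{alt_convexity} and concavity of $v\mapsto\min(v,1-v)$, $\psi$ is convex, hence decomposes as an affine function plus a non‑negative mixture of hockey sticks $v\mapsto(v-t)^+$ and $v\mapsto(t-v)^+$. For a binary symmetric signal the upper posterior $v_+(q)$ increases and the lower posterior $v_-(q)$ decreases in $q$, and $\rho_+(q)\bigl(v_+(q)-t\bigr)=[\,b(1-t)+t(1-b)\,]\,q-t(1-b)$ is affine with positive slope; combined with the fact that $\E_{P_q}[(v-t)^+]$ equals $0$, or $b-t$, or that affine function according to the position of $t$ relative to $v_\pm(q)$, and that these pieces match continuously, each $\E_{P_q}[(v-t)^+]$ is ``constant then increasing‑affine'', hence convex in $q$; the mirror substitution $v\mapsto1-v$ handles $(t-v)^+$, and the affine part of $\psi$ contributes a constant. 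Integrating over the mixture gives convexity of $\E_{P_q}[\psi]$. Consequently $\E_{P_q}[\psi]-\beta(q)$ is convex on $[\max(p,1-b),1]$ ($\beta$ concave), so its supremum there is attained at $q=\max(p,1-b)$ or $q=1$: the optimal precision lies in $\{1-b,1\}$ in Regime~A (plus the cascade choice $q=p$) and in $\{p,1\}$ in Regime~B. I expect this convexity‑in‑precision lemma, together with the bookkeeping around the baseline $p$, to be the load‑bearing part.

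With the candidate set pinned down, the three modes follow from monotonicity. First, $h(b):=V^*_A(b)+Cb/(1-\delta)$ is convex, non‑negative, and vanishes at $b=0$, so $\{h=0\}$ is an interval $[0,d_A]$; the cascade mode is optimal exactly on $[0,d_A]$, and since a cascade exists only in Regime~A we get $d_A\le 1-p$. Here $d_A>0$ because at $b=0$ the only reveal precision is $q=1$, worth $-\beta(1)<0=V^*_A(0)$, so the cascade strictly wins near $0$ by continuity of $V^*_A$. Second, $q=1$ is optimal iff $V^*_A(b)=-\beta(1)$; since $V^*_A$ is convex and symmetric it is non‑increasing on $[0,1/2]$, so $\{V^*_A=-\beta(1)\}$ is an interval $[t_A,1/2]$. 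For $b>t_M$, Theorem~\ref{myopic_alt_policy} makes $q=1$ the myopic maximizer, and raising $q$ only weakly raises the continuation (which is $\le 0$ for every choice and $=0$ at $q=1$), so $q=1$ is dynamically optimal there, giving $t_A\le t_M$. Third, if $d_A>t_A$ then $V^*_A$ would equal both the non‑constant affine $-Cb/(1-\delta)$ and the constant $-\beta(1)$ on the nondegenerate interval $[t_A,d_A]$ — impossible — so $d_A\le t_A$. On $(d_A,t_A)$ neither the cascade nor $q=1$ is optimal, so the remaining candidate is $q=1-b$ in Regime~A and $q=p$ in Regime~B; a further comparison of $q=p$ versus $q=1$ in Regime~B forces $t_A\le 1-p$, so $(d_A,t_A)$ lies in Regime~A and the optimal precision there is $q=\max(b,1-b)=1-b$. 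Finally, when $t_M<1/2$ one has $t_M=\beta(1)/C$, so the cascade value at $b=t_M$ is $-\beta(1)/(1-\delta)<-\beta(1)\le V^*_A(t_M)$; hence the cascade mode is not optimal at $t_M$, i.e.\ $d_A<t_M$. Reflecting through $b\mapsto1-b$ yields the stated policy on $[0,1]$. The step I anticipate as the main obstacle is the convexity‑in‑precision lemma (and, secondarily, the Regime~B comparison needed to place the partial‑investment interval correctly relative to $p$).
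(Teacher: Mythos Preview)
Your proposal reaches the same three-candidate reduction ($p$, $\max(b,1-b)$, $1$) as the paper, but via a different and arguably more robust argument for the load-bearing convexity-in-precision step. The paper takes second derivatives of the Bellman objective in $q$ and reads off convexity from the sign of $\ddot V^*_A$, which tacitly assumes $V^*_A$ is twice differentiable---something Theorem~\ref{alt_convexity} does not supply. Your hockey-stick decomposition of $\psi$ sidesteps this: the explicit computation $\rho_+(q)\bigl(v_+(q)-t\bigr)=[b(1-t)+t(1-b)]\,q-t(1-b)$ together with the monotone-posterior observation shows each $q\mapsto\E_{P_q}[(v-t)^+]$ is piecewise affine (constant, then increasing) in $q$, hence convex, and the mixture representation lifts this to all convex $\psi$ without any smoothness. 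Your threshold derivation also differs: where the paper invokes the auxiliary bound $\pi^0_A\le\pi^*_A$ (Lemma~\ref{myopic_bound_alt}) to place $d_A$ and $t_A$, you read them off directly as the level sets $\{V^*_A(b)=-Cb/(1-\delta)\}$ and $\{V^*_A=-\beta(1)\}$, using only convexity and the monotonicity of $V^*_A$ on $[0,\tfrac12]$; this makes both the ordering $d_A\le t_A$ and the strict inequality $d_A<t_M$ (when $t_M<\tfrac12$) especially transparent. One small caveat: your Regime~B comparison forcing $t_A\le 1-p$ uses $C(1-p)>\beta(1)$, i.e., $t_M<\tfrac12$; the degenerate case $t_M=\tfrac12$ requires the convention $t_A=\tfrac12$ so that the $q=1$ region collapses, but the paper's proof is equally silent on this edge.
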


The optimal policy has three distinct phases with respect to public belief. First, as in the myopic optimal, the overall optimum policy does not invest in signal precision for extreme values of the public belief. Notably, the overall optimum requires a stronger public belief than the myopic optimal for this to happen since, unlike the former, the latter does not weigh the effect on future agents.

When public belief is close to 0.5 and contains very little information, the overall optimum selects signal precision $1$ if it is not cost-prohibitive. In such a case, (from \eqref{actions}) the agent's action equals the true state of the world with probability $1$. Thus, the public belief collapses to either $0$ or~$1$.

In the remaining case, the overall optimum chooses the minimum precision $\max(b,1-b)$ such that the agent's action will reflect her private signal (refer to \eqref{actions}). For any precision lower, the agent's action carries no information beyond what other agents already know. Put differently, this is the lowest-cost precision for social learning through observation of the actions of peers.

\section{Optimal Biased Policies}
\label{sec:biased_results}

We will begin with the myopic optimal policy as we did for the altruistic planner in Section~\ref{sec:altruistic_results}. Similar to Section~\ref{sec:altruistic_results}, the myopic biased optimal policy, denoted as $\pi^0_B(\cdot)$, is the arg supremum of the instantaneous reward $r_B(b,q)$.
We now characterize $\pi^0_B(\cdot)$, noting that an optimal policy does not always exist in the biased case \citep[sec.~2.3.1]{PUTERMAN1990331}. 
When necessary we instead pursue $\epsilon$-optimal policies denoted with $\epsilon$ superscript. 
\begin{thm}
\label{myopic_biased_policy}
There exist $t_1,\dots,t_5\in(0,p)$, $t_1<1-p\leq t_2\leq t_3<0.5\leq t_4\leq t_5< p$ so that: 
\begin{enumerate}
        \item[(A)] If $b\in[0, t_1]\cup(1-p,t_2)\cup[t_3,t_4]\cup(p,1]$, then $\pi^0_B(b)=p$.
        \item[(B)] If $b\in(t_1,1-p]\cup[t_2,t_3)$, then $\pi^0_B(b)=1-b$.
        \item[(C)] If $b\in(t_4,t_5)$, then $\pi^0_B(b)=1$.
        \item[(D)] If $b\in[t_5,p]$, then $\pi^0_B(b)$ does not exist, and $\pi^{\epsilon,0}_B(b)=b-\epsilon$ for sufficiently small $\epsilon>0$.
    \end{enumerate}
\end{thm}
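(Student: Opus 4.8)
The myopic biased planner solves, separately for each public belief $b$, $\sup_{q\in[0.5,1]}r_B(b,q)$ with $r_B(b,q)=-\beta(|q-p|)-C\,\mathbb{P}(a_i=B\mid b,q)$; equivalently it minimizes $h_b(q):=\beta(|q-p|)+C\,\mathbb{P}(a_i=B\mid b,q)$. The plan is: first write $\mathbb{P}(a_i=B\mid b,q)$ in closed form from \eqref{actions}; then reduce the minimization to a short list of candidate precisions; finally order the thresholds by comparing the resulting elementary value curves. For the first step, substituting \eqref{actions} and $\mathbb{P}(s_i=B\mid b,q)=q+b-2bq$: when $b>1/2$, $\mathbb{P}(a_i=B\mid b,q)=0$ on $[1/2,b)$ and $=q+b-2bq$ (decreasing in $q$, from $2b(1-b)$ down to $1-b$) on $[b,1]$; when $b<1/2$, $\mathbb{P}(a_i=B\mid b,q)=1$ on $[1/2,1-b)$ and $=q+b-2bq$ (increasing in $q$, from $1-2b+2b^2$ up to $1-b$) on $[1-b,1]$; when $b=1/2$ it is constantly $1/2$, so $q=p$ is trivially optimal.

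For the reduction step, on each piece $\mathbb{P}(a_i=B\mid b,\cdot)$ is affine and $\beta(|\cdot-p|)$ is concave on either side of $p$, so $h_b$ is piecewise monotone or concave; hence its infimum over $[0.5,1]$ is attained — except in one case — at a member of $\{\,p,\ 1,\ 1-b\ (\text{if }b\le 1/2),\ \text{or the limit }q\to b^-\ (\text{if }1/2<b\le p)\,\}$. I would verify this piece by piece: for $b\le 1/2$, on $[1-b,1]$ the minimizer of $h_b$ is $1-b$ or $p$ (the piece right of $p$ is increasing, the piece between $1-b$ and $p$ is concave), and $q<1-b$ never beats $q=p$ (if $b<1-p$) or $q=1-b$ (if $b\ge 1-p$); for $1/2<b\le p$, on $[p,1]$ the map $q\mapsto h_b(q)$ is concave so its minimizer is $p$ or $1$, on $[b,p]$ it is strictly decreasing toward $p$, and on $[1/2,b)$ the mistake probability vanishes, so the cost infimum $\beta(p-b)$ on that piece is approached as $q\to b^-$ but never attained (at $q=b$ the probability jumps up to $2b(1-b)$). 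Writing
\[
V_p(b)=C\!\cdot\!\begin{cases}1,& b<1-p,\\ b+p-2bp,& 1-p\le b\le p,\\ 0,& b>p,\end{cases}\quad
V_1(b)=\beta(1-p)+C(1-b),\quad
V_{1-b}(b)=\beta(|1-b-p|)+C(1-2b+2b^2),
\]
and $V_{b^-}(b)=\beta(p-b)$ on $(1/2,p]$, the planner picks the minimizer among the relevant subset; no optimal policy exists exactly when $V_{b^-}$ strictly beats the others.

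For the threshold step, I would compare these explicit curves as $b$ varies. For $b>p$: $V_p=0$ is the minimum and (as $\beta$ is increasing) $q=p$ is the unique maximizer of $r_B$ — case (A) on $(p,1]$; at $b=p$, $\sup r_B=0$ is not attained and $q=p-\epsilon$ is $\epsilon$-optimal. For $b\le 1/2$: $h_b$ is increasing on $[1-b,1]$, so $V_1\ge V_{1-b}$ and only $V_p,V_{1-b}$ compete; on $[0,1-p)$, $V_{1-b}-V_p=\beta(1-b-p)-2Cb(1-b)$ is strictly decreasing in $b$, positive at $b=0$ and negative as $b\uparrow 1-p$, hence has a unique zero $t_1\in(0,1-p)$ with $q=p$ optimal on $[0,t_1]$ and $q=1-b$ on $(t_1,1-p]$; on $(1-p,1/2)$, with $x:=b-(1-p)\in(0,p-\tfrac12)$, $V_{1-b}-V_p=\beta(x)-2Cx\big((p-\tfrac12)-x\big)$, and using concavity of $\beta$ with $\beta(0)=0$ (so $\beta(x)/x$ is nonincreasing) against the decreasing linear factor $2C((p-\tfrac12)-x)$, the set where this is negative is an interval $(t_2,t_3)\subseteq(1-p,1/2)$ — possibly empty, possibly with $t_2=1-p$ — which yields cases (A) and (B) there. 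For $1/2<b<p$: $V_p,V_1$ are affine and $V_{b^-}$ is concave in $b$ with $V_{b^-}(b)\to 0$ as $b\uparrow p$; using $V_1<V_p\iff b>\tfrac12+\tfrac{\beta(1-p)}{2C(1-p)}$ and concavity of $V_{b^-}$, tracking the smallest of the three curves produces thresholds $1/2\le t_4\le t_5<p$ with $q=p$ optimal on $[t_3,t_4]$, $q=1$ on $(t_4,t_5)$, and on $[t_5,p)$ the supremum $-\beta(p-b)$ not attained, $q=b-\epsilon$ being $\epsilon$-optimal — cases (A), (C), (D). Assembling these, and the ordering $t_1<1-p\le t_2\le t_3<1/2\le t_4\le t_5<p$, completes the proof; coincident thresholds simply make the corresponding region empty, and the isolated point $b=1/2$ (where $q\to b^-$ is infeasible) is covered by case (A).

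The reduction to finitely many candidates, the closed form for the mistake probability, and the non-attainment on $[t_5,p]$ with its approximation by $q=b-\epsilon$ (immediate, since $\mathbb{P}(a_i=B\mid b,q)=0$ precisely on the relatively open set $\{q<b\}$) are all routine. The hard part will be the threshold bookkeeping: showing that the ``$q=1-b$ wins'' set on $(1-p,1/2)$ and the ``$q=1$ wins'' / ``no optimum'' sets on $(1/2,p)$ are each a single interval. This hinges on playing the concavity of $\beta$ against the explicit quadratics $b+p-2bp$ and $1-2b+2b^2$; since a general concave $\beta$ can have non-monotone curvature, obtaining ``one interval'' rather than a finite union may require a mild regularity assumption on $\beta$ (e.g.\ that $\beta'$ be convex), which I expect is implicitly intended.
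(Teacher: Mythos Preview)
Your approach is essentially the paper's: reduce the maximization over $q$ to the candidate set $\{p,1,1-b,b-\epsilon\}$ via piecewise concavity/convexity of $r_B$ in $q$, then compare the resulting value curves over the four sub-intervals of $b$ determined by $1-p,\,1/2,\,p$. The paper packages the reduction step as four short lemmas (one per sub-interval) and then carries out exactly your threshold comparison of $V_p,V_1,V_{1-b},V_{b^-}$.

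Two remarks. First, a minor slip: for $1-p<b\le 1/2$ the function $h_b$ is \emph{not} increasing on all of $[1-b,1]$, since on $[1-b,p]$ the $\beta$-term decreases; your earlier paragraph already gives the correct reasoning (concave on $[1-b,p]$, increasing on $[p,1]$, hence minimum at $1-b$ or $p$), so the conclusion $V_1\ge\min\{V_p,V_{1-b}\}$ still stands and nothing downstream changes.

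Second, on your suspicion that ``one interval'' may require extra regularity of $\beta$ such as convex $\beta'$: the paper does not assume this. For $b\in(1-p,1/2)$ it compares $\beta(p-1+b)$ against the parabola $C(p-1+b)(1-2b)$, splits at the parabola's vertex $b=3/4-p/2$, and argues at most one crossing on each half (on the right half $\beta$ is increasing while the parabola is decreasing; on the left half both are increasing concave with a common zero at $b=1-p$, and the paper asserts at most one further intersection). For $b\in(1/2,p]$ it uses that $r_B(b,b-\epsilon)=-\beta(p-b+\epsilon)$ is convex in $b$, so its differences with the affine $r_B(b,p)$ and $r_B(b,1)$ are convex, and argues from this that the $b-\epsilon$-winning region is a single interval containing $p$; the remaining comparison of the two affine curves on $(1/2,t_5]$ gives $t_4$. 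So rather than positing $\beta'$ convex, you can plug in these two concrete arguments; your $\beta(x)/x$-nonincreasing observation is in the right spirit but, as you note, does not by itself pin down a single interval.
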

Proof in Appendix~\ref{sec:app_myopic_bias_pol}. We now apply Theorem~\ref{myopic_biased_policy} to characterize the optimal biased policy. 
\begin{thm}
\label{biased_policy}
  There exist $t_1,t_2\in[0,p]$ with $t_1\leq1-p\leq0.5< t_2< p$ so that:
    \begin{enumerate}
        \item[(A)] If $b\leq t_1$ or $b>p$, then $\pi^*_B(b)=p$.
        \item[(B)] If $b\in(t_1,1-p]$, then $\pi^*_B(b)\geq p$.
        \item[(C)] If $b\in(1-p,0.5)$, then $\pi^*_B(b)\geq 1-b$
        \item[(D)] If $b\in[0.5,t_2)$ and $\pi^*_B(b)$ exists, then $\pi^*_B(b)\geq b$.
        \item[(E)] If $b\in(t_2,p]$, then $\pi^*_B(b)$ does not exist, and $\pi^{*\epsilon}_B(b)=b-\epsilon$ for sufficiently small $\epsilon>0$.
    \end{enumerate}
\end{thm}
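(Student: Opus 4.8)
The plan is to work directly from the Bellman optimality equation for the biased planner's discounted MDP,
\[
V^*_B(b)=\sup_{q\in[0.5,1]}\Big\{r_B(b,q)+\delta\,\E\big[V^*_B(b_{i+1})\mid b_i=b,\,q_i=q\big]\Big\},
\]
splitting the inner maximization according to which branch of \eqref{actions}--\eqref{public_update} the control $q$ triggers. Set $L(b)=\max(b,1-b)$. There are three regimes: (i) $q<1-b$ (feasible only if $b<0.5$): $a_i=B$ almost surely, $b_{i+1}=b$ absorbing, value $\big(-\beta(|q-p|)-C\big)/(1-\delta)$; (ii) $q<b$ (feasible only if $b>0.5$): $a_i=G$ almost surely, $b_{i+1}=b$ absorbing, value $-\beta(|q-p|)/(1-\delta)$; (iii) $q\ge L(b)$, the ``learning'' regime: $b_{i+1}$ equals $\tilde{b}^{(G)}>b$ on the branch $s_i=G$ or $\tilde{b}^{(B)}<b$ on the branch $s_i=B$, with $\P(a_i=B\mid b,q)=\P(s_i=B\mid b,q)=q+b(1-2q)$. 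I would first record three elementary facts: $r_B\le 0$, hence $V^*_B\le 0$; playing $q\equiv p$ forever gives $V^*_B\ge -C/(1-\delta)$; and for $b>p$ the control $q=p$ is feasible in regime (ii) and absorbing at zero cost, so $V^*_B(b)=0$. The last fact is exactly part (A) for $b>p$.

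Parts (B) and (C) follow from domination within regime (i). When $b<1-p$, every $q<p$ lies in regime (i) and is strictly dominated by $q=p$ (same absorbing transition, same $\P(a_i=B)=1$, strictly smaller cost), so $\pi^*_B(b)\ge p$. When $b\in(1-p,0.5)$ (and at $b=1-p$), every $q<1-b$ lies in regime (i) with value at most $\big(-\beta(p+b-1)-C\big)/(1-\delta)$, whereas $q=1-b$ sits at the left endpoint of regime (iii); a short computation using only $V^*_B\ge-C/(1-\delta)$ and $\P(s_i=B\mid b,1-b)=b^2+(1-b)^2$ shows the value at $q=1-b$ exceeds that bound by a strictly positive amount (the slack is at least $2Cb(1-b)>0$). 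Hence no control below the stated threshold is optimal, and a maximizer exists since regime (iii) is parametrized by the compact set $[L(b),1]$; this gives (C) and the boundary case of (B). The threshold $t_1$ is then the largest $b\le 1-p$ below which even $q=p$ dominates all of regime (iii); bounding the regime-(iii) objective by $V^*_B\le 0$ on the $s_i=G$ branch and by the self-consistent value $V^*_B(\cdot)=-C/(1-\delta)$ on the $s_i=B$ branch (whose argument stays $\le t_1$), this reduces to a one-dimensional comparison that holds for $b$ near $0$ by continuity of $\beta$.

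Parts (D) and (E) turn on regime (ii). Since $\beta$ is strictly increasing, on $[0.5,b)\subseteq[0.5,p)$ the regime-(ii) value $-\beta(p-q)/(1-\delta)$ is strictly increasing in $q$ and attains no maximum: if some $q^*<b$ were optimal then $V^*_B(b)=-\beta(p-q^*)/(1-\delta)$, but for $q''\in(q^*,b)$ the value of ``choose $q''$, then continue optimally'' equals $-\beta(p-q'')+\delta V^*_B(b)>-\beta(p-q^*)+\delta V^*_B(b)=(1-\delta)V^*_B(b)+\delta V^*_B(b)=V^*_B(b)$, a contradiction. So no $q<b$ is ever optimal; whenever $\pi^*_B(b)$ exists it satisfies $\pi^*_B(b)\ge b$, which with the two preceding paragraphs gives (D). For existence, the overall supremum is $\max\!\big(-\beta(p-b)/(1-\delta),\,V^{\mathrm{learn}}(b)\big)$, where $V^{\mathrm{learn}}(b)=\sup_{q\ge L(b)}\{\cdot\}$ is the regime-(iii) value (attained, by compactness) and the first term is a supremum that is never attained; hence $\pi^*_B(b)$ exists iff $V^{\mathrm{learn}}(b)\ge-\beta(p-b)/(1-\delta)$. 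Define $t_2$ as the infimum of $b\in(0.5,p)$ for which this fails. For $b>t_2$ one gets $V^*_B(b)=-\beta(p-b)/(1-\delta)$ with no maximizer, and $q=b-\epsilon$ (feasible once $\epsilon<b-0.5$) is a regime-(ii) control with value $-\beta(p-b+\epsilon)/(1-\delta)\to V^*_B(b)$ by continuity of $\beta$; this is the $\epsilon$-optimal policy of (E). The ordering $t_1\le 1-p\le 0.5<t_2<p$ follows from these definitions, the myopic characterization (Theorem~\ref{myopic_biased_policy}) pinning down the boundary behaviour, and boundary estimates: $t_2<p$ because in regime (iii) $V^{\mathrm{learn}}(b)\le -C\,\P(s_i=B\mid b,q)\le-C(1-b)$ while $-\beta(p-b)/(1-\delta)\to 0$ as $b\to p$; and $t_2>0.5$ because at $b=0.5$ regime (ii) is empty, so $V^*_B(0.5)=V^{\mathrm{learn}}(0.5)$ is attained, and a positive-slack comparison shows that keeping some learning still strictly beats the hypothetical cascade value $-\beta(p-0.5)/(1-\delta)$ just above $b=0.5$.

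The crux — and the step I expect to be the main obstacle — is the quantitative control of the regime-(iii) value
\[
V^{\mathrm{learn}}(b)=\sup_{q\ge L(b)}\Big\{-\beta(|q-p|)-C\big[q+b(1-2q)\big]+\delta\big(\P(s_i=G)\,V^*_B(\tilde{b}^{(G)})+\P(s_i=B)\,V^*_B(\tilde{b}^{(B)})\big)\Big\},
\]
needed to locate $t_1$ and $t_2$ precisely; unlike the altruistic case (Theorem~\ref{alt_convexity}) there is no convexity of $V^*_B$ to lean on. My plan here is a guess-and-verify / monotone-iteration argument on the contractive Bellman operator: bound the $s_i=G$ branch by $V^*_B\le 0$ and the $s_i=B$ branch by the facts already in hand ($V^*_B\ge-C/(1-\delta)$ globally, $V^*_B(\cdot)=-C/(1-\delta)$ on $[0,t_1]$, $V^*_B(\cdot)=-\beta(p-\cdot)/(1-\delta)$ on $(t_2,p]$), verify that the candidate value function so assembled is the Bellman fixed point, and read off $t_1,t_2$ as the crossover points of the resulting scalar comparisons, whose existence in the claimed ranges is guaranteed by continuity together with the boundary estimates above. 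I would also establish, as a convenient auxiliary lemma via the standard stochastic-monotonicity argument for MDPs, that $V^*_B$ is non-decreasing in $b$, which simplifies several of the branch comparisons.
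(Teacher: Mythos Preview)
Your approach mirrors the paper's almost exactly: split the Bellman supremum into the three regimes determined by \eqref{actions}--\eqref{public_update}, eliminate the absorbing regimes by strict domination (regime~(i) for (B)--(C), regime~(ii) for (D)--(E)), and locate $t_1,t_2$ by boundary comparisons near $b=0$ and $b=p$. The monotonicity auxiliary lemma you plan is the paper's Lemma~\ref{biased_mono}, and your estimate $V^{\mathrm{learn}}(b)\le -C(1-b)$ on $(0.5,p]$ is exactly the bound the paper uses to pin down $t_2<p$.

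The substantive gap is your repeated appeal to ``compactness'' for the existence of a maximizer in regime~(iii). That claim needs the regime-(iii) objective to be (upper semi-)continuous in $q$, which in turn requires $V^*_B$ to be continuous in $b$, since $\tilde b^{(G)},\tilde b^{(B)}$ depend continuously on $q$. You never establish this, and monotonicity alone does not suffice. The paper proves it as a separate lemma (Lemma~\ref{biased_val_cts}, via continuity of the one-stage value together with a standard continuity result for value functions), and then explicitly flags in Remark~\ref{continuity_remark} that even with $V^*_B$ continuous in $b$, the \emph{full} Bellman objective is still discontinuous in $q$ at $q=\max(b,1-b)$---that jump in the transition kernel is precisely the source of part~(E). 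Without the continuity lemma, your existence assertions in (C) and (D) are unjustified, and your equivalence ``$\pi^*_B(b)$ exists iff $V^{\mathrm{learn}}(b)\ge -\beta(p-b)/(1-\delta)$'' breaks down, since the sup defining $V^{\mathrm{learn}}(b)$ need not be attained. Your guess-and-verify plan could in principle sidestep this by exhibiting an explicit Bellman fixed point, but that is considerably harder than proving the continuity lemma directly; the paper's route is the efficient one here.
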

Theorem~\ref{biased_policy} (proven in Appendix~\ref{sec:app_bias_pol_proof} and depicted in~\figref{fig:policy_comp}) shows five possible phases.

When public belief is sufficiently strong (i.e., (A)), the cost required to steer the system may be too great, despite potential the negative consequences for the planner's utility if $b<0.5$. In this case, the chosen precision $p$ is less than $\max(b,1-b)$. Therefore, from \eqref{actions}, the agent will act in accordance with the public belief regardless of her private signal. Since this action is uninformative, it does not change the public belief, and, because the policy is Markovian, this process repeats ad infinitum. This corresponds to an unfavorable cascade if $b<0.5$ and a favorable cascade otherwise.

When public belief is close to an unfavorable cascade, as in (B), the planner increases signal precision so that it is high enough to affect the agent's action despite the fact that, in expectation, the resulting signal will be $B$. Thus, the agent will act in accordance with her private signal, which has some non-zero chance of leading to a favorable action for the planner. Essentially, the planner invests in a last-ditch effort to steer away from the unfavorable cascade.

For belief slightly higher (i.e., (C)), the planner may decrease precision below $p$. In these ranges, $b<0.5$; therefore, more precise signals are more likely to yield unfavorable news. Thus, the planner maintains a precision strong enough to influence the agent's action ($q\geq1-b$) in the hopes of moving to a more favorable public belief but will do so with the least precise signal possible.

When public belief weakly favors the planner's desired action (i.e., (D)), the planner adopts precision at least $p$. Since $b>0.5$, an increase in precision makes the agent more likely to infer that $\omega = G$. Investment in this regime is the planner's attempt to bolster public belief.

When public belief is still higher (i.e., (E)), the planner decreases signal precision just below $\max(b,1-b)$ to $b-\epsilon$. Thus, agents ignore private signals and take action $G$. Here, the risk of a private signal overturning the favorable public belief outweighs both the cost of decreasing precision and the potential for public belief to increase further.


\section{Evaluation via LLM-based Simulation}
\label{sec:evaluation}

We now utilize LLMs to empirically study the implications and applicability of our theoretical results and model. 
Our evaluation proceeds in three steps: we first analyze the behavior of LLM agents to identify key deviations from Bayesian rationality, then examine how an LLM planner adapts its strategy in response, and finally evaluate the resulting impact on social welfare.

To simulate the controlled social learning setting, we operationalize our model in a scenario where agents decide whether to buy a new car. As shown in \figref{fig:exp_block_diagram}, we employ LLMs in three roles. We describe each role below with corresponding prompts in Appendix~\ref{sec:app_prompts}:

\textbf{Agent:} Role-playing as an assigned profile, agents observe previous actions and a private message about the car of known precision, form their belief about the car's quality, and decide to buy or not.\\\textbf{Planner:} After observing the history of actions, the planner selects the precision $q_i$ of agent $i$'s private signal according to their objective\\\textbf{Oracle:} Given the choice of precision by the planner, an agent profile, and a fact sheet about the car (see Appendix~\ref{sec:app_parameters}), the oracle generates a private signal of desired precision tailored to an agent.
\begin{figure}[!t]
    \centering
    \begin{subfigure}[t]{0.5\textwidth}
        \centering
        \includegraphics[width=0.9\linewidth]{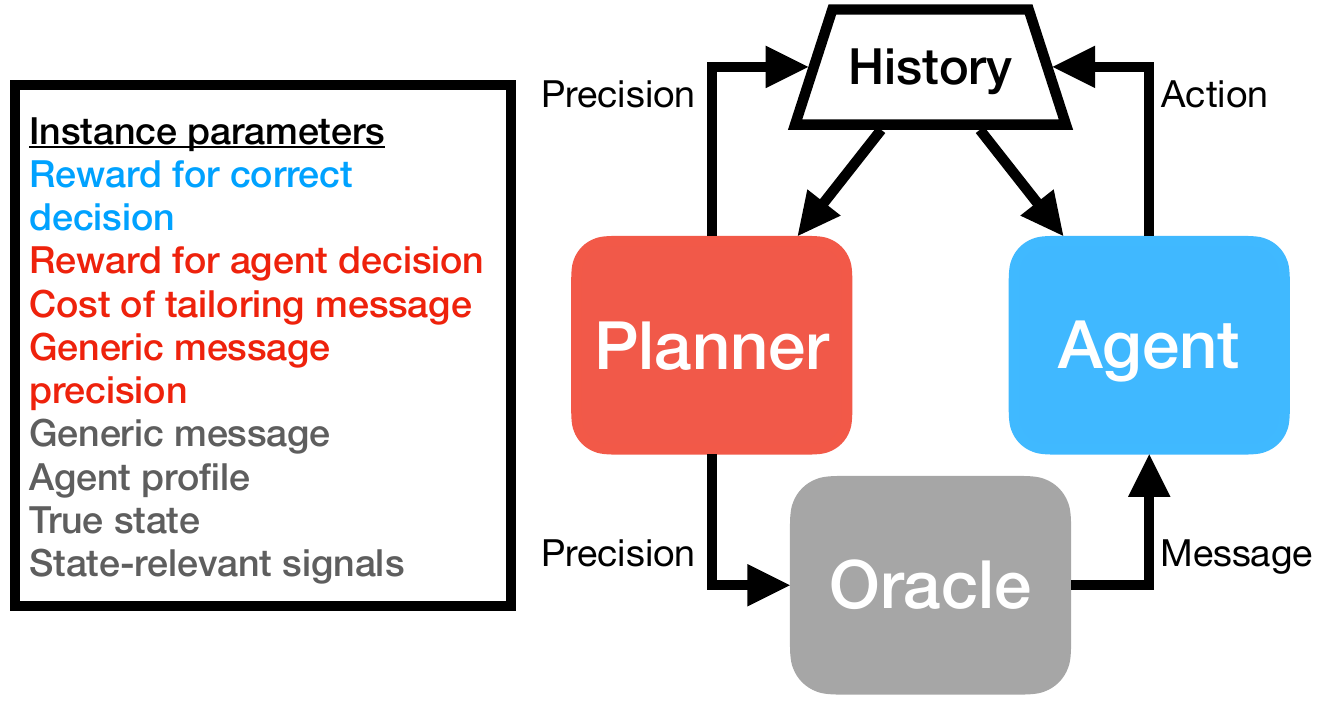}
        \caption{}
        \label{fig:exp_block_diagram}
    \end{subfigure}
    \begin{subfigure}[t]{0.45\textwidth} 
        \centering
        \includegraphics[width=0.75\linewidth]{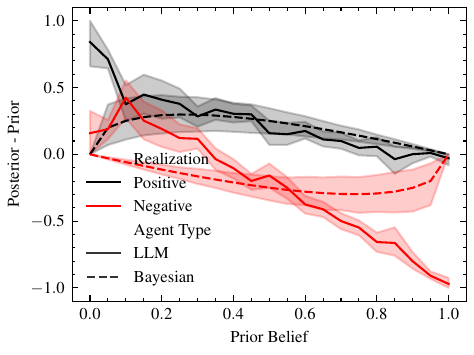}
        \caption{}
        \label{fig:belief_deviation}
    \end{subfigure}
    \caption{(a) A system diagram for our experiments. 
    The instance parameters detail are color-coded according to which of the LLM roles uses them. 
    (b) Here we show the change in LLM (solid) and Bayesian (dashed) agents' beliefs for a given prior after a positive (black) or negative (red) signal.}
    \label{fig:exp_setup}
\end{figure}
We compare the outcomes of these LLM-based simulations with numerical evaluations of the planner's MDP. In both LLM and numerical experiments, we assume a linear cost function, $\beta(q)=k|q-p|$, and vary $k$, baseline precision $p$, and discount factor $\delta$, with the cost of an incorrect action $C$ fixed at $1$. See Appendix~\ref{sec:appendix_llm_exp} for further detail on the experimental setup and prompting. In Appendix~\ref{sec:appendix_belief_validation}, we validate both the beliefs and the performance of the oracle.

Our experiments reveal the following:\begin{enumerate}
    \item\textbf{The importance and impact of social learning control:} Planners (both analytical and LLM) can dramatically improve their own utility by accounting for and capitalizing on social learning dynamics. This may help or harm social welfare contingent upon the alignment between planner and agent objectives.
    
    \item\textbf{The robustness of our analytical characterization:} LLM planners choose policies surprisingly similar to the non-obvious analytically optimal policies despite facing non-Bayesian agents in other LLMs. 
    
    \item\textbf{The emergent strategic behavior of LLM:} LLM planners exhibit sophisticated emergent strategic behavior in their choice of policies. Not only are LLM policies close to the optimal policies, they also seem to deviate in ways which account for the non-Bayesian nature of the LLM agents they face. 
\end{enumerate}
\begin{figure}[!t]
    \centering
    \begin{subfigure}[T]{0.3\textwidth}
        \centering
        \includegraphics[width=\linewidth]{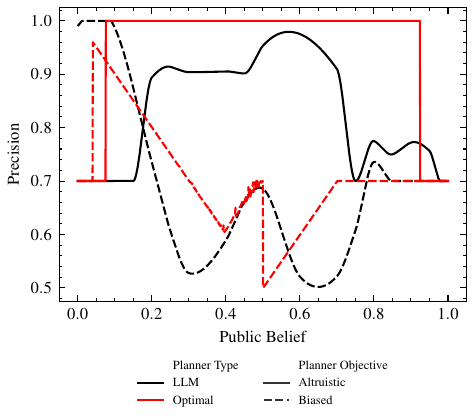}
        \caption{}
        \label{fig:policy_comp}
    \end{subfigure}
    \begin{subfigure}[T]{0.3\textwidth} 
        \centering
        \includegraphics[width=\linewidth]{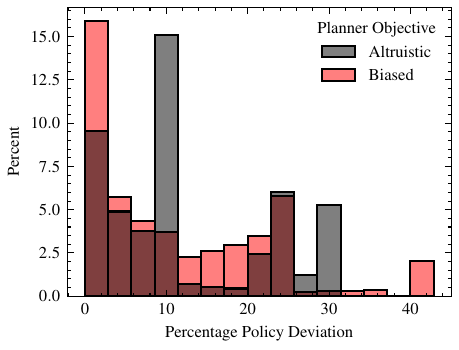}
        \vspace{0.1cm}
        \caption{}
        \label{fig:pol_dev_hist}
    \end{subfigure}
    \begin{subfigure}[T]{0.3\textwidth} 
        \centering
        \includegraphics[width=\linewidth]{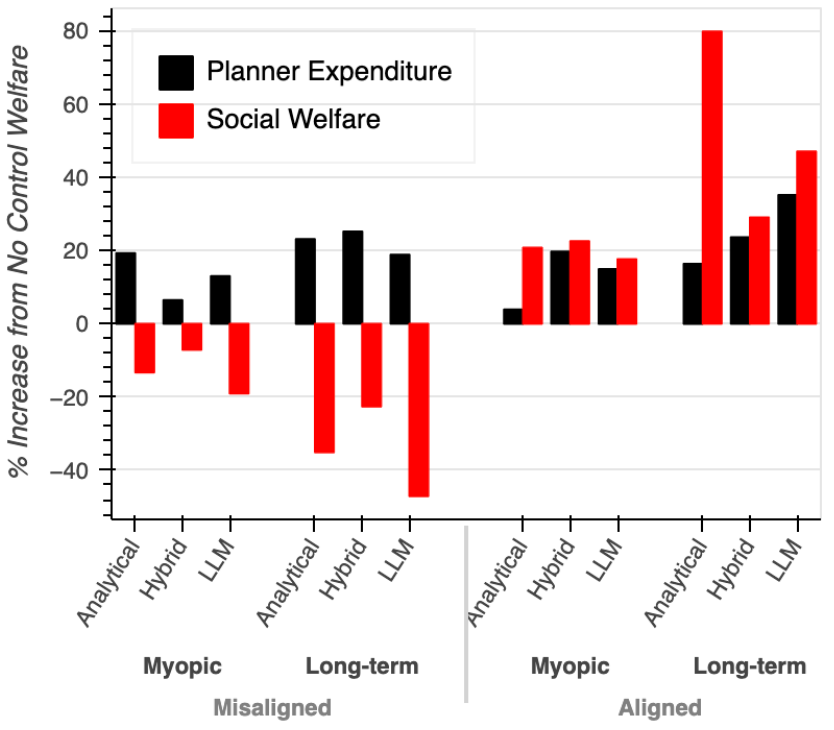}
        \caption{}
        \label{fig:welfare_comp}
    \end{subfigure}
    \caption{(a) Example policies from the LLM planner (black) and the analytically optimal planner (red) in altruistic (solid) and biased (dashed) settings. 
    (b) A histogram showing the distribution of the percentage deviation between the LLM and optimal policies. 
    (c) Planner expenditure and social welfare change as a percent of the no-control baseline welfare. The true state was fixed to B. The left half shows a biased planner seeking action G, and the right shows an altruistic planner.}
    \label{fig:exp_results}
\end{figure}
\subsection{LLMs as Non-Bayesian Agents}
\label{sec:llm_behaviour}

While our analytical results assume Bayesian agents, it is well-documented that both humans and LLMs exhibit systematic cognitive biases \citep{rupprecht_prompt_2025}. To first isolate LLM cognitive biases, we test how an LLM belief updates in response to new information. 
These results, shown in \figref{fig:belief_deviation}, reveal non-Bayesian patterns in LLM belief updating (see also Appendix~\ref{appendix_llm_trajectories}):
\begin{enumerate}[label=(NB\arabic*)]
    \item LLM agents underreact to private signals which align with their prior beliefs. \label{claim1}
    \item LLM agents overreact to private signals which run counter to their prior beliefs. \label{claim2}
    \item As a result, LLM agents require a stronger public belief to enter an information cascade. \label{claim3}
\end{enumerate}
These specific cognitive biases have also been observed empirically in human studies (e.g., \cite{ba_over-_2022,chan_prior_2025}). Our results in subsequent sections show that the LLM planner's policy is robust to non-Bayesian agents of this kind.
\subsection{Validation of Planner Policy Structure and the Role of Social Learning}
\label{sec:policy_validation} 

We now compare the policies of LLM planners with the optimal policies derived in Sections~\ref{sec:altruistic_results} and \ref{sec:biased_results}. As shown in \figref{fig:policy_comp}, despite the deviations in agent behavior, the emergent strategies of the LLM planner show remarkable structural similarity to the theoretical optimum. This shows that, even when the optimal policy is highly non-trivial (e.g., the biased policy in \figref{fig:policy_comp}), the LLM planner exhibits sophisticated strategic behavior which accounts for and capitalizes upon social learning.

In the altruistic case, both planners cease investment when public belief is strong and invest heavily when it is uncertain. In the biased case, both planners exhibit the same qualitative trends: high investment to escape an unfavorable belief, reduced precision near the midpoint, and no investment once a favorable belief is established. The magnitude of the policy deviation is often modest as shown in \figref{fig:pol_dev_hist}. For both altruistic and biased planners, the deviation is less than 10\% for the majority of belief states, underscoring the broad structural alignment between the emergent LLM strategy and our theoretical characterization.

However, there are notable structural differences, which are best understood as the planner's strategic adaptations to the specific non-Bayesian behaviors identified in Section~\ref{sec:llm_behaviour}.
(1) The LLM planner tends to avoid extreme precisions (0.5 or 1.0), consistent with a known central tendency bias \citep{rupprecht_prompt_2025}. 
(2) The LLM planner's policy shows a more gradual tapering of investment rather than a sharp cutoff. This is a direct response to the agents' resistance to cascades \ref{claim3}; the planner learns that it is never entirely "safe" to stop investing, as even agents with strong priors can be swayed by a countervailing private signal. 
(3) Similarly, the biased LLM planner continues to invest at very low beliefs. This reflects an understanding that its agents might overreact \ref{claim2} to a surprisingly positive signal, making a "last-ditch effort" more viable than in the Bayesian case.
\subsection{Welfare Implications of Altruism, Bias, and Alignment}
\label{sec:social_welfare}
We define social welfare $W^{\pi}(b)$ as the total expected discounted utility of all agents. As better information never harms an agent's expected utility, social welfare is monotonic in signal precision (see Appendix~\ref{sec:app_welfare_mono}). This implies the altruistic planner always increases social welfare relative to the baseline. To quantify these effects, we compare social welfare and planner expenditure across three settings: (1) the \emph{analytical} setting (optimal policy, Bayesian agents), (2) the \emph{LLM} setting (LLM planner, LLM agents), and (3) a \emph{hybrid} setting (optimal policy, LLM agents).

The results in \figref{fig:welfare_comp} confirm that planners in all settings can significantly alter social welfare. Furthermore, neglecting social learning (as in the myopic cases) substantially worsens outcomes for the planner. In particular, the biased analytical and LLM planners decreased social welfare by $40$ to $50\%$ when misaligned. This was accomplished by intentionally obscuring information about true state via policies which decrease signal precision. The magnitude of the detriment to social welfare is especially striking consider the harsh constraints placed upon the planners in question (see Remark~\ref{planner_assumptions}). This significant decrease in welfare further substantiates the risk of potentially misaligned LLM information mediators.

The results also highlight the cost of using a misspecified model of agents. In the hybrid setting, the analytically optimal policy, designed for Bayes-rational agents, is "brittle" and its performance suffers when applied to non-Bayesian agents. The LLM policy on the other hand, closely resembles the analytical policy, but is better adjusted to non-Bayesian agents with human-like biases. 


\section{Conclusion}
\label{sec:conclusion}
We introduced a novel framework for studying the strategic control of social learning by an algorithmic information mediator and developed a formal model of this increasingly relevant dynamic. We demonstrated that even a constrained planner can drastically impact social welfare, for better or for worse, and that accounting for social learning is integral to wielding this influence effectively. Our simulations confirmed these insights, revealing that LLMs are capable of strategic reasoning that mirrors the character of our analytical results while being more adapted to non-Bayesian agents. 

Taken together, these findings highlight several important takeaways for algorithmic information mediation: (1) the impact of human-AI or human-algorithm interaction cannot be effectively studied in a vacuum---neglecting their interplay with other interactions (e.g. social learning) means failing to capture their potential, whether positive or negative; (2) information mediators, even when exercising relatively subtle influence, have immense power to guide or derail social learning and public opinion, warranting further study in ways to mitigate the risks therein; and (3) modern LLMs exhibit emergent strategic behavior which can account for and take advantage of social learning as well as non-Bayesian cognitive biases. The latter two points emphasize the need to better understand how the risks of algorithmic information mediators and LLMs in particular, might be mitigated. One limitation of our study is the dearth of human data. Given that the fidelity of LLM-human simulators remains contentious (see, e.g., \cite{horton_large_2023, gao_take_2025}), human data could be useful for verifying, or perhaps even fine-tuning, the LLM simulators.

We conclude by highlighting two notable future directions: (1) the generalization our model and results and (2) the mitigation of the negative welfare effects of biased planners. 
There are a number of desirable generalizations which would relax the assumptions of our model. Settings with larger state spaces, more general signal structures, more general loss functions, and heterogeneous agents are of natural interest to bring our model closer to reality. The difficulty in moving beyond binary states and 0-1 losses is primarily algebraic, and we conjecture that the qualitative nature of our results will continue to hold. For more general signal structures, the most significant challenge is ensuring that the convexity result of Theorem 2 still holds. Finding the most general class of signal structures for which such a result holds is a particularly interesting problem for future study. As a first step toward heterogeneous agents, we generalize the altruistic planner results to allow for agents of opposing preferences in Appendix~\ref{sec:append_heteroegeneous_agents}.

Another important area for future exploration is how one might prevent the decrease in welfare caused by biased planners. 
Here, one might explore regulations or mechanisms that seek to align the incentives of planners and agents to avoid detrimental impacts on welfare. 



\small
\bibliography{refs}
\bibliographystyle{iclr2026_conference}

\appendix
\section{Applications Elaborated}
\label{sec:applications}
Consider a sequence of individuals (agents in our terminology) deciding whether to patronize a business or service provider (e.g., a restaurant, contractor, realtor, etc.) and a recommendation system (the planner) serving the agents information about the business. The true state of the world, assumed to be binary, is whether the business is good, which is unknown. Both the recommendation system and the agents want to be consistent with the unknown true state. That is, if and only if the business is good, agents want to patronize it, and the recommendation system wants to induce the agents to do so. The recommendation system is an \emph{altruistic planner} in this sense. 

The recommendation system can show an agent a highly targeted (\emph{precise} in our terminology) ad that showcases the strengths and weaknesses of the business in contexts that she can relate to, given her background and characteristics, or it can show her a generic or confusing ad that would not help inform her action. The precise signal will be more informative, comprehensible, and relatable for the agent. As such, it is more likely to drive her to the correct conclusion, i.e., patronize the business if and only if it is good. Social welfare increases as more agents arrive at correct decisions. However, changing the precision of an ad also incurs cost, as it involves tailoring to the agent’s specific background. The recommendation system must then choose the precisions so as to maximize social welfare minus the costs. 

We now provide a real-life example of a \emph{biased planner.} Consider a group of voters (agents) motivated to support the candidate most likely to win in their community. The motivation to "back a winner" has been shown to influence electoral outcomes, e.g., this is why US states with early primary elections have an outsized impact on election results (see \citet{bartels_presidential_1988}). The planner is a specific candidate's campaign and, therefore, seeks to motivate agents to back the candidate. The binary true state of the world indicates if the candidate is winning or otherwise. Thus, if an agent knew the true state, she would back the candidate if and only if he were winning. 

Each agent understands her community well and would know if the candidate is winning (i.e., the true state of the world) if she knows his stances and policies. She would know the latter correctly if those are provided to her in a manner that she digests information best. For example, some agents understand audio-visuals best, some long-form articles backed by facts, figures, and citations, some only brief and focused contents, and some only their native language, etc.

The campaign sends digital ads (i.e., private signals) to agents in varying degrees of precision. The precision represents how much the content is tailored to the agent’s taste. Note that a precise signal accurately conveys the true state of the world to an agent by helping her clearly understand the candidate’s policies, track record, and character, which enables her to correctly infer if the candidate will win. A precise signal does not necessarily mean that the agent backs the planner's candidate, though. For example, if the candidate loses in the agent’s community as per the true state of the world, a highly precise signal would induce the agent to oppose him. However, an imprecise signal is more likely to induce the agent into backing him since it obfuscates the true state of the world from the agent (possibly by being vague or confusing), thus increasing the chance that she thinks he is winning. Thus, the biased planner may be incentivized to decrease an agent's signal precision if he thinks that his candidate is losing. To tailor the ad to an agent, the planner must research how the agent best understands any content, which incurs costs for the planner. Even rendering the signal to be really imprecise is costly, as it still requires tailoring to the specific agent, e.g., the planner needs to know that an agent best understands focused and brief messages to be able to decrease precision by confusing her with long-form verbose, detailed articles. The planner selects the precisions so as to maximize the expected number of backers minus the cost incurred in generating the precisions. 

\section{Social Learning Dynamics}
\label{sec:appendix_model_derivation}
We first summarize the sequence of developments at step $i$:
\begin{enumerate}
    \item The planner determines $q_i$ based on $\mathcal{H}_{i}$ and, $q_i$ is observed by all agents. 
    \item Agent $i$ receives private signal $s_i$ such that $\P\{s_i=\omega\}=q_i$.
    \item Agent $i$ takes action $a_i$ based on the observable history and their private signal with the aim of maximizing their own utility.
    \item All other agents (and the planner) observe $a_i$ and the public belief at the end of step $i$, $b_{i+1}$ is updated accordingly. 
\end{enumerate}
Note that from (1), $q_i$ is a function of $\mathcal{H}_i$. Thus, if a random variable is conditioned upon both $q_i$ and $\mathcal{H}_i$, the conditioning on $q_i$ is redundant. We will use this principle throughout.
\subsection{Private Belief Update}
\label{sec:app_private_update}
We now describe how agent $i$ chooses action $a_i$. Let $\alpha_i:=\frac{b_i}{1-b_i}$ be the public likelihood ratio based on observed actions. Agent $i$ forms a private belief about $\omega$, $\tilde b_{i}=\P\{\omega=G|\mathcal{H}_{i}, q_i, s_i\}$, by combining the current public belief with her private signal, $s_i$. We consider a private likelihood ratio, $\tilde\alpha_i:=\frac{\tilde{b}_i}{1-\tilde{b}_i}$. We now describe how $\tilde{\alpha}_{i}$ and $\tilde b_{i}$ are computed.

\begin{enumerate}
     \item If private signal is $s_i = G$, then\begin{align}
             \tilde\alpha_{i} &= \frac{\tilde b_{i}}{1-\tilde b_{i}}
             =\frac{\P\{\omega = G|\mathcal{H}_{i}, q_i, s_i=G\}}{\P\{\omega = B|\mathcal{H}_{i},q_i, s_i=G\}}\nonumber\\
             &=\frac{\P\{s_i=G|\omega = G, \mathcal{H}_{i}, q_i\}\P\{\omega= G|\mathcal{H}_{i}, q_i\}}{\P\{s_i=G|\omega = B, \mathcal{H}_{i}, q_i\}\P\{\omega= B|\mathcal{H}_{i}, q_i\}}\nonumber\\
             &=\frac{q_{i}}{1-q_i}\frac{b_{i}}{1-b_{i}}\label{eq11}\\
             &=\frac{q_i}{1-q_i}\alpha_{i}\label{bel_update_1}
         \end{align}
         In obtaining \eqref{eq11}, we apply the assumption that, conditioned upon $\omega$, agents' signals are independent of one another and the history to assert that $\P\{s_i=G|\omega=G,\mathcal{H}_{i}\}=\P\{s_i=G|\omega=G\}=q_i$. The same is done when conditioning upon $\omega=B$. We additionally use the fact that $q_i$ is a function $\mathcal{H}_i$.
     \item If private signal is $s_i = B$, following the logic above, \begin{align}
             \tilde\alpha_{i} &= \frac{1-q_i}{q_i}\alpha_{i}\label{bel_update_2}
         \end{align}
\end{enumerate}

We let $y(b,q)$ and $z(b,q)$ denote the probability of realizing signal $G$ or $B$, with public belief $b$ and signal precision $q$ conditioned upon the history. By conditioning on $\omega$ and applying the Law of Total Probability, these probabilities can be expressed as follows:
\begin{align}
    &y(b_i,q_i)=\P\{s_i=G|\mathcal{H}_i\}\nonumber\\
    &=\sum_{k\in\{G,B\}}\P\{s_i=G|\mathcal{H}_i,\omega=k\}\P\{\omega=k|\mathcal{H}_i\}\nonumber\\
    &=q_ib_{i}+(1-q_i)(1-b_{i})=1+2q_ib_{i}-q_i-b_{i}\label{signal_prob_1}
\end{align}
Similarly,
\begin{align}
    z(b_i,q_i)=q_i+b_{i}-2q_ib_{i}\label{signal_prob_2}
\end{align}

Combining \eqref{bel_update_1}, \eqref{bel_update_2}, \eqref{signal_prob_1}, and \eqref{signal_prob_2} yields the following:
\begin{align}\label{private_update}
    \tilde{b}_i&=\P(\omega=G|\mathcal{H}_i,q_i,s_i)
    =f(b_i,q_i,s_i)
    =\begin{cases}
        \frac{q_i}{y(b_i,q_i)}b_i&s_i=G\\
        \frac{1-q_i}{z(b_i,q_i)}b_i&s_i=B
    \end{cases}
\end{align}
Note that if $s_i=G$ then $y(b_i,q_i)=\P(s_i=G|\mathcal{H}_i)$ cannot be $0$. Similarly, if $s_i=B$, then $z(b_i,q_i)$ cannot be $0$. Thus, the equation above is well-defined. 
\subsection{Agent Actions and Public Belief}
\label{sec:app_actions}
Agent $i$’s action $a_i$ will be $G$ if $\tilde{b}_i$ exceeds $0.5$ (equivalently, if $\tilde\alpha_i > 1$), $B$ if $\tilde{b}_i<0.5$ (equivalently, if $\tilde\alpha_i < 1$), and $s_i$ otherwise as per the tie-breaking rule. This action and $i$'s precision $q_i$ are observed by all peers, resulting in an updated public belief $b_i$.

When $b_{i}<1- q_i$, then $\alpha_{i}<\frac{1-q_i}{q_i}$. Therefore, if $s_i=G$, $\tilde\alpha_i<1$ by \eqref{bel_update_1}; if $s_i=B$, $\tilde\alpha_i<\frac{(1-q_i)^2}{q_i^2}\leq 1$ (since $q_i\geq0.5$) by \eqref{bel_update_2}. Thus, no matter the private signal realization $s_i$, $a_i=B$ when $b_{i}< 1-q_i$. Similar arguments can be used to show that when $b_{i}>q_i$, $a_i=G$. Intuitively these cases are when the strength of the public belief overpowers the precision of the private signal. 

Now consider the case that $1-q_i\leq b_i\leq q_i$. Then if $s_i=G$, $\tilde\alpha_i\geq1$ from \eqref{eq11} and since $b_i\geq1-q_i$. If $s_i=B$, then $\tilde\alpha_i\leq1$ from \eqref{bel_update_2} and since $b_i\leq q_i$. Together with the tie-breaking rule, this results in action $a_i=s_i$. This leads to \eqref{actions}, reproduced below.
\begin{align*}
    a_i=\begin{cases}
        s_i&1-q_i\leq b_i\leq q_i\\
        G&q_i<b_i\\
        B&q_i<1-b_i
    \end{cases}
\end{align*}

Thus, agent $i$'s action is only informative in the first case of \eqref{actions} where it allows future agents to perfectly infer $i$'s private signal. Thus, in this case, public belief will update just as the private belief from \eqref{private_update}. In the latter two cases of \eqref{actions}, the action contains no information about the true state. Thus, the public belief will not change. This yields \eqref{public_update} reproduced below:
\begin{align}
\label{app_pub_update}
    b_{i+1}=f(b_i,q_i)=\begin{cases}
        \tilde{b}_i&1-q_i\leq b_i\leq q_i\\
        b_i&\text{o.w.}
    \end{cases}
\end{align}

\subsection{Markovianity and Sufficiency of Public Belief}
\label{sec:app_markov}
We now prove that $\P\{a_i=a | \mathcal{H}_{i}, q_i\}$ is a function of $b_i, q_i$ for $a \in \{G, B\}$, regardless of the rest of $\mathcal{H}_{i}$. It follows then that 
\begin{align}
    \P(a_i=a | \mathcal{H}_{i}, q_i) = \P(a_i=a | b_i, q_i)\label{action_ind}
\end{align}
From \eqref{actions}, for the second and third cases, it is clear that $\P\{a_i=a | \mathcal{H}_{i}, q_i\}$ is a function of $b_i, q_i$ for $a \in \{G, B\}$, regardless of the rest of $\mathcal{H}_{i}$. We consider the first case next.
\begin{align}
    &\P(a_i=G|\mathcal{H}_{i}, q_i)\nonumber\\
    &=\sum_{w\in\{G,B\}}\P(\omega=w|\mathcal{H}_i,q_i)\P(a_i=G|\mathcal{H}_i,q_i,\omega=w)\nonumber\\
    &=b_i\P(a_i=G|\mathcal{H}_i,q_i,\omega=G)\nonumber\\
    &\quad+(1-b_i)\P(a_i=G|\mathcal{H}_i,q_i,\omega=B)\nonumber\\
    &=b_i\P(s_i=G|b_i,q_i,\omega=G)\nonumber\\
    &\quad+(1-b_i)\P(s_i=G|b_i,q_i,\omega=B)\nonumber\\
    &=b_iq_i+(1-b_i)(1-q_i)
\end{align}

The first step applies the law of total probability. In the second step, using the fact that $q_i$ is a function of $\mathcal{H}_i$, we substitute $b_i$ for $\P(\omega=w|\mathcal{H}_i,q_i)$. In the third step, we use the fact that, for $1-q_i\leq b_i\leq q_i$, $a_i=s_i$, and, further, that $s_i$ is independent of the history when conditioned upon $\omega$. The result follows. 

This leads to the following theorem: 

\begin{thm}{Markovianity of the Public Belief Process}\label{markovianity}
\hfill
\begin{align}
        b_{i+1}=\P\{\omega=G|b_{i},q_i,a_i\}
    \end{align}
\end{thm}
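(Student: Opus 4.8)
The plan is to show that the conditional probability $\P(\omega = G \mid \mathcal{H}_i, q_i, a_i)$ — which, since $\mathcal{H}_{i+1} = (\mathcal{H}_i, q_i, a_i)$ and $b_{i+1} := \P(\omega = G \mid \mathcal{H}_{i+1})$, equals $b_{i+1}$ by definition — collapses to a function of the triple $(b_i, q_i, a_i)$ alone, and that this function is exactly the one recorded in \eqref{app_pub_update}. This is the precise content of Markovianity: $b_i$ is a sufficient statistic for the history in predicting the next state. I would argue by splitting along the cases of \eqref{actions}.

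First, the informative regime $1 - q_i \le b_i \le q_i$. The analysis preceding \eqref{actions} in Appendix~\ref{sec:app_actions} shows $a_i = s_i$ there (the tie-breaking rule disposing of the boundary values $b_i \in \{1-q_i, q_i\}$), so the $\sigma$-algebras generated by $(\mathcal{H}_i, q_i, a_i)$ and $(\mathcal{H}_i, q_i, s_i)$ coincide; hence $\P(\omega = G \mid \mathcal{H}_i, q_i, a_i) = \P(\omega = G \mid \mathcal{H}_i, q_i, s_i) = \tilde{b}_i$, the private belief of \eqref{private_update}. By the likelihood-ratio identities \eqref{bel_update_1}--\eqref{bel_update_2}, $\tilde{b}_i$ depends on the history only through $b_i$, together with $q_i$ and $s_i = a_i$; so $b_{i+1} = \tilde{b}_i = \P(\omega = G \mid b_i, q_i, a_i)$, matching the first branch of \eqref{app_pub_update}. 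Second, the cascade regime $b_i < 1 - q_i$ or $b_i > q_i$. Here \eqref{actions} makes $a_i$ a deterministic function of $(b_i, q_i)$ (namely $B$ or $G$), and $q_i$ is itself $\mathcal{H}_i$-measurable, so conditioning on $a_i$ adds nothing: $\P(\omega = G \mid \mathcal{H}_i, q_i, a_i) = \P(\omega = G \mid \mathcal{H}_i) = b_i$. For the right-hand side, the tower property applied to the $\mathcal{H}_i$-measurable variable $b_i$ gives $\P(\omega = G \mid b_i, q_i, a_i) = \E[\,\P(\omega = G \mid \mathcal{H}_i) \mid b_i, q_i, a_i\,] = \E[\,b_i \mid b_i, q_i, a_i\,] = b_i$, so $b_{i+1} = b_i = \P(\omega = G \mid b_i, q_i, a_i)$, matching the second branch. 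Combining the two regimes proves the theorem; as a consistency check one recovers precisely \eqref{app_pub_update}.

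This is essentially bookkeeping, and the only step demanding care is the passage from conditioning on $\mathcal{H}_i$ to conditioning on $b_i$. The principle to get right is that once a quantity has been shown to equal a deterministic function of $b_i$ (and $q_i$, $s_i$) — as $\tilde b_i$ has via \eqref{bel_update_1}--\eqref{bel_update_2}, and as $\P(a_i = \cdot \mid \mathcal{H}_i, q_i)$ has via \eqref{action_ind} — the tower property over the $\mathcal{H}_i$-measurable variable $b_i$ permits replacing the conditioning $\sigma$-algebra $\sigma(\mathcal{H}_i, q_i, a_i)$ by $\sigma(b_i, q_i, a_i)$ without changing the value. The secondary point to verify explicitly is the behavior at the thresholds $b_i \in \{1-q_i, q_i\}$, where the equality $a_i = s_i$ holds only by virtue of the tie-breaking convention, so that the informative-regime argument still applies there.
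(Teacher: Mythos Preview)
Your proof is correct, but it proceeds differently from the paper. The paper first isolates the auxiliary fact \eqref{action_ind}, namely $\P(a_i=a\mid\mathcal{H}_i,q_i)=\P(a_i=a\mid b_i,q_i)$, proving it case-by-case, and then establishes the theorem by an inductive Bayes'-theorem manipulation that reduces $\P(\omega=G\mid\mathcal{H}_n,q_n,a_n)$ to $\tfrac{\P(a_n\cap\omega=G\mid b_n,q_n)}{\P(a_n\mid b_n,q_n)}$ via \eqref{action_ind}. You instead split directly on the two regimes of \eqref{actions} and, in each, identify $\P(\omega=G\mid\mathcal{H}_i,q_i,a_i)$ with an explicit function of $(b_i,q_i,a_i)$ (either $\tilde b_i$ or $b_i$), then invoke the tower property to replace $\mathcal{H}_i$ by $b_i$. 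Your route is arguably more transparent: it dispenses with the induction (which in the paper's argument is largely organizational, the hypothesis not being used in the inductive step) and makes the mechanism---that $a_i$ either reveals $s_i$ or is uninformative---visible at the top level rather than hiding it inside the preliminary lemma. The paper's route, on the other hand, packages the case analysis once in \eqref{action_ind} and then proceeds uniformly, which would generalize more cleanly if the action rule were less explicit.
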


In other words, $b_{i}$ captures the entire information pertinent to the update of $b_{i+1}$ that is contained in $H_{i}.$ 

\begin{proof}
    Note that, from step (1) in Appendix~\ref{sec:appendix_model_derivation}, $q_i$ is a function of $\mathcal{H}_i$. It suffices to show $$b_{i+1}=\frac{\P\{a_i\cap \omega=G|b_{i},q_i\}}{\P\{a_i|b_{i},q_i\}}$$
    
    We will prove the claim via induction on $i$.

    For our base case, consider $i=1$ so that $\mathcal{H}_2=\{b_1,(q_1,a_1)\}$ and $b_{i}=b_1$. An application of Bayes' Theorem provides the following:
    \begin{align*}
        b_2=\frac{\P\{\omega=G|b_1,q_1\}\P\{a_1|\omega=G, b_1,q_1\}}{\P\{a_1|b_1,q_1\}}
        &=\frac{b_{1}\P\{a_1|\omega=G, b_1, q_1\}}{\P\{a_1|b_{1},q_1\}}\\
        &=\frac{b_1\P\{a_1\cap\omega=G|b_{1},q_1\}}{\P\{a_1|b_{1},q_1\}\P\{\omega=G|b_1,q_1\}}\\
        &=\frac{\P\{a_1\cap\omega=G|b_{1},q_1\}}{\P\{a_1|b_{1},q_1\}}
    \end{align*}
    \noindent where the final step follows as the unconditioned belief of $\omega$ is equal to the prior belief $b_1$.
   
    Now, assume the claim holds for all $i\leq n$.
    \begin{align*}
        b_{n+1}=\P\{\omega=G|\mathcal{H}_{n},a_n,q_n\}
        &=\frac{\P\{\omega=G|\mathcal{H}_{n},q_n\}\P\{a_n|\omega=G,\mathcal{H}_{n},q_n\}}{\P\{a_n|\mathcal{H}_{n},q_n\}}\\
        &=\frac{b_{n}\P\{a_n|\omega=G,\mathcal{H}_{n},q_n\}}{\P\{a_n|\mathcal{H}_{n},q_n\}}
        \\
        &=\frac{b_{n}\P\{a_n \cap \omega=G|\mathcal{H}_{n},q_n\}}{\P\{a_n|b_{n},q_n\}\P\{\omega=G|\mathcal{H}_{n},q_n\}}\\
        &=\frac{\P\{a_n \cap \omega=G|b_{n},q_n\}}{\P\{a_n|b_{n},q_n\}}
    \end{align*}
    where the final two steps follow from \eqref{action_ind} and the definition of $b_n$.
   
    This completes the proof.
\end{proof}
Note that we have shown that the public belief is Markovian adapting the arguments in \citet{economics1992} and \citet{Bikhchandani1992}. The modifications were required because of the differences in selections of the agents’ precisions caused by the introduction of the central planner. 

\section{Proofs}
\label{sec:appendix_proofs}
\subsection{Derivation of \eqref{mistake_probability}}
\label{sec:appendix_deriv_mistake_prob}
\begin{lem}\label{app_mistake_prob}
    $$\P(a_i\neq\omega|b_i,q_i) = \min(b_i, 1-b_i, 1-q_i)$$
\end{lem}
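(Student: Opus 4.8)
The plan is to prove the identity by a direct case analysis driven by the action characterization \eqref{actions}, which for fixed $q_i$ partitions the range of $b_i$ into the three regions $\{1-q_i\le b_i\le q_i\}$, $\{b_i> q_i\}$, and $\{b_i< 1-q_i\}$. In each region $a_i$ is a deterministic function of $(s_i,\omega)$, so the event $\{a_i\ne\omega\}$ becomes explicit and its probability can be read off from two facts already established in the paper: the public belief satisfies $\P(\omega=G\mid\mathcal{H}_i,q_i)=b_i$ (since $q_i$ is $\mathcal{H}_i$-measurable), and the private signal satisfies $\P(s_i=\omega\mid\mathcal{H}_i,\omega,q_i)=q_i$ with $s_i$ conditionally independent of the history given $\omega$.

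In the informative region $1-q_i\le b_i\le q_i$ we have $a_i=s_i$, so by the law of total probability over $\omega$, $\P(a_i\ne\omega\mid b_i,q_i)=\P(s_i\ne\omega\mid b_i,q_i)=b_i(1-q_i)+(1-b_i)(1-q_i)=1-q_i$; the region inequalities give $1-q_i\le b_i$ and $1-q_i\le 1-b_i$, so $\min(b_i,1-b_i,1-q_i)=1-q_i$, matching. In the region $b_i>q_i$ we have $a_i=G$, hence $\{a_i\ne\omega\}=\{\omega=B\}$ and the probability is $1-b_i$; since $b_i>q_i\ge 1/2$, we get $1-b_i<1/2\le b_i$ and $1-b_i<1-q_i$, so the minimum is again $1-b_i$. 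In the region $b_i<1-q_i$ we have $a_i=B$, so $\{a_i\ne\omega\}=\{\omega=G\}$ with probability $b_i$, and here $b_i<1-q_i\le 1/2\le 1-b_i$, so $\min(b_i,1-b_i,1-q_i)=b_i$. Combining the three cases yields the claimed formula.

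The only point requiring a word of care is the boundary behavior under the tie-breaking rule: when $b_i=q_i$ or $b_i=1-q_i$ the agent still plays $a_i=s_i$ per \eqref{actions}, so these endpoints fall into the first case and the computation above applies verbatim; moreover the value $1-q_i$ obtained there coincides with the one-sided limits of $1-b_i$ and of $b_i$ from the neighboring regions, so the formula is continuous and unambiguous across the boundaries. There is no substantive obstacle here — all the content is carried by the previously proven characterization \eqref{actions}, and the proof reduces to the mechanical verification that in each of the three regions the correct term of the triple minimum is selected, which follows from the defining inequalities together with $q_i\in[1/2,1]$.
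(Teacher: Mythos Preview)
Your proof is correct and follows the same three-case analysis as the paper, driven by the action rule \eqref{actions}. Your treatment of the informative region is in fact slightly cleaner than the paper's: you condition on $\omega$ and directly obtain $b_i(1-q_i)+(1-b_i)(1-q_i)=1-q_i$, whereas the paper conditions on the action and routes through the private-belief update \eqref{private_update} and the signal probabilities $y,z$ before arriving at the same $1-q_i$; you also explicitly verify that each computed value coincides with the correct branch of the minimum, which the paper leaves implicit.
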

\begin{proof}
    We prove this for each case of $a_i$ delineated in \eqref{actions}. In the second case, we can reason as follows:
    \begin{align*}
        \P(a_i\neq\omega|b_i,q_i)&=\P(\omega\neq G|b_i,q_i)=1-b_i
    \end{align*}
    Similarly, in the third case of \eqref{actions}, we obtain:
    \begin{align*}
        \P(a_i\neq\omega|b_i,q_i)&=\P(\omega\neq B|b_i,q_i)=b_i
    \end{align*}
    For the case when $1-q_i\leq b_i\leq q_i$ and $a_i=s_i$ from \eqref{actions}, we can reason as follows:
    \begin{align}
        \P(a_i\neq\omega|b_i,q_i)
        &=\sum_{a\in\{G,B\}}\P(a_i=a|b_i,q_i)\P(\omega\neq a|b_i,q_i,a_i=a)\nonumber\\
        &=\P(s_i=G|b_i,q_i)\P(\omega\neq G|b_i,q_i,s_i=G)\nonumber\\
        &\quad+\P(s_i=B|b_i,q_i)\P(\omega\neq B|b_i,q_i,s_i=B)\label{eq5}&\text{Eqn }\ref{actions}\\
        &=y(b_i,q_i)\P(\omega=B|b_i,q_i,s_i=G)\nonumber\\
        &\quad+z(b_i,q_i)\P(\omega= G|b_i,q_i,s_i=B)\label{eq6}&\text{Eqns }\ref{signal_prob_1},\ref{signal_prob_2},\text{ Thm }\ref{markovianity}\\
        &=\begin{cases}
            y(0,q_i)&b_i=0\\
            y(b_i,q_i)(1-\frac{b_iq_i}{y(b_i,q_i)})+z(b_i,q_i)\frac{b_i(1-q_i)}{z(b_i,q_i)}&b_i\neq 0
        \end{cases}\label{eq7}&\text{Eqn }\ref{private_update}\\
        &=\begin{cases}
            1-q_i&b_i=0\\
            y(b_i,q_i)-b_iq_i+b_i(1-q_i)&b_i\neq 0
        \end{cases}\nonumber\\
         &=\begin{cases}
            1-q_i&b_i=0\\
            y(b_i,q_i)-b_iq_i+b_i-b_iq_i&b_i\neq 0
        \end{cases}\nonumber\\
        &=\begin{cases}
            1-q_i&b_i=0\\
            1-q_i&b_i\neq 0
        \end{cases}\nonumber
    \end{align}
    
    Combining the three cases yields the claim.
\end{proof}

\subsection{Proof of Theorem~\ref{myopic_alt_policy}: Myopic altruistic policy}\label{sec:appendix_myopic_alt_proof}
Applying \eqref{mistake_probability} to $r_A(b,q)$ in \eqref{alt_reward} allows us to reason as follows
\begin{align}
\label{alt_reward_derivative}
    \frac{\partial r_A(b,q)}{\partial q}&=\begin{cases}
        -\dot\beta(q)&q<\max(b,1-b)\\
        -\dot\beta(q)+C&q\geq\max(b,1-b) 
    \end{cases}
\end{align}
As stated in Section~\ref{sec:alt_planner_prob}, we can restrict the altruistic planner to precision $q\in[p,1]$ because social welfare is increasing in precision (see Appendix~\ref{sec:app_welfare_mono}). Accordingly, $\beta(p)=0$. 

Because $\beta(\cdot)$ is increasing, the first case of \eqref{alt_reward_derivative} leads to optimal precision $p$ if $p<\max(b,1-b)$. That is, if the signal is not precise enough to influence the agent's action, then the planner will not expend resources for it. 
In the latter case, because $\beta(\cdot)$ is increasing and concave, $\dot\beta(\cdot)$ is positive and decreasing. Thus, $r_A(b,q)$, if the optimal lies in this range it must be attained at $q=1$.

We now identify the regime in which the latter case is optimal.
\begin{align*}
    r_A(b,p)&<r_A(b,1)\\
    -C\min(b,1-b,1-p)&<-\beta(1)\\
    \min(b,1-b,1-p)&>\frac{\beta(1)}{C}
\end{align*}

If $1-p\leq\frac{\beta(1)}{C}$, the above inequality is never satisfied, and it is never optimal to select precision $1$. Otherwise, the inequality above yields $b\in\left[\frac{\beta(1)}{C},1-\frac{\beta(1)}{C}\right]$. This yields Theorem~\ref{myopic_alt_policy}.
\hfill\QEDA
\subsection{Proof of Theorem~\ref{alt_convexity}: Altruistic value function convexity}
\label{sec:app_alt_convexity}
\paragraph{Proof Sketch}
We inductively prove that the expected $k$-th stage reward, i.e., the expected utility of the planner from the control and action of the $k$-th agent, is convex. The instantaneous reward (\eqref{alt_reward}), which is convex with respect to public belief, provides our base case. 

The first challenge encountered is the unusual nature of the public belief update. Although the state space is uncountably infinite, the belief update only takes support on a maximum of 2 values. To manage this, we define a decision tree, i.e., the complete binary tree of all possible trajectories once an initial belief and policy are fixed. Each node has two children corresponding to each possible signal realization the next agent might receive. The root is the expected instantaneous reward at time $1$, i.e., from the first agent's action. The induction moves down the levels of this tree with the $k$-th level containing $2^{k-1}$ nodes, each associated with a sequence of realizations of $k-1$ signals. 

We then show that for a node in the $(k-1)$-th level that has convex expected reward, its two children in the $k$-th level satisfy the same property. This is where we must deal with the dependence of agents' actions on public belief. Note that even when applying the same precision and receiving the same signal realization, two agents beginning at different public beliefs may take opposing actions (see \eqref{actions}). Thus, standard results that provide easy ways of bounding the future terms of the Markov process do not apply. Here, our specific belief update is actually helpful. We can leverage the fact that Bayesian updates are martingales (i.e., $\E[b_{i+1}]=b_i$). Along with the convexity of instantaneous rewards, this allows us to complete the inductive step and, subsequently, the proof.

\paragraph{Preliminaries}
We first prove the following lemma containing a few useful properties of the optimal value function $V^*_A(\cdot)$. 
\begin{lem}\label{v_props}
\hfill\\
    $V^*_A(\cdot)$ is non-positive, $V^*_A(b)=V^*_A(1-b)\forall b\in[0,1]$, and $V^*_A(0)=V^*_A(1)=0$ with $\pi^*_A(0)=\pi^*_A(1)=p$.
\end{lem}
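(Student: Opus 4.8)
The plan is to dispatch the three claims in order, each with a short self-contained argument. \textbf{Non-positivity} is immediate: combining \eqref{alt_reward} with \eqref{mistake_probability} gives $r_A(b,q)=-\beta(q)-C\min(b,1-b,1-q)\le 0$ for all $(b,q)$, since $\beta\ge 0$ and $C>0$; hence every policy has $V_A^\pi(b)=\sum_{i\ge 1}\delta^{i-1}r_A(b_i,\pi(b_i))\le 0$, and taking the supremum over $\Pi$ yields $V_A^*(b)\le 0$.

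\textbf{Boundary values.} At $b=0$ the mistake probability $\min(0,1,1-q)$ is $0$ for every admissible $q$, and $0$ is an absorbing state of the belief process: $0\in[1-q,q]$ only for $q=1$, so for $q<1$ \eqref{public_update} keeps the state at $0$, while for $q=1$ the posterior in \eqref{private_update} computed from $b=0$ is still $0$ (indeed $y(0,1)=0$, so only the signal $B$ occurs, giving belief $0$). Therefore from $b=0$ every policy leaves the state at $0$ forever and earns $-\sum_{i\ge 1}\delta^{i-1}\beta(q_i)\le 0$, with equality attained by the constant control $q_i\equiv p$ (recall $\beta(p)=0$). Hence $V_A^*(0)=0$ and $p$ is an optimal action there, so we may take $\pi_A^*(0)=p$; the case $b=1$ is identical (or follows from the symmetry below).

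\textbf{Reflection symmetry.} The idea is that swapping the labels $G\leftrightarrow B$ is an automorphism of the MDP carrying belief $b$ to $1-b$. Write $U(b):=V_A^*(1-b)$; I would show $U$ solves the Bellman optimality equation $U(b)=\sup_{q\in[0.5,1]}\big\{r_A(b,q)+\delta\,\E\big[U(f(b,q))\big]\big\}$, whose bounded solution is unique and equals $V_A^*$ \citep{kallenberg2011markov}, forcing $U=V_A^*$, i.e. $V_A^*(b)=V_A^*(1-b)$; the optimal actions transport the same way, giving $\pi_A^*(1-b)=\pi_A^*(b)$ and in particular $\pi_A^*(1)=p$. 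Two ingredients make this work. First, $r_A(\cdot,q)$ is reflection-invariant because $\min(b,1-b,1-q)$ is, and $\beta(q)$ does not involve $b$. Second, the transition kernel is equivariant under $b\mapsto 1-b$: from \eqref{signal_prob_1}--\eqref{signal_prob_2} one gets $y(1-b,q)=z(b,q)$ and $z(1-b,q)=y(b,q)$, the cascade set $\{b:1-q\le b\le q\}$ is reflection-invariant, and by \eqref{private_update} the two posteriors reachable from $1-b$ are exactly $1$ minus the two posteriors reachable from $b$ (with the signal labels interchanged), each carrying the matching realization probability. Consequently $\E\big[U(f(b,q))\big]=\E\big[V_A^*(1-f(b,q))\big]$ equals the corresponding expectation evaluated at $1-b$, and substituting into the displayed equation (with $r_A(b,q)=r_A(1-b,q)$) gives $\sup_q\{\cdots\}=V_A^*(1-b)=U(b)$, as required.

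I expect the one step needing genuine care to be the equivariance of the belief-update kernel in the last paragraph: matching, branch by branch and with the correct probabilities, the update started from $b$ against the reflected update started from $1-b$, including the edge cases at the endpoints of $[1-q,q]$ and at $b\in\{0,1\}$. Everything else is bookkeeping built on \eqref{alt_reward}, \eqref{mistake_probability}, and \eqref{public_update}, together with the standard uniqueness of the discounted Bellman fixed point.
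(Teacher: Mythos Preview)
Your proposal is correct. Non-positivity and the boundary-value argument coincide with the paper's reasoning almost verbatim. For the reflection symmetry, however, you take a different route: the paper argues directly on \emph{decision trees}, showing that the tree rooted at $n_0=1-b_0$ is the mirror image of the tree rooted at $b_0$ by separately checking (i) $r_A(b,q)=r_A(1-b,q)$, (ii) that the two reachable posteriors from $1-b$ are $1$ minus those from $b$, and (iii) $y(1-b,q)=z(b,q)$, $z(1-b,q)=y(b,q)$, and then concluding $V_A^*(b)=V_A^*(1-b)$ policy-by-policy. You instead package the same three symmetry facts into an equivariance statement for the Bellman operator and invoke uniqueness of the discounted fixed point. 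Your approach is shorter and leverages standard dynamic-programming machinery; the paper's is more explicit and self-contained, making the policy-level symmetry (hence $\pi_A^*(1-b)=\pi_A^*(b)$) visible without a separate argument. Both are valid, and the underlying computations are identical.
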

\begin{proof}
    \hfill\\
    \emph{Non-positivity}\\
    $r(b_{i},q_i)\leq0\quad\forall b_{i},q_i\rightarrow V^*_A\leq0$

    \noindent\emph{Symmetry}\\
    Let $n_i=1-b_i$ and let $\Lambda_n$ and $\Lambda_b$ be the decision trees rooted at states $n_0$ and $b_0$, respectively. Showing that $\Lambda_n$ and $\Lambda_b$ are symmetric suffices to show the symmetry of $V^*(\cdot)$. This will be accomplished by showing that the expected rewards, belief state values, and branching probabilities of the trees are reflections of one another.\\

    \noindent First consider the instantaneous reward function $r_A$. Note that the decision function of the agents when choosing their actions is symmetric from \eqref{actions}, thus, $\P\{a_i=G|b_{i}\}=\P\{a_i=B|n_{i}\}$. Applying this, we obtain
    \begin{align*}
        r(n_{i}, q_i) &= -\beta(q_i) -Cn_{i}\P\{a_i=B|\omega=G, n_{i}\}-C(1-n_{i})\P\{a_i=G|\omega=B, n_{i}\}\\
        &=-\beta(q_i)-C(1-b_{i})\P\{a_i=B|\omega=G, n_{i}\}-Cb_{i}\P\{a_i=G|\omega=B, n_{i}\}\\
        &=-\beta(q_i)-C(1-b_{i})\P\{a_i=G|\omega=B, b_{i}\}-Cb_{i}\P\{a_i=B|\omega=G,  b_{i}\}\\
        &=r(b_{i},q_i)
    \end{align*}
    Thus, the instantaneous reward function is symmetric about 0.5.

    Now consider the belief state update when beginning with belief $n_{i}$ assuming signal precision $q_i$. we will use $\bar x$ and $\underline x$ to represent updated beliefs from $x$ after observing action $G$ or $B$, respectively. From \eqref{private_update}:
    \begin{align*}
        \bar n_{i}&=\frac{q_i}{y(n_i,q_i)}n_i= \frac{q_i}{z(b_i,q_i)}(1-b_i)=1-\frac{1-q_i}{z(b_i,q_i))}b_{i}=1-\underline b_{i}\\
        \underline n_{i}&=\frac{1-q_i}{z(n_i,q_i}n_i= \frac{q_i}{y(b_i,q_i}(1-b_i)=1-\frac{q_i}{y(b_i,q_i)}b_{i}=1-\bar b_{i}
    \end{align*}
    Thus, the belief updates are also symmetric. Combining this with the symmetry of the reward we obtain:
    \begin{align*}
        r(\bar n_{i})&=r(1-\underline b_{i})=r(\underline b_{i})\\
        r(\underline n_{i})&=r(1-\bar b_{i})=r(\bar b_{i})
    \end{align*}
    Thus, the expected rewards and belief state values of $\Lambda_n$ and $\Lambda_b$ are mirror images of one another. 

    Finally, consider the branching probabilities which we will denote as $y(b,q)=1+2bq-b-q$ for observing action $G$ and $z(b,q)=1-y(b,q)$ for observing action $B$. 
    \begin{align*}
        y(n,q)&=1+2nq-n-q=1+2(1-b)q-(1-b)-q=b+q-2bq=z(b,q)\\
        z(n,q)&=1-y(n,q)=1-z(b,q)=y(b,q)
    \end{align*}
    Thus, the branching probabilities of $\Lambda_n$ and $\Lambda_b$ are also symmetric, completing the proof.\\

    \noindent\emph{Extremal values}\\
    We will show that $V^*_A(0)=0$ and rely on the symmetry about 0.5 proven above to show that $V^*_0(1)=0$.

    From \eqref{public_update}, when $b_0=0$, $b^+_0=b^-_0=0$ for any value of precision. Thus, regardless of signal realizations or control actions, the agents' actions and, subsequently, the public belief, will remain unchanged. That is, $b_i=0\forall i$ and, from \eqref{actions}, $a_i=B\forall i$.
    
    From \eqref{alt_reward}, we can then write:
    \begin{align}
        V^*_A(0) &= \max_{\pi\in\Pi}\E\left[\sum_{i=0}^\infty\delta^ir(b_i,\pi(b_i))\right]\nonumber\\ &=\max_{\pi\in\Pi}\E\left[\sum_{i=0}^\infty\delta^ir(0,\pi(0))\right]\nonumber\\ &=\max_{\pi\in\Pi}\E\left[\sum_{i=0}^\infty\delta^i\left[-\beta(\pi(0))-Cb_i\P\{a_{i}=B|\omega=G, b_{i}=0\}-C(1-b_i)\P\{a_{i}=G|\omega=B, b_{i}=0\}\right]\right]\nonumber\\ &=\max_{\pi\in\Pi}\E\left[\sum_{i=0}^\infty\delta^i\left[-\beta(q_i)\right]\right]\label{v_extremal}
    \end{align}
    Thus, to attain $V^*_A(0)$, it suffices to choose $\pi(0)$ such that $\beta(\pi(0))$ is minimized for each $i$. Recall that $\beta(\cdot)$ is increasing on $[0.5,1]$ and takes its minimum at $\beta(p)=0$. Thus, by choosing $q_i=p\forall i$, Equation \ref{v_extremal} yields $V^*_A(0)=0$ with corresponding optimal policy $\pi^*_A(0)=p$.

    The aforementioned symmetry about 0.5, shows that $V^*_A(1)=0$, completing the proof.
\end{proof} 

We now introduce the concept of \emph{decision trees} for Markovian and more general policies. A policy can be represented by a decision tree, which is a complete binary tree as in \figref{fig:branch_diag}. Every node of the decision tree represents the state of the system, i.e., the corresponding belief value when the controller sends its private signal to the agent corresponding to the level of the node. The branches represent the actions taken by the agent corresponding to the level once it receives its private signal. The controller’s policies (i.e. choice of precision) determine the probabilities of the actions from each node and the belief resulting from the action. The probabilities of an action at a certain node can be considered the weight of the branch. A decision tree is uniquely identified by the belief values associated with nodes and the weights of the branches. For a deterministic stationary Markovian policy, the controller’s choice of precision, and therefore the probability associated with an action, is a deterministic function of the belief value of the node, regardless of the level of the node in the tree and the path to the node from the root. Thus, the decision tree of any such policy, i.e. the belief values at the nodes and the weights of the branches, is determined entirely by the value of the initial state. For an arbitrary policy of the controller, the controller’s decision at a given node can depend on the entire path leading to the node.

\paragraph{Proof of Theorem~\ref{alt_convexity}}
We now prove the convexity of $V^*_A(\cdot)$ with respect to public belief.
\begin{proof}
    \hfill\\ 
    We will first show that, for fixed precision, the instantaneous reward $r_A$ is convex in public belief. We can rewrite $r_A$ as follows using Lemma~\ref{app_mistake_prob}: 
    \begin{align}
        r_A(x,q)&=-\beta(q)-C\min(x,1-x,1-q)
    \end{align}
    The pointwise minimum of linear functions is concave. As we are subtracting the minimum, the overall function is convex with respect to $x$.

    We consider arbitrary $x_0, m_0, n_0$ such that $tm_0+(1-t)n_0=x_0$ for some $t\in[0,1]$. 
    To show the convexity of $V^*_A(\cdot)$ we must show that $V^*_A(x_0)\leq t V^*_A(m_0)+(1-t)V^*_A(n_0)$. By definition of the optimal value function, $V^*_A(x)\geq V^{\pi}_A(x)$ for any policy $\pi$. Thus, it will suffice to show that there exist $\tilde \pi$ such that 
    \begin{align}\label{convex_bound}
        V^*_A(x_0)\leq tV^{\tilde\pi}_A(m_0)+(1-t)V^{\tilde\pi}_A(n_0)
    \end{align}

    We refer to the decision tree of $\pi^*$ when the initial state is $x_0$ as $\Lambda$. 
    
    We consider $\tilde\pi$ to be a policy of the controller that chooses the precision at each node to be the same as what $\Lambda$ does at the counterpart node. The nodes of decision trees are considered to be counterparts if they can be reached via the same sequence of actions from the root (see \figref{fig:branch_diag} for illustration). Thus, the precision chosen by this policy is the same as what $\Lambda$ does after the same set of actions of the agents. We refer to the decision trees for $\tilde\pi$ with initial state $m_0$ (respectively, $n_0$) as $\Lambda_m$ ($\Lambda_n$).
    
    We now illustrate the policy $\tilde\pi$ for initial values $m_0$ and $n_0$ considering the decision trees $\Lambda_m$ and $\Lambda_n$. While doing so, we illustrate counterpart nodes and the weights of the branches. We will select one path of actions $\lambda$ out of these decision trees by specifying set $\mathcal{G}$ as the epochs where action $G$ is taken and $\mathcal{B}$ as its complement. Refer to \figref{fig:branch_diag} for illustration. Considering the decision tree $\Lambda$, we will refer to $x_i$ as the state at epoch $i$ when beginning with belief $x_0$ and following $\pi^*$ along this path of actions. That is, $x_i$ assumes one of $2^{i}$ possible belief values at epoch $i$ starting from $x_0$. In decision tree $\Lambda_m$ (respectively, $\Lambda_n$), we use $m_i$ ($n_i$) to denote the public belief after beginning with $m_0$ ($n_0$), following the path $\lambda$. Node $m_i$ (respectively $n_i$) in decision tree $\Lambda_m$ ($\Lambda_n$) is the counterpart of node $x_i$ from decision tree $\Lambda$. Note that, $\tilde\pi(m_i)$ can now be illustrated notationally. Specifically, $\tilde\pi(m_i)=\tilde\pi(n_i)=\pi^*_A(x_i)$. Similarly, $\tilde\pi$ can be defined at each node of $\Lambda_m$, considering the action taken by $\pi^*$ at each counterpart node in $\Lambda$. Note that $y(x_i,\pi^*_A(x_i))$ (respectively, $z(x_i,\pi^*_A(x_i))$) is the weight of the branch corresponding to $G$ ($B$) emanating from $x_i$ in $\Lambda$. Because $\tilde\pi(m_i)=\tilde\pi(n_i)=\pi^*_A(x_i)$, $y(m_i,\tilde\pi(m_i))=y(m_i,\pi^*_A(x_i))$ is the weight of the branch emanating from $m_i$ for action $G$ in $\Lambda_m$. This extends similarly to branches associated with action $B$ and $\Lambda_n$. Thus, the three decision trees differ in both the values of the nodes and the weights of the branches. Since $\pi^*$ is deterministic, so is $\tilde\pi$. Therefore, from \eqref{private_update} and \eqref{public_update}, $x_i$, $m_i$, and $n_i$ are deterministic quantities once the specific path of actions is specified by $\mathcal{G}$. 

    \begin{figure}[!ht]
     \centering
     \begin{subfigure}[b]{0.6\textwidth}
         \centering
         \includegraphics[width=\textwidth]{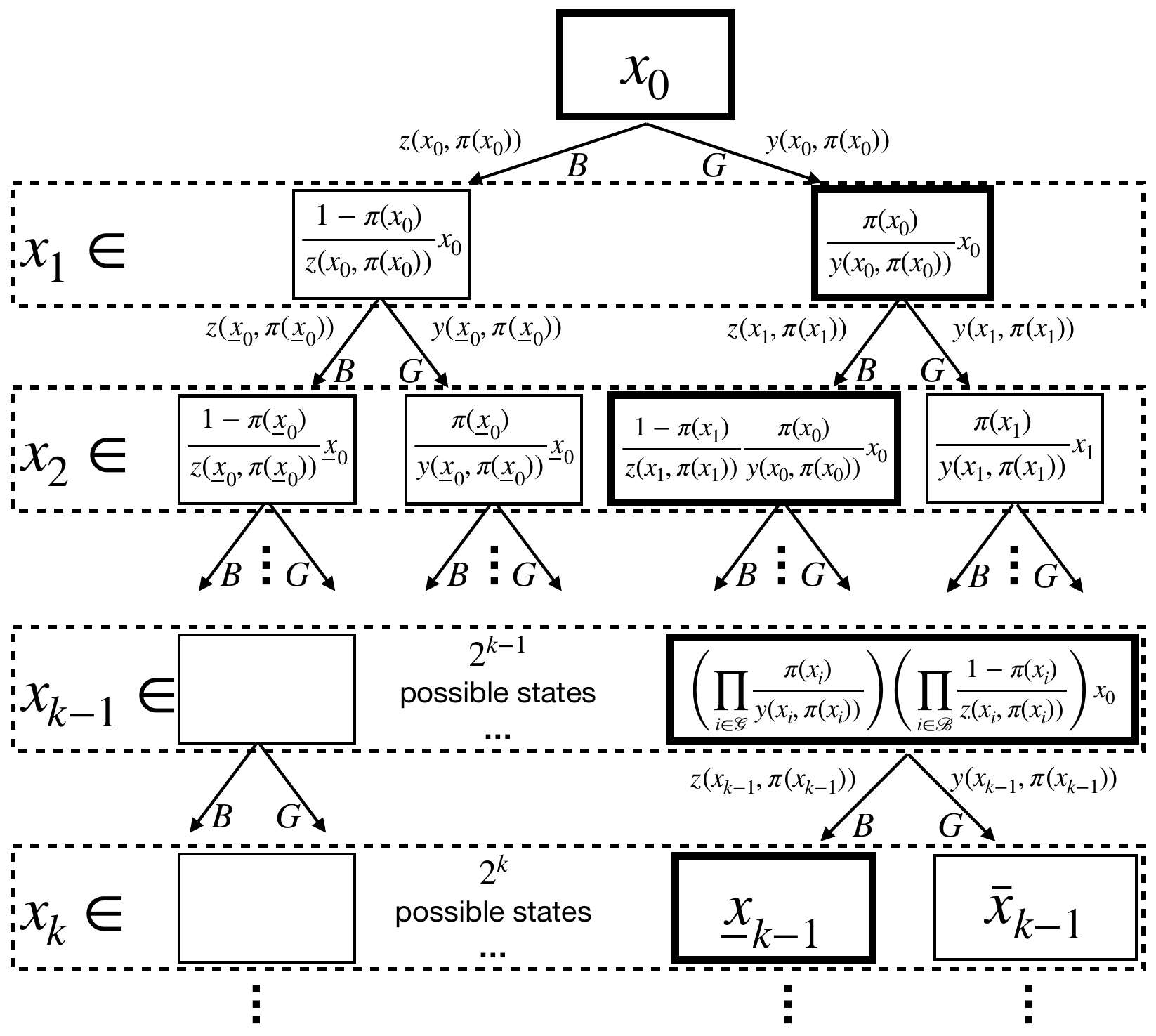}
         \caption{Decision tree $\Lambda$ rooted at $x_0$}
         \label{fig:branch_1}
     \end{subfigure}
     \hfill
     \begin{subfigure}[b]{0.4\textwidth}
         \centering
         \includegraphics[width=\textwidth]{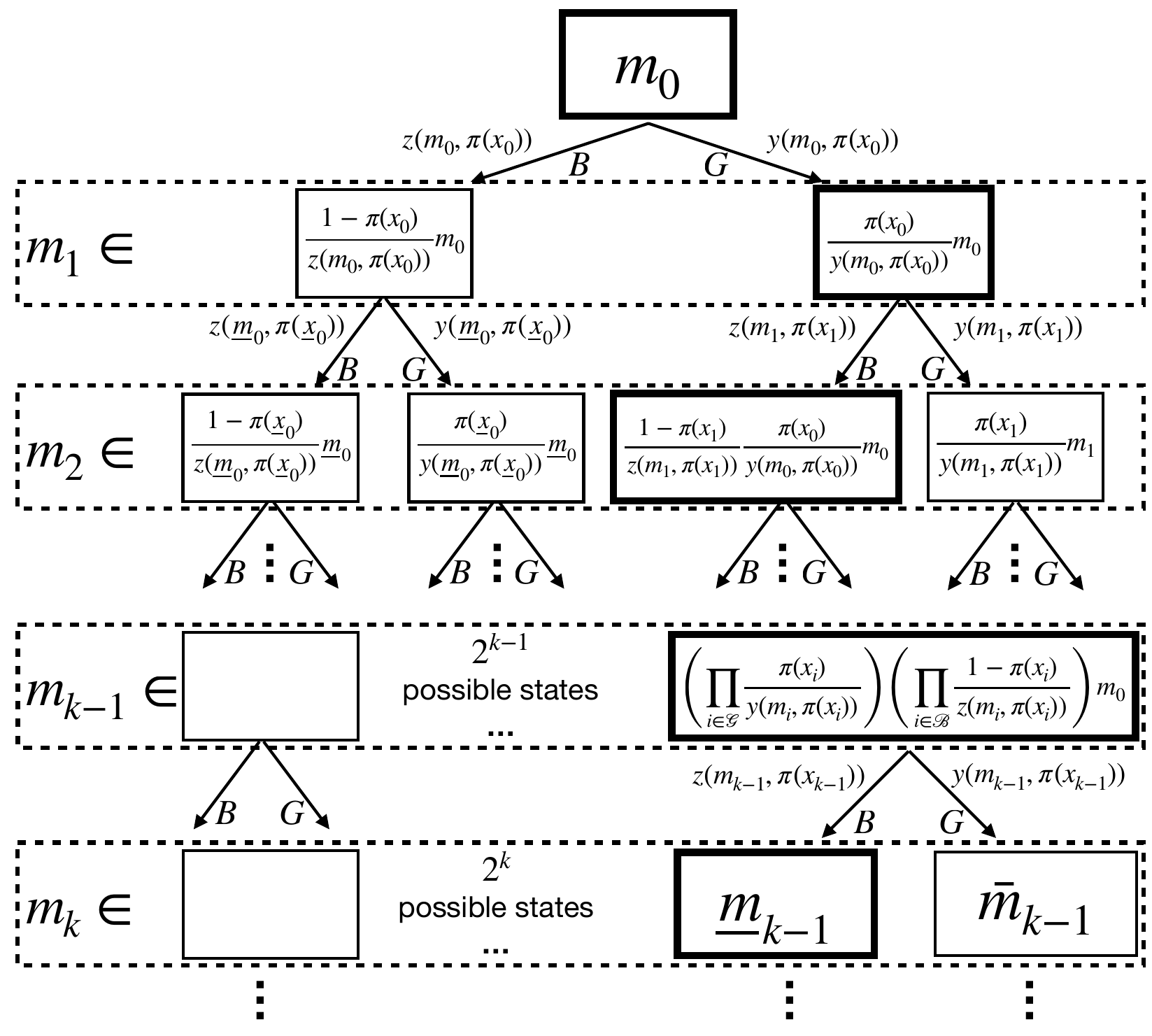}
         \caption{Decision tree $\Lambda_m$ rooted at $m_0$}
         \label{fig:branch_2}
     \end{subfigure}
     \hfill
     \begin{subfigure}[b]{0.4\textwidth}
         \centering
         \includegraphics[width=\textwidth]{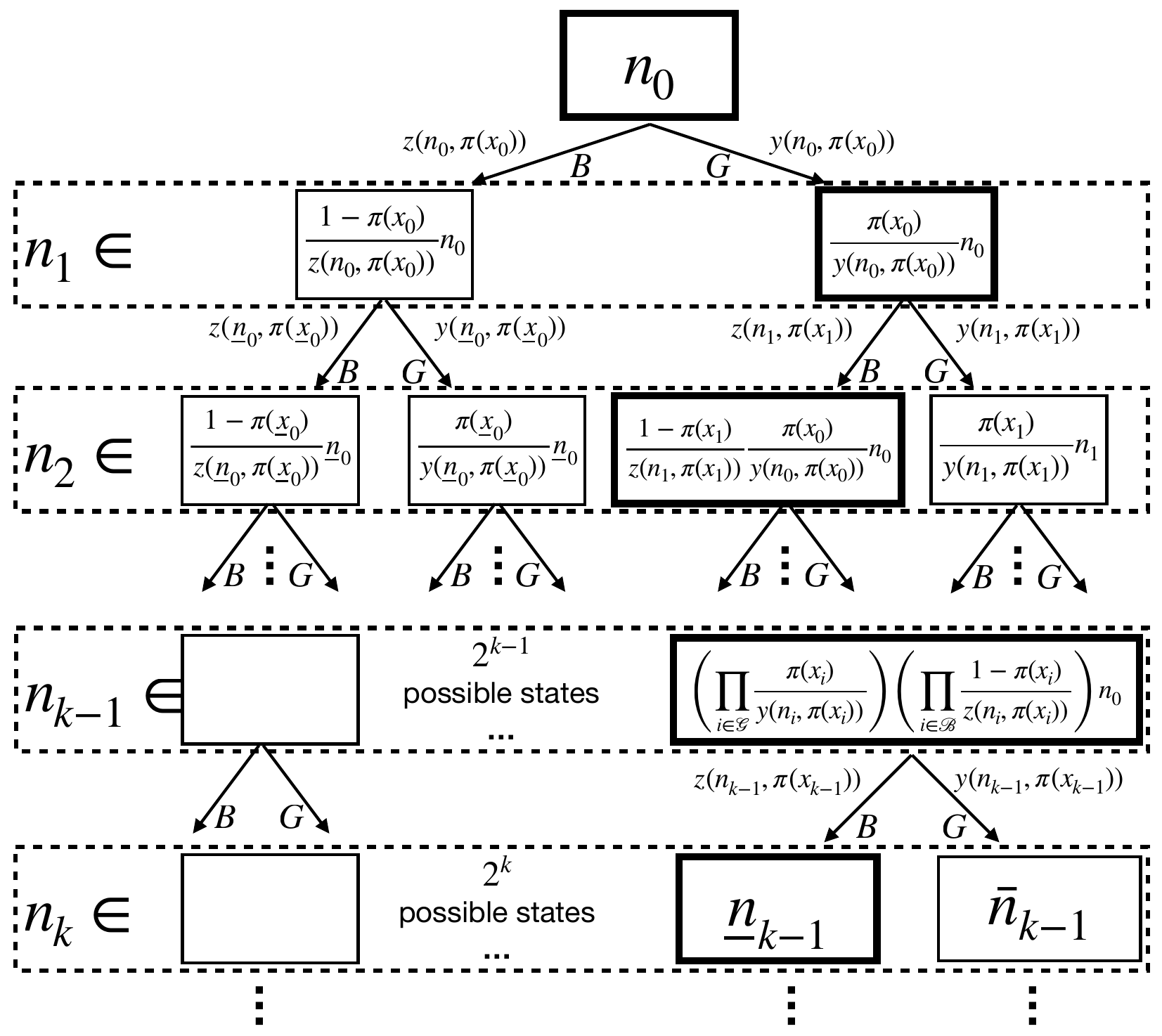}
         \caption{Decision tree $\Lambda_n$ rooted at $n_0$}
         \label{fig:branch_3}
     \end{subfigure}
        \caption{Here we depict the belief evolution in the form of three binary trees. The tree begins with state $x_0,m_0$, and $n_0$, respectively, and level $i$ contains $2^i$ possible states reachable after $i$ epochs. Left branches correspond to action $B$ leading to updated belief $\underline x_0$ (respectively, $\underline m_0$ and $\underline n_0$). Right branches correspond to action $G$ leading to updated belief $\bar x_0$ (respectively, $\bar m_0$ and $\bar n_0$). Each branch is labeled with the corresponding probability, and the bold states indicate an example path, like $\lambda$. In the example path, $0\in\mathcal{G}$ and $1,k-1\in\mathcal{B}$. The bold states in each level of the three trees are counterparts. }
        \label{fig:branch_diag}
    \end{figure}
    
    Now let $\phi^\pi_k(b_0)=\E[r_A(b_k,\pi(b_k)]$ be the expected instantaneous reward at epoch $k$ when beginning with state $b_0$ and applying policy $\pi$ and $\phi^*(x)=\phi^{\pi^*_A}(x)$. Applying Fubini's Theorem, we can rewrite our value function as 
    \begin{align}
        V^{\pi}_A(b_0)=\E\left[\sum_{i=0}^\infty\delta^{i}r_A(b_i,\pi(b_{i}))\right]=\sum_{i=0}^\infty\delta^i\E[r_A(b_i,\pi(b_{i}))]=\sum_{i=0}^\infty\delta^i\phi^{\pi}_i(b_0)
    \end{align}
    Thus, we can rewrite (\ref{convex_bound}), as
    \begin{align}\label{phi_rewrite}
        \sum_{i=0}^\infty\delta^i\phi^*_i(x_0)\leq t\left[\sum_{i=0}^\infty\delta^i\phi^{\tilde\pi}_i(m_0)\right]+ (1-t)\left[\sum_{i=0}^\infty\delta^i\phi^{\tilde\pi}_i(n_0)\right]
    \end{align}
    To prove this, we show that for all $i=0,1,2,\dots$
    \begin{align}\label{phi_inequality}
        \phi^*_i(x_0)\leq t\phi^{\tilde\pi}_i(m_0)+(1-t)\phi^{\tilde\pi}_i(n_0)
    \end{align}
    leading directly to (\ref{phi_rewrite}), (\ref{convex_bound}), and finally the convexity of $V^*(\cdot)$ with respect to public belief. 

    Now let us consider general $i=k$. $\phi_{k-1}^\pi(x_0)$ will have $2^{k-1}$ terms each corresponding to one path of possible actions in the set $\{G,B\}^{k-2}$. Thus, at epoch $k-1$, the path $\lambda$ will yield the following term in $\phi^*_{k-1}(x_0)$
    \begin{align}
        \left(\prod_{i\in\mathcal{G}}y(x_{i},\pi^*_A(x_{i}))\right)\left(\prod_{i\in\mathcal{B}}z(x_{i},\pi^*_A(x_{i}))\right)r_A(x_{k-1}, \pi^*_A(x_{k-1}))
    \end{align}
    We now define $P^*_{\lambda}(x_0)= \left(\prod_{i\in\mathcal{G}}y(x_{i},\pi^*_A(x_{i}))\right)\left(\prod_{i\in\mathcal{B}}z(x_{i},\pi^*_A(x_{i}))\right)$, that is, the probability of following the specified path $\lambda$ when beginning with state $x_0$ and following policy $\pi^*_A$. Again, we will use $\bar x_i$ and $\underline x_i$ to represent updated beliefs from $x_i$ after observing action $G$ or $B$, respectively. Thus, $\bar x_i$ and $\underline x_i$ constitute belief values of the children nodes (in the $(i+1)$th level) of the node with belief value $x_i$ in the $i$-th level. Refer to levels $k-1$ and $k$ in \figref{fig:branch_1}. Considering the children $\bar x_{k-1}$ and $\underline x_{k-1}$ (in the $k$-th level) of the node $x_{k-1}$ (in the $(k-1)$-th level), we obtain the following two terms in $\phi^*_k(x_0)$:
    \begin{align}
        P^*_{\lambda}(x_0)\left[y(x_{k-1},\pi^*_A(x_{k-1}))r_A(\bar x_{k-1},\pi^*_A(\bar x_{k-1}))+z(x_{k-1},\pi^*_A(x_{k-1}))r_A(\underline x_{k-1},\pi^*_A(\underline x_{k-1}))\right]\label{phi_k_terms}
    \end{align}
    Similarly, we define $P^{\tilde\pi}_{\lambda}(m_0)= \left(\prod_{i\in\mathcal{G}}y(m_{i},\tilde\pi(m_{i}))\right)\left(\prod_{i\in\mathcal{B}}z(m_{i},\tilde\pi(m_{i}))\right)$, that is, the probability of following the specified path $\lambda$ when beginning with state $m_0$ and following policy $\tilde\pi$. Because $\tilde\pi(m_i)=\pi^*_A(x_i)$, $P^{\tilde\pi}_{\lambda}(m_0)= \left(\prod_{i\in\mathcal{G}}y(m_{i},\pi^*_A(x_{i}))\right)\left(\prod_{i\in\mathcal{B}}z(m_{i},\pi^*_A(x_{i}))\right)$. We also define $P^{\tilde\pi}_{\lambda}(n_0)$ by replacing $\{m_i\}$ with $\{n_i\}$.
    
    The corresponding terms in $\phi^{\tilde\pi}_k(m_0)$ and $\phi^{\tilde\pi}_k(n_0)$ are
    \begin{align}
        &P^*_{\lambda}(m_0)\left[y(m_{k-1},\pi^*_A(x_{k-1}))r_A(\bar m_{k-1},\pi^*_A(\bar x_{k-1}))+z(m_{k-1},\pi^*_A(x_{k-1}))r_A(\underline m_{k-1},\pi^*_A(\underline x_{k-1}))\right]\label{phi_k_terms_m}\\
        &P^{\tilde\pi}_{\lambda}(n_0)\left[y(n_{k-1},\pi^*_A(x_{k-1}))r_A(\bar n_{k-1},\pi^*_A(\bar x_{k-1}))+z(n_{k-1},\pi^*_A(x_{k-1}))r_A(\underline n_{k-1},\pi^*_A(\underline x_{k-1}))\right]\label{phi_k_terms_n}
    \end{align}
    We will show that 
    \begin{align}
        &P^*_{\lambda}(x_0)\left[y(x_{k-1},\pi^*_A(x_{k-1}))r_A(\bar x_{k-1},\pi^*_A(\bar x_{k-1}))+z(x_{k-1},\pi^*_A(x_{k-1}))r_A(\underline x_{k-1},\pi^*_A(\underline x_{k-1}))\right]\nonumber\\
        &\leq tP^*_{\lambda}(m_0)\left[y(m_{k-1},\pi^*_A(x_{k-1}))r_A(\bar m_{k-1},\pi^*_A(\bar x_{k-1}))+z(m_{k-1},\pi^*_A(x_{k-1}))r_A(\underline m_{k-1},\pi^*_A(\underline x_{k-1}))\right]\nonumber\\
        &+(1-t)P^{\tilde\pi}_{\lambda}(n_0)\left[y(n_{k-1},\pi^*_A(x_{k-1}))r_A(\bar n_{k-1},\pi^*_A(\bar x_{k-1}))+z(n_{k-1},\pi^*_A(x_{k-1}))r_A(\underline n_{k-1},\pi^*_A(\underline x_{k-1}))\right]\label{phi_k_inequality_2}
    \end{align} 
    Similarly, each pair of children for each term in the $(k-1)$-th level of the belief tree can be combined and the corresponding inequalities can be obtained following the same process. Summing the resulting inequalities will yield (\ref{phi_inequality}), leading to (\ref{phi_rewrite}), (\ref{convex_bound}), and the convexity of $V^*(\cdot)$.\\
    
    \emph{Proof of (\ref{phi_k_inequality_2})}
    
    We now focus on the latter of the two terms of (\ref{phi_k_terms}). We will show the following two properties
    \begin{align}
        \underline x_{k-1}&=t\left(\prod_{i\in\mathcal{G}}\frac{y(m_i,\pi^*_A(x_i))}{y(x_i,\pi^*_A(x_i))}\right)\left(\prod_{i\in\mathcal{B}\cup\{k-1\}}\frac{z(m_i,\pi^*_A(x_i))}{z(x_i,\pi^*_A(x_i))}\right)\underline m_{k-1}\nonumber\\
        &\quad\quad+(1-t)\left(\prod_{i\in\mathcal{G}}\frac{y(n_i,\pi^*_A(x_i))}{y(x_i,\pi^*_A(x_i))}\right)\left(\prod_{i\in\mathcal{B}\cup\{k-1\}}\frac{z(n_i,\pi^*_A(x_i))}{z(x_i,\pi^*_A(x_i))}\right)\underline n_{k-1}\label{t_tilde}
    \end{align}
    \begin{align}
        1&=t\left(\prod_{i\in\mathcal{G}}\frac{y(m_i,\pi^*_A(x_i))}{y(x_i,\pi^*_A(x_i))}\right)\left(\prod_{i\in\mathcal{B}\cup\{k-1\}}\frac{z(m_i,\pi^*_A(x_i))}{z(x_i,\pi^*_A(x_i))}\right)\nonumber\\
        &+(1-t)\left(\prod_{i\in\mathcal{G}}\frac{y(n_i,\pi^*_A(x_i))}{y(x_i,\pi^*_A(x_i))}\right)\left(\prod_{i\in\mathcal{B}\cup\{k-1\}}\frac{z(n_i,\pi^*_A(x_i))}{z(x_i,\pi^*_A(x_i))}\right)\label{t_tilde_sum}
    \end{align}
    Defining $\tilde t=t\left(\prod_{i\in\mathcal{G}}\frac{y(m_i,\pi^*_A(x_i))}{y(x_i,\pi^*_A(x_i))}\right)\left(\prod_{i\in\mathcal{B}\cup\{k-1\}}\frac{z(m_i,\pi^*_A(x_i))}{z(x_i,\pi^*_A(x_i))}\right)$, it follows from (\ref{t_tilde}) and (\ref{t_tilde_sum}) that
    \begin{align}
        \underline x_{k-1}=\tilde t\underline m_{k-1}+(1-\tilde t)\underline n_{k-1}\label{x_bar_t}
    \end{align}
    From the expressions for $\tilde t, P^*_{\lambda}(x_0), P^{\tilde\pi}_{\lambda}(m_0)$, it follows that:
    \begin{align}
        \tilde tz(x_{k-1},\pi^*_A(x_{k-1}))P^*_{\lambda}(x_0)&={tz(m_{k-1},\pi^*_A(x_{k-1}))}{}P^{\tilde\pi}_{\lambda}(m_0)\label{p_t_tilde}
    \end{align}
    Similarly, we can show
    \begin{align}
        \tilde tz(x_{k-1},\pi^*_A(x_{k-1}))P^*_{\lambda}(x_0)&={(1-t)z(n_{k-1},\pi^*_A(x_{k-1}))}P^{\tilde\pi}_{\lambda}(n_0)\label{p_t_tilde_n}
    \end{align}
    We apply this as follows to the latter term of (\ref{phi_k_terms}):
    \begin{align}
        &P^*_{\lambda}(x_0)z(x_{k-1},\pi^*_A(x_{k-1}))r_A(\underline x_{k-1},\pi^*_A(\underline x_{k-1}))\nonumber\\
        &\leq P^*_{\lambda}(x_0)z(x_{k-1},\pi^*_A(x_{k-1}))\nonumber\\
        &\quad\left[\tilde tr_A(\underline m_{k-1},\pi^*_A(\underline x_{k-1}))+(1-\tilde t)r_A(\underline n_{k-1},\pi^*_A(\underline x_{k-1}))\right]&(\text{Convexity of }r, \text{Eqn }\ref{x_bar_t})\nonumber\\
        &=t\left[P^{\tilde\pi}_{\lambda}(m_0)z(m_{k-1},\pi^*_A(x_{k-1}))r_A(\underline m_{k-1},\pi^*_A(\underline x_{k-1}))\right]\nonumber\\
        &\qquad\qquad+(1-t)\left[P^{\tilde\pi}_{\lambda}(n_0)z(n_{k-1},\pi^*_A(x_{k-1}))r_A(\underline n_{k-1},\pi^*_A(\underline x_{k-1}))\right]&(\text{Eqns }\ref{p_t_tilde},\ref{p_t_tilde_n})\label{phi_k_ineq_1}
    \end{align}

    Applying the same argument to the first term of (\ref{phi_k_terms}) yields:
    \begin{align}
        &P^*_{\lambda}(x_0)y(x_{k-1},\pi^*_A(x_{k-1}))r_A(\bar x_{k-1},\pi^*_A(\bar x_{k-1}))\nonumber\\
        &\leq t\left[P^{\tilde\pi}_{\lambda}(m_0)y(m_{k-1},\pi^*_A(x_{k-1}))r_A(\bar m_{k-1},\pi^*_A(\bar x_{k-1}))\right]\nonumber\\
        &\qquad\qquad\qquad\qquad+(1-t)\left[P^{\tilde\pi}_{\lambda}(n_0)y(n_{k-1},\pi^*_A(x_{k-1}))r_A(\bar n_{k-1},\pi^*_A(\bar x_{k-1}))\right]\label{phi_k_ineq_2}
    \end{align}
    Summing (\ref{phi_k_ineq_1}) and (\ref{phi_k_ineq_2}) yields (\ref{phi_k_inequality_2}).

    We now complete the proof of (\ref{phi_k_inequality_2}) and, therefore, the entire proof, by proving (\ref{t_tilde}) and (\ref{t_tilde_sum}). \\

    \emph{Proof of (\ref{t_tilde})}
    
    From the recurrence relation in \eqref{private_update} and \eqref{public_update}, we can write $\underline x_{k-1}, \underline m_{k-1}$, and $\underline n_{k-1}$ as the product of all belief updates with $x_0, m_0$, and $n_0$, respectively, as follows. Each term in the products corresponds to the multiplicative state update after a single action. 
    \begin{align}
        \underline x_{k-1}&=\left(\prod_{i\in\mathcal{G}}\frac{\pi^*_A(x_i)}{y(x_i,\pi^*_A(x_i))}\right) \left(\prod_{i\in\mathcal{B}\cup\{k-1\}}\frac{1-\pi^*_A(x_i)}{z(x_i,\pi^*_A(x_i))}\right)x_0\label{x_k_bar_exp}\\
        \underline m_{k-1}&=\left(\prod_{i\in\mathcal{G}}\frac{\pi^*_A(x_i)}{y(m_i,\pi^*_A(x_i))}\right) \left(\prod_{i\in\mathcal{B}\cup\{k-1\}}\frac{1-\pi^*_A(x_i)}{z(m_i,\pi^*_A(x_i))}\right)m_0\label{m_k}\\
        \underline n_{k-1}&=\left(\prod_{i\in\mathcal{G}}\frac{\pi^*_A(x_i)}{y(n_i,\pi^*_A(x_i))}\right) \left(\prod_{i\in\mathcal{B}\cup\{k-1\}}\frac{1-\pi^*_A(x_i)}{z(n_i,\pi^*_A(x_i))}\right)n_0\label{n_k}
    \end{align}
    
    These equations may be best understood via the illustration in \figref{fig:branch_diag}. We then reorganize terms to relate $\underline x_{k-1}$ to $\underline m_{k-1}$ and $\underline n_{k-1}$, using the fact that $x_0=tm_0+(1-t)n_0$.
    \begin{align}
        \underline x_{k-1}&=
        \left(\prod_{i\in\mathcal{G}}\frac{\pi^*_A(x_i)}{y(x_i,\pi^*_A(x_i))}\right) \left(\prod_{i\in\mathcal{B}\cup\{k-1\}}\frac{1-\pi^*_A(x_i)}{z(x_i,\pi^*_A(x_i))}\right)\left[tm_0+(1-t)n_0\right]&(\text{from } \ref{x_k_bar_exp})\nonumber\\
        &=t\left(\prod_{i\in\mathcal{G}}\frac{y(m_i,\pi^*_A(x_i))}{y(x_i,\pi^*_A(x_i))}\right)\left(\prod_{i\in\mathcal{B}\cup\{k-1\}}\frac{z(m_i,\pi^*_A(x_i))}{z(x_i,\pi^*_A(x_i))}\right)\underline m_{k-1}\nonumber\\
        &\quad\quad+(1-t)\left(\prod_{i\in\mathcal{G}}\frac{y(n_i,\pi^*_A(x_i))}{y(x_i,\pi^*_A(x_i))}\right)\left(\prod_{i\in\mathcal{B}\cup\{k-1\}}\frac{z(n_i,\pi^*_A(x_i))}{z(x_i,\pi^*_A(x_i))}\right)\underline n_{k-1}&(\text{from }\ref{m_k}\text{ and }\ref{n_k})\label{prod_coeffs}
    \end{align}
    This proves (\ref{t_tilde}).\\ 

    \emph{Proof of (\ref{t_tilde_sum})}
    
    We first consider the numerator of the first term in (\ref{t_tilde_sum}): $\prod_{i\in\mathcal{G}}y(m_i,\pi^*_A(x_i))\prod_{i\in\mathcal{B}\cup\{k-1\}}z(m_i,\pi^*_A(x_i))$. We now prove that this can be written as $\alpha_{0}m_0+\beta_{0}$ for some $\alpha_{0},\beta_{0}$ which depend only on $\{\pi^*_A(x_i)\}_{i=1,\dots,k-1}$ and the nature of the dependence is determined by the sequence of actions along this path in stages 1 to $k-1$. We will prove this via induction working backwards from the final term (i.e. the ($k-1$)-th term) of the product. Therefore our base case will be the the final term of this product which can be either $y(m_{k-1},\pi^*_A(x_{k-1}))$ or $z(m_{k-1},\pi^*_A(x_{k-1}))$, depending on the preceding action (in the $(k-1)$-th stage). 
    \begin{align}
        y(m_{k-1},\pi^*_A(x_{k-1}))&=1+2m_{k-1}\pi^*_A(x_{k-1})-m_{k-1}-\pi^*_A(x_{k-1})\nonumber\\
        &=(2\pi^*_A(x_{k-1})-1)m_{k-1}+(1-\pi^*_A(x_{k-1}))\nonumber\\
        z(m_{k-1},\pi^*_A(x_{k-1}))&=m_{k-1}+\pi^*_A(x_{k-1})-2m_{k-1}\pi^*_A(x_{k-1})\nonumber\\
        &=(1-2\pi^*_A(x_{k-1}))m_{k-1}+(\pi^*_A(x_{k-1}))\nonumber
    \end{align}
    Thus, the final term of the product is of the form $\alpha_{k-1}m_{k-1}+\beta_{k-1}$ where $\alpha_{k-1}$ and $\beta_{k-1}$ depend only upon $\pi^*_A(x_{k-1})$, and the nature of the dependence is determined entirely by the $(k-1)$-th action (since this action determines whether the final term is $y(\cdot)$ or $z(\cdot)$).

    For the inductive step, assume that $\prod_{i\in\mathcal{G},i>j}y(m_i,\pi^*_A(x_i))\prod_{i\in\mathcal{B}\cup\{k-1\},i>j}z(m_i,\pi^*_A(x_i))$ can be written as $\alpha_{j+1} m_{j+1}+\beta_{j+1}$ for some $\alpha_{j+1},\beta_{j+1}$ which depend only $\{\pi^*_A(x_i)\}_{i>j}$ and the nature of the dependence is determined entirely by the sequence of actions in this path for $i>j$ to $i=k-1$. We will now show that $\prod_{i\in\mathcal{G},i>j-1}y(m_i,\pi^*_A(x_i))\prod_{i\in\mathcal{B}\cup\{k-1\},i>j-1}z(m_i,\pi^*_A(x_i))$ can be written as $\alpha_jm_j+\beta_j$ again with $\alpha_j,\beta_j$ depending only on $\{\pi^*_A(x_i)\}_{i> j-1}$ with the nature of the dependence determined by the sequence of actions on this path for $i>j-1$ to $i=k-1$. We define $f(b,q,a)=\begin{cases}
        y(b,q)&a=G\\z(b,q)&a=B
    \end{cases}$ and $g(q,a)=\begin{cases}
        q&a=G\\1-q&a=B
    \end{cases}$. Thus, 
    \begin{align*}
        &\prod_{i\in\mathcal{G},i>j-1}y(m_i,\pi^*_A(x_i))\prod_{i\in\mathcal{B}\cup\{k-1\},i>j-1}z(m_i,\pi^*_A(x_i))\\
        &=f(m_{j},\pi^*_A(x_{j}),a_{j+1})\prod_{i\in\mathcal{G},i>j}y(m_i,\pi^*_A(x_i))\prod_{i\in\mathcal{B}\cup\{k-1\},i>j}z(m_i,\pi^*_A(x_i))\\
        &=f(m_{j},\pi^*_A(x_{j}),a_{j+1})\left[\alpha_{j+1} m_{j+1}+\beta_{j+1}\right]&(\text{inductive hypothesis})\\
        &=f(m_{j},\pi^*_A(x_{j}),a_{j+1})\left[\alpha_{j+1}\frac{g(\pi^*_A(x_j),a_{j+1})}{f(m_j,\pi^*_A(x_j),a_{j+1}))}m_{j}+\beta_{j+1}\right]&(\eqref{private_update})\\
        &=\alpha_{j+1} g(\pi^*_A(x_j),a_{j+1})m_j+\beta_{j+1} f(m_{j},\pi^*_A(x_{j}),a_{j+1}))
    \end{align*}
    Thus, from the definitions of $f(\cdot),g(\cdot),y(\cdot)$, and $z(\cdot)$, 
    
    $\alpha_j=\begin{cases}
        \alpha_{j+1}\pi^*_A(x_j)+\beta_{j+1}(2\pi^*_A(x_j)-1)&a_{j+1}=G\\
        \alpha_{j+1}(1-\pi^*_A(x_j))+\beta_{j+1}(1-2\pi^*_A(x_j))&a_{j+1}=B
    \end{cases}$ and $\beta_j=\begin{cases}
        \beta_{j+1}(1-\pi^*_A(x_j))&a_{j+1}=G\\
        \beta_{j+1}\pi^*_A(x_j)&a_{j+1}=B
    \end{cases}$
    In both cases, by our inductive hypothesis, $\alpha_{j+1},\beta_{j+1}$ depend only on $\{\pi^*_A(x_i)\}_{i>j}$ and the actions in this path from $i>j$ to $i=k-1$. Thus, the claim holds for $\alpha_j,\beta_j$, completing the induction. 
    
    Now recall that the denominator of (\ref{t_tilde_sum}) is $\prod_{i\in\mathcal{G}}y(x_i,\pi^*_A(x_i))\prod_{i\in\mathcal{B}\cup\{k-1\}}z(x_i,\pi^*_A(x_i))$. Note that the policy and sequence of actions is the same as those for the numerator. Thus, because, as shown, the coefficients $\alpha_{0},\beta_{0}$ depend only on $\{\pi^*_A(x_i)\}$ and the path of actions, the same logic can be applied to write the denominator as $\alpha_{0}x_0+\beta_{0}$ for the same $\alpha_{0},\beta_{0}$. This can be done similarly for the numerator of the second term in (\ref{t_tilde_sum}). Thus, we can reason about the coefficients in (\ref{t_tilde}) as follows:
    \begin{align}
        &t\left(\prod_{i\in\mathcal{G}}\frac{y(m_i,\pi^*_A(x_i))}{y(x_i,\pi^*_A(x_i))}\right)\left(\prod_{i\in\mathcal{B}\cup\{k-1\}}\frac{z(m_i,\pi^*_A(x_i))}{z(x_i,\pi^*_A(x_i))}\right)\nonumber\\
        &\qquad\qquad+(1-t)\left(\prod_{i\in\mathcal{G}}\frac{y(n_i,\pi^*_A(x_i))}{y(x_i,\pi^*_A(x_i))}\right)\left(\prod_{i\in\mathcal{B}\cup\{k-1\}}\frac{z(n_i,\pi^*_A(x_i))}{z(x_i,\pi^*_A(x_i))}\right)\nonumber\\
        &=\frac{t(\alpha_{0}m_0+\beta_0)+(1-t)(\alpha_{0}n_0+\beta_{0})}{\alpha_{0}x_0+\beta_{0}}\nonumber\\
        &=\frac{\alpha_{0}x_0+\beta_{0}}{\alpha_{0}x_0+\beta_{0}}&(\text{Since }x_0=tm_0+(1-t)n_0)\nonumber\\
        &=1\nonumber
    \end{align}
    
    Thus, (\ref{t_tilde_sum}) holds, completing the proof. 
\end{proof}

\subsection{Altruistic myopic policy bound}
\label{sec:app_myopic_alt_bound}
\begin{lem}
\label{myopic_bound_alt}
    \begin{align*}\pi^0_A(b)\leq\pi^*_A(b)\ \forall b\in[0,1]\end{align*}
\end{lem}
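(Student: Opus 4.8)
The plan is to compare $\pi^0_A$ and $\pi^*_A$ pointwise, using the explicit description of the myopic optimum from Theorem~\ref{myopic_alt_policy} and a short Bellman argument for $\pi^*_A$. Recall that $\pi^0_A$ takes only the values $p$ and $1$, with $\pi^0_A(b)=1$ precisely on the open interval $(t_M,1-t_M)$, and that the altruistic planner's precisions lie in $[p,1]$ (Section~\ref{sec:alt_planner_prob}), so $\pi^*_A(b)\ge p$ for every $b$. Consequently the inequality is immediate on $[0,1]\setminus(t_M,1-t_M)$, where $\pi^0_A(b)=p$. So the only case with content is $b\in(t_M,1-t_M)$, and there it suffices to prove the stronger fact $\pi^*_A(b)=1$.

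That interval is nonempty only when $\beta(1)<C(1-p)$, in which case $t_M=\beta(1)/C<1-p$; assume this and fix $b\in(t_M,1-t_M)$. For each feasible $q$ consider the Bellman objective $J(q)=r_A(b,q)+\delta\,\E[V^*_A(b')\mid b,q]$, where $b'$ is the next public belief. Taking $q=1$ makes the private signal perfectly informative, so by \eqref{public_update} we have $b'=1$ with probability $b$ and $b'=0$ with probability $1-b$; the continuation term then vanishes by Lemma~\ref{v_props} (which gives $V^*_A(0)=V^*_A(1)=0$), so $J(1)=r_A(b,1)=-\beta(1)$. For any $q\in[p,1)$, non-positivity of $V^*_A$ (Lemma~\ref{v_props}) gives $J(q)\le r_A(b,q)=-\beta(q)-C\min(b,1-b,1-q)$. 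Hence it is enough to show $\beta(q)+C\min(b,1-b,1-q)>\beta(1)$ for all $q\in[p,1)$: this gives $J(q)<J(1)$, so $q=1$ is the unique maximizer in the optimality equation at $b$, forcing $\pi^*_A(b)=1$.

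To prove $\beta(q)+C\min(b,1-b,1-q)>\beta(1)$ I would split on which term attains the minimum. If $1-q\le\min(b,1-b)$, the minimum is $1-q$, and concavity of $\beta$ together with $\beta(p)=0$ yields the chord bound $\beta(q)\ge\frac{\beta(1)}{1-p}(q-p)$, hence $\beta(1)-\beta(q)\le\frac{\beta(1)}{1-p}(1-q)<C(1-q)$, the last inequality being exactly the regime hypothesis $\beta(1)<C(1-p)$. If instead $1-q>\min(b,1-b)$, the minimum is $\min(b,1-b)$, which exceeds $t_M=\beta(1)/C$ because $b\in(t_M,1-t_M)$, so $C\min(b,1-b)>\beta(1)$ and $\beta(q)\ge0$ finishes it. The one place that needs care is exactly this dichotomy: it is where the concavity of $\beta$ (decreasing marginal cost) and the two structural inequalities — the threshold condition making $(t_M,1-t_M)$ nonempty and the bound $\min(b,1-b)>t_M$ defining it — are all used. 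One should also be careful not to invoke Theorem~\ref{alt_policy}, which is proved later using this lemma; accordingly the conclusion $\pi^*_A\equiv1$ on $(t_M,1-t_M)$ is derived here directly from the optimality equation and Lemma~\ref{v_props} rather than from the three-phase form of $\pi^*_A$.
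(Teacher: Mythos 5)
Your proof is correct, but it takes a genuinely different route from the paper's. The paper argues by contradiction with a one-step deviation: if $\pi^0_A(b)>\pi^*_A(b)$ at some $b$, playing the myopic precision once and then reverting to $\pi^*_A$ strictly improves the value, because the instantaneous reward improves by definition of the myopic optimum while the continuation cannot decrease --- the Bayesian update is a martingale, a higher precision induces a mean-preserving spread of the posterior, and $V^*_A$ is convex by Theorem~\ref{alt_convexity}. The lemma is thus presented as a direct payoff of the convexity theorem, and the argument never needs the explicit form of $\pi^0_A$. You instead bypass Theorem~\ref{alt_convexity} entirely: you invoke the closed form of $\pi^0_A$ from Theorem~\ref{myopic_alt_policy}, dispose of the region where $\pi^0_A=p$ by the convention $\pi^*_A\ge p$, and on $(t_M,1-t_M)$ prove the stronger claim $\pi^*_A(b)=1$ directly from the optimality equation, using only Lemma~\ref{v_props} (non-positivity and $V^*_A(0)=V^*_A(1)=0$), the chord bound from concavity of $\beta$, and the threshold condition $\beta(1)<C(1-p)$. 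Your case split is sound (I checked both branches, including strictness), and your exploitation of the fact that $q=1$ collapses the belief to an absorbing state of zero continuation value is exactly what makes the continuation term computable without convexity. What each buys: your argument is more elementary and incidentally establishes $t_A\le t_M$ with $\pi^*_A\equiv 1$ on $(t_M,1-t_M)$ (part of Theorem~\ref{alt_policy}), but it leans on model-specific structure (binary state, perfect signals absorbing at value $0$); the paper's argument is a generic ``more information cannot hurt a convex value function'' comparison that would survive changes to the signal structure. Two small loose ends in yours, both easily patched: the claim $\pi^*_A(b)\ge p$ is a tie-breaking convention rather than a theorem (though your Bellman bound in fact also rules out $q<p$ on $(t_M,1-t_M)$, since $\beta(q)=0$ and $\min(b,1-b,1-q)>t_M$ there), and you should note explicitly that an optimal stationary policy must attain the maximum in the Bellman equation, which holds here because the supremum over $q$ is attained at $q=1$.
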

Assume by way of contradiction that there exists $b\in[0,1]$ such that $\pi^0_A(b)>\pi^*_A(b)$. We will show that $V^*_A(b)<V_A^\pi(b)$ for a policy $\pi$ we construct, violating the optimality of $V^*_A(\cdot)$. 

Let $\pi$ be such that, when starting at public belief $b$, the policy $\pi$ applies precision $\pi^0_A(b)$ at the current time step and then applies the optimal policy at all future time steps. We now elaborate $V^*_A(b)-V^{\pi}_A(b)$ by breaking the value function into instantaneous reward $r_A$ and future discounted reward. We will refer to $b^+$ and $b^-$ as the updated beliefs after receiving signals $G$ and $B$, respectively when applying the optimal policy and $b_m^+$ and $b_m^-$ as the same when applying the myopic policy.
\begin{align*}
        V^*_A(b)-V^{\pi}_A(b)&=r_A(b,\pi^*_A(b))-r_A(b,\pi^0_A(b))&\\
        &\qquad+\delta y(b,\pi^*_A(b)V^*_A(b^+)+\delta z(b,\pi^*_A(b))V^*_A(b^-)&\\
        &\qquad-\delta y(b,\pi^0_A(b))V^*_A(b_m^+)-\delta z(b,\pi^0_A(b))V^*_A(b_m^-)&\\
        &< \delta y(b,\pi^*_A(b)V^*_A(b^+)+\delta z(b,\pi^*_A(b))V^*_A(b^-)&\\
        &\qquad-\delta y(b,\pi^0_A(b))V^*_A(b_m^+)-\delta z(b,\pi^0_A(b))V^*_A(b_m^-)&(\text{definition of }\pi^0_A)\\
        &\leq 0&(\text{Theorem~\ref{alt_convexity}})
\end{align*}
Here, we applied the fact that the instantaneous reward under $\pi^0_A(\cdot)$ must be greater from the definition of the myopic optimal policy. 

We then relied on the convexity of $V^*_A(\cdot)$ from Theorem \ref{alt_convexity} for the future cost. From \eqref{private_update}, note that $b_m^- \leq b^-\leq b\leq b^+\leq b_m^+$. Along with this convexity, this leads to the last step above and completes the proof. 
\hfill\QEDA

\subsection{Proof of Theorem~\ref{alt_policy}: Optimal altruistic policy}
\label{sec:app_alt_policy}
Consider $V_A^\pi(b)$ with policy $\pi$ applying precision $q$ in the current time step and applying the optimal policy at all future time steps. The $q$ that maximizes $V_A^\pi(b)$ provides the optimum precision starting at public belief $b$. Because the planner never benefits from decreasing precision below the baseline $p$ (see Appendix~\ref{sec:app_welfare_mono}), we restrict $q$ to $[p,1]$ without loss of generality. 

We will let $b^+$ and $b^-$ denote the positive and negative belief updates possible after receiving a signal with precision $q$. Taking the second derivative of $V_A^\pi(b)$ with respect to $q$ for $q\geq\max(b,1-b)$:
\begin{align*}
    \frac{\partial^2}{\partial q^2}V_A^{\pi}(b)&=\begin{cases}
        -\ddot\beta(q)+\delta b^2(1-b)^2\left[\frac{\ddot{V}^{*}_A(b^{+})}{y^3(b,q)}+\frac{\ddot{V}^{*}_A(b^{-})}{z^3(b,q)}\right]&q\geq\max(b,1-b)\\
        -\ddot\beta(q)&otherwise\\
    \end{cases}
\end{align*}
With the convexity and concavity of $V^*_A(\cdot)$ and $\beta(\cdot)$, respectively, the above shows that $V_A^{\pi}(b)$ is convex with respect to $q$ on $[\max(b,1-b),1]$. Thus, on this interval, it is maximized at one of the two extreme points. For $q\in[0.5,\max(b,1-b))$, the derivative of $V_A^\pi(b)$ with respect to $q$ is negative; thus, the optimal choice will be the baseline precision $p$ to minimize the cost. Thus, there are three possible optimal precisions: $p$, $\max(b,1-b)$, and $1$.

Note that the expected value under the latter two candidates is always strictly negative. Under precision $p$, however, the expected value is $0$ at $b=0$ and $b=1$. This, combined with the non-positivity and symmetry of the value function (Lemma~\ref{v_props}) implies the existence of $d_A$ such that $\pi^*_A(b)=p$ for $b\in[0,d_A)\cup(1-d_A]$. Furthermore, from Theorem~\ref{myopic_alt_policy} and applying Lemma~\ref{myopic_bound_alt}, $d_A\in(0,t_M)$ where $t_M$ is defined as in the statement of Theorem~\ref{myopic_alt_policy}.

Finally, Theorem~\ref{myopic_alt_policy} and applying Lemma~\ref{myopic_bound_alt}, also imply the existence of $t_A\in[d_A,t_M]$ such that $\pi^*_A(b)=1$ for $b\in(t_A,1-t_A)$. This completes the proof.

\hfill\QEDA

\subsection{Proof of Theorem~\ref{myopic_biased_policy}: Myopic biased policy}
\label{sec:app_myopic_bias_pol}
\paragraph{Preliminaries}
Applying the action rule from (\ref{actions}), the instantaneous reward (\ref{biased_reward}) can be written as follows:
\begin{align}
\label{biased_reward_expanded}
    r_B(b,q)=\begin{cases}
        -\beta(|q-p|)-Cz(b,q)&q\geq\max(b,1-b)\\
        -\beta(|q-p|)-C&b<1-q\\
        -\beta(|q-p|)&b>q
    \end{cases}
\end{align}

Note that $z(b,q)=b+q-2bq$ increases with respect to $q$ when $b<0.5$ and decreases with respect to $q$ when $b>0.5$. 

At $b=0.5$, $z(b,q)=z(0.5,q)=0.5$ regardless of the precision chosen. Furthermore, at this public belief, $q\geq\max(b,1-b)$ because $q\geq0.5$ for any $q$, falling in the first case of (\ref{biased_reward_expanded}). In this case, the second term of $r_B(b,q)$ is unchanging with respect to $q$, so the planner seeks to maximize only the first term, which is accomplished at $q=p$.

We now prove a series of lemmas (10-14) which will assist us in our proof of Theorem~\ref{myopic_biased_policy}.
\begin{lem}\label{lem9}
    \hfill\\
    If $\pi^0_B(b)\geq\max(b,1-b)$, then $\pi^0_B(b)\in\{\max(b,1-b),p,1\}$. 
\end{lem}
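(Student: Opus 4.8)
The plan is to confine attention to the region $I := [\max(b,1-b),\,1]$. Since the hypothesis says $\pi^0_B(b)\ge\max(b,1-b)$, the point $\pi^0_B(b)$ is a maximizer of $r_B(b,\cdot)$ over all of $[0.5,1]$ that happens to lie in $I$, so it is also a maximizer over $I$; on $I$ the first line of \eqref{biased_reward_expanded} applies, giving $r_B(b,q)=-\beta(|q-p|)-Cz(b,q)$ with $z(b,q)=b+q-2bq$. The single structural fact that makes everything work is that $z(b,\cdot)$ is \emph{affine} in $q$, hence so is $-Cz(b,\cdot)$, while $q\mapsto|q-p|$ is affine on each side of $p$. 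Because $\beta$ is concave, the map $q\mapsto\beta(|q-p|)$ is concave on $[0.5,p]$ and concave on $[p,1]$ (a concave function precomposed with an affine map), so $-\beta(|q-p|)$ is convex on each of these two intervals; adding the affine term $-Cz(b,\cdot)$ preserves convexity. Thus $r_B(b,\cdot)$ is convex on $I\cap[0.5,p]$ and convex on $I\cap[p,1]$.

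Next I would invoke the elementary fact that a function which is convex on each of two abutting closed intervals $[a,c]$ and $[c,d]$ attains its maximum over $[a,d]$ at one of the three points $a$, $c$, $d$ (on each piece the max is at an endpoint of that piece). I would apply this with $a=\max(b,1-b)$, $c=p$, $d=1$ in the case $p\ge\max(b,1-b)$ (so $p\in I$), and with $I\subseteq[p,1]$ a single convex piece with endpoints $\max(b,1-b)$ and $1$ in the case $p<\max(b,1-b)$. Either way, $\max_{q\in I}r_B(b,q)$ is attained at a point of $\{\max(b,1-b),\,p,\,1\}$, and since $\pi^0_B(b)$ is such a maximizer it belongs to this set. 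The degenerate belief $b=0.5$ is handled separately and is already dispatched in the preliminaries: there $z(0.5,q)\equiv 0.5$ is constant in $q$, so $r_B(0.5,\cdot)$ is maximized uniquely at $q=p$.

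The one place I would be careful is the step from "the maximum over each piece is attained at a boundary point" to "the \emph{given} maximizer $\pi^0_B(b)$ is a boundary point." A convex function can attain its maximum at an interior point of a piece only if it is constant on a neighborhood of that point, which here would require $-\beta(|q-p|)$ to be affine on a subinterval with slope exactly cancelling that of $-Cz(b,\cdot)$ — a nongeneric coincidence that is excluded as soon as $\beta$ has strictly decreasing marginal cost, and in the borderline affine case one simply names the boundary maximizer, leaving the statement intact. The rest is routine bookkeeping: checking that $I\cap[0.5,p]$ and $I\cap[p,1]$ are the correct two pieces for every position of $p$ relative to $\max(b,1-b)$, and noting consistency with the monotonicity of $z(b,\cdot)$ in $q$ (increasing for $b<0.5$, decreasing for $b>0.5$) recorded just before the lemma.
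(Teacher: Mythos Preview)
Your proof is correct and follows essentially the same route as the paper: both arguments establish that $r_B(b,\cdot)$ is convex in $q$ on the interval $[\max(b,1-b),1]$ away from the kink at $q=p$, then conclude that any maximizer must lie in $\{\max(b,1-b),p,1\}$. The paper does this via the second derivative $-\ddot\beta(|q-p|)\ge 0$ for $q\neq p$ and then notes the non-differentiability at $p$; you argue piecewise convexity more directly (concave $\beta$ composed with an affine map on each side of $p$, plus an affine term), which has the minor advantage of not requiring $\beta$ to be twice differentiable and of making the ``two abutting convex pieces'' structure explicit. Your extra care about whether \emph{the} maximizer (as opposed to \emph{a} maximizer) lies in the three-point set is a genuine subtlety the paper elides, and your resolution is fine.
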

\begin{proof}
    The reward is differentiable everywhere except $q=p$. Thus, we can write its derivative with respect to the chosen precision as follows (except at $q=p$):
    \begin{align}
    \label{biased_reward_derivative}
        \frac{\partial r_B(b,q)}{\partial q}=\begin{cases}
            -\sign(q-p)\dot\beta(|q-p|)\\
            \ \ -C(1-2b)&q\geq\max(b,1-b)\\
            -\sign(q-p)\dot\beta(|q-p|)&\text{otherwise}
        \end{cases}
    \end{align}

    In the former case, we can write the second derivative of the reward with respect to $q$ as follows, for $q\neq p$:
    \begin{align*}
        \frac{\partial^2}{\partial q^2}r_B(b,q)=-\ddot\beta(|q-p|)
    \end{align*}

    Because $\beta(\cdot)$ is concave, this expression is non-negative, and $r_B(b,q)$ is convex with respect to $q$. Thus, the reward is maximized at one of the extreme points $\{\max(b,1-b),1\}$ or $p$ because $r_B(b,q)$ is not differentiable at $q=p$.
\end{proof}

\begin{lem}\label{lem10}
    \hfill\\
    For $b\in[0,1-p)$, $\pi^0_B(b)\in\{p,1-b\}$.
\end{lem}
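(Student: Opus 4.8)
The plan is to exploit the explicit piecewise form of the myopic biased reward in \eqref{biased_reward_expanded}. For $b\in[0,1-p)$ we have $b<1-p\le 0.5$, hence $\max(b,1-b)=1-b$; moreover every feasible precision satisfies $q\ge 0.5>b$, so the third branch ($b>q$) of \eqref{biased_reward_expanded} is never active. Thus on the feasible set $[0.5,1]$ the reward consists of only two pieces: $r_B(b,q)=-\beta(|q-p|)-C$ for $q\in[0.5,1-b)$, and $r_B(b,q)=-\beta(q-p)-Cz(b,q)$ for $q\in[1-b,1]$ (here $|q-p|=q-p$ since $1-b>p$, which follows from $b<1-p$). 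I would maximize over each piece separately and then compare the two resulting candidates.

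On the first piece the reward depends on $q$ only through $|q-p|$; since $\beta$ is non-negative with $\beta(0)=0$ and, crucially, $p\in[0.5,1-b)$ (again because $b<1-p$), the first piece is maximized at $q=p$, with value $-C$. On the second piece, both $\beta(q-p)$ and $z(b,q)=b+q(1-2b)$ are increasing in $q$ --- the latter strictly so, because $b<0.5$ gives $1-2b>0$ --- so $-\beta(q-p)-Cz(b,q)$ is strictly decreasing on $[1-b,1]$ and is maximized at the left endpoint $q=1-b$. (Equivalently, Lemma~\ref{lem9} already restricts any optimum lying in $[\max(b,1-b),1]=[1-b,1]$ to $\{1-b,p,1\}$; since $p<1-b$ and strict monotonicity excludes $q=1$, only $q=1-b$ survives.)

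Combining the two observations: since $[0.5,1]=[0.5,1-b)\cup[1-b,1]$ and each part attains its supremum --- at $q=p$ and $q=1-b$ respectively --- every maximizer of $r_B(b,\cdot)$ over $[0.5,1]$ is either $p$ or $1-b$, which is exactly the assertion $\pi^0_B(b)\in\{p,1-b\}$ (and, incidentally, the supremum is attained here, so no $\epsilon$-optimal surrogate is needed in this range). I do not anticipate a real obstacle; the only steps requiring care are the branch bookkeeping in \eqref{biased_reward_expanded} (identifying which case applies for which $q$) and verifying the two elementary facts $p\in[0.5,1-b)$ and $1-2b>0$, both immediate from $b\in[0,1-p)$ together with $p\in[0.5,1)$.
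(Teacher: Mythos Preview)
Your proof is correct and follows essentially the same approach as the paper: split $[0.5,1]$ into $[0.5,1-b)$ and $[1-b,1]$, observe that on the first piece the reward is maximized at $q=p$ (since $p<1-b$ lies in this interval and $\beta(0)=0$), and on the second piece use that both $\beta(q-p)$ and $z(b,q)$ are increasing in $q$ (the latter because $b<0.5$) to conclude the maximum is at $q=1-b$. The paper phrases the second half via Lemma~\ref{lem9} and then rules out $q=1$ by the same monotonicity, which you also note as an equivalent route.
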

\begin{proof}
    Because $p>0.5$, for $b\in[0,1-p)$, $\max(b,1-b)=1-b>p$. 
    
    First, consider the case when $\pi^0_B(b)<\max(b,1-b)=1-b$. From (\ref{biased_reward_expanded}), in such instances, the reward is maximized by minimizing $|q-p|$. Thus, the optimal precision lower than $\max(b,1-b)$ is $p$.

    Now consider $\pi^0_B(b)\geq\max(b,1-b)=1-b$. By Lemma~\ref{lem9}, $\pi^0_B(b)\in\{1-b,p,1\}$. What remains is to rule out $\pi^0_B(b)=1$. To do so, we compare $r_B(b,1)$ and $r_B(b,1-b)$. Both fall under the first case of (\ref{biased_reward_expanded}). Because $\beta(\cdot)$ is increasing in its argument and $p<1-b\leq 1$, the first term of the reward from (\ref{biased_reward_expanded}) is greater for precision $1-b$ than for precision $1$. Since $z(b,q)$ is increasing with respect to $q$ for $b<0.5$, $r_B(b,1)\leq r_B(b,\max(b,1-b))$ so we can rule out precision 1. This completes the proof.
\end{proof}

\begin{lem}\label{lem11}
    \hfill\\
    For $b\in(p,1]$, $\pi^0_B(b)=p$.
\end{lem}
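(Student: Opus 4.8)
The plan is to show that the baseline precision $q=p$ attains the largest value $r_B(b,\cdot)$ can take on this range, namely $0$, and that it is the \emph{unique} maximizer; this gives $\pi^0_B(b)=p$ (and in particular an optimal policy exists here). First record the trivial bound $r_B(b,q)\le 0$ for all $b,q$, which is immediate from (\ref{biased_reward}) since $\beta(\cdot)\ge 0$ and $\P(a_i=B|b_i,q_i)\in[0,1]$. Now, since $p>0.5$ and $b\in(p,1]$ we have $b>p$, so taking $q=p$ lands in the third branch of (\ref{biased_reward_expanded}) (the cascade-on-$G$ branch $b>q$), which gives $r_B(b,p)=-\beta(0)=0$; hence $\sup_{q\in[0.5,1]}r_B(b,q)=0$ and it is attained at $q=p$.

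For uniqueness I would split the remaining precisions into $q<b$ and $q\ge b$. If $q\in[0.5,b)\setminus\{p\}$ we are still in the cascade-on-$G$ branch, so $r_B(b,q)=-\beta(|q-p|)<0$ because $\beta$ is increasing with $\beta(0)=0$. If $q\ge b$, then $b>0.5$ forces $\max(b,1-b)=b$, so we are in the first branch and $r_B(b,q)=-\beta(q-p)-Cz(b,q)$; here $q\ge b>p$ gives $\beta(q-p)>0$, while $z(b,q)=b+q(1-2b)\ge b+(1-2b)=1-b\ge 0$ (using $q\le 1$ and $1-2b<0$), so $r_B(b,q)\le -\beta(q-p)<0$. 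Therefore $q=p$ is the unique maximizer of $r_B(b,\cdot)$, i.e. $\pi^0_B(b)=p$.

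There is essentially no obstacle here: the only points that need care are checking that the middle (cascade-on-$B$) branch of (\ref{biased_reward_expanded}) is vacuous on $(p,1]$ since $q\ge 0.5>1-b$, and the sign bookkeeping for $z(b,q)$ when $b>0.5$ — both are one-line verifications.
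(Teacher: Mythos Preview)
Your proof is correct and follows the same idea as the paper's: show $r_B(b,p)=0$ via the third branch of (\ref{biased_reward_expanded}), and show every other $q$ gives strictly negative reward. The paper's version is slightly terser---it simply notes that for any $q\neq p$ the term $-\beta(|q-p|)$ alone forces $r_B(b,q)<0$, so the case split $q<b$ versus $q\ge b$ and the sign check on $z(b,q)$ are not needed; but your more explicit bookkeeping is not wrong, just redundant. One small quibble: you write ``since $p>0.5$'', but the model only guarantees $p\ge 0.5$; fortunately this does not matter, since $b\in(p,1]$ already gives $b>p\ge 0.5$ and hence $\max(b,1-b)=b$.
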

\begin{proof}
    The reward is strictly negative for any $q\neq p$ because of the term $-\beta(|q-p|)$. For $b\in(p,1]$ and $q=p$, $p<\max(b,1-b)=b$, leading to the third case of (\ref{biased_reward_expanded}) and yielding reward 0. Thus, the optimal precision is $p$.
\end{proof}

\begin{lem}\label{lem12}
    \hfill\\
    For $b\in[1-p,0.5]$, $\pi^0_B(b)\in\{p,1-b\}$.
\end{lem}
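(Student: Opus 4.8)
The plan is to exploit the piecewise form of $r_B$ in \eqref{biased_reward_expanded} together with Lemma~\ref{lem9}. For $b\in[1-p,0.5]$ we have $\max(b,1-b)=1-b$, and since $b\ge 1-p$ we also have $1-b\le p$, so the baseline precision $p$ itself satisfies $p\ge\max(b,1-b)$ and lies in $[1-b,1]$. I would split the precision axis at $q=1-b$: on $[0.5,1-b)$ the reward is in the middle branch of \eqref{biased_reward_expanded}, $r_B(b,q)=-\beta(p-q)-C$; on $[1-b,1]$ it is in the first branch, $r_B(b,q)=-\beta(|q-p|)-Cz(b,q)$.

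First I would dispose of the low-precision region $[0.5,1-b)$. There $-\beta(p-q)$ is increasing in $q$ (as $\beta$ is increasing), so the supremum over this interval equals $-\beta(p-(1-b))-C$, approached but not attained as $q\uparrow 1-b$. Evaluating the first branch at the endpoint gives $r_B(b,1-b)=-\beta(p-(1-b))-Cz(b,1-b)$, and since $z(b,1-b)=1-2b(1-b)<1$ for $b\in(0,1)$ (note $b\ge 1-p>0$), we obtain $r_B(b,1-b)>-\beta(p-(1-b))-C$. Hence every $q\in[0.5,1-b)$ is strictly dominated by $q=1-b$, so the global maximum, if attained, lies in $[1-b,1]=[\max(b,1-b),1]$; and $r_B(b,\cdot)$ being continuous on this compact interval, the maximum is indeed attained there. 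In particular $\pi^0_B(b)$ exists and $\pi^0_B(b)\ge\max(b,1-b)$.

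Now Lemma~\ref{lem9} applies directly and yields $\pi^0_B(b)\in\{1-b,\,p,\,1\}$. It remains to rule out $q=1$. Since $b<0.5$, the map $q\mapsto z(b,q)=b+q(1-2b)$ is increasing, so $z(b,1)\ge z(b,p)$; combined with $\beta(1-p)\ge 0=\beta(0)$ this gives $r_B(b,1)=-\beta(1-p)-Cz(b,1)\le -Cz(b,p)=r_B(b,p)$, so $q=1$ is (weakly) dominated by $q=p$. Therefore the maximum is attained at some $q\in\{p,1-b\}$, i.e. $\pi^0_B(b)\in\{p,1-b\}$, as claimed.

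I do not expect a serious obstacle; the only delicate point is the branch switch at $q=1-b$, where one must verify that the dominated supremum over $[0.5,1-b)$ lies \emph{strictly} below $r_B(b,1-b)$ so that the optimum genuinely falls in the closed upper interval — this is exactly the inequality $z(b,1-b)<1$, which holds because $b\in[1-p,0.5]\subset(0,1)$. Everything else is a routine application of Lemma~\ref{lem9} and the monotonicity of $z(b,\cdot)$ for $b<0.5$.
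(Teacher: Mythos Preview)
Your proof is correct and follows essentially the same route as the paper's: split at $q=1-b$, show the low-precision branch is dominated (you compare to $q=1-b$, the paper compares to $q=p$, both work via $z(b,1-b)<1$), then invoke Lemma~\ref{lem9} and eliminate $q=1$ using the monotonicity of $z(b,\cdot)$ for $b\le 0.5$. One cosmetic slip: you write ``since $b<0.5$'' when ruling out $q=1$, but the interval includes $b=0.5$; there $z(0.5,\cdot)$ is constant so the weak inequality $r_B(b,1)\le r_B(b,p)$ still holds and your conclusion is unaffected.
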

\begin{proof} 

    When $p=0.5$, the interval $[1-p,0.5]$ contains only $\{0.5\}$. At $b=0.5$ $p=1-b$, thus $\pi^0_B(b)=p=1-b=0.5$. We deal with $p>0.5\geq b$ for the remainder of the proof. 
    
    For $b\in[1-p,0.5]$, $\max(b,1-b)=1-b\leq p$. Thus, from the first case of (\ref{biased_reward_expanded}), precision $p$ yields reward $r_B(b,p)=-Cz(b,p)$. We now consider two cases: $q<\max(b,1-b)$ and $q\geq\max(b,1-b)$.
    
    Suppose $q<\max(b,1-b)=1-b$. This leads to the second case of (\ref{biased_reward_expanded}) and yields reward $-\beta(|q-p|)-C$. Furthermore, $z(b,q)$ is increasing with respect to $q$ for $b<0.5$ and takes values between $z(b,0.5)=0.5$ and $z(b,1) = 1-b$. Thus, $z(b,q)<1$ for all $q$. Therefore, the reward from policy $q<1-b$ is strictly lower than $r_B(b,p)$, and we can rule out this case. 

    When $q\geq\max(b,1-b)$, applying Lemma~\ref{lem9} yields that the remaining candidate precisions are $\{1-b,p,1\}$. We rule out precision $1$ by comparing it to precision $p$. The first term of the reward from (\ref{biased_reward_expanded}) is greater for precision $p$ than for precision $1$. Since $z(b,q)$ is increasing with respect to $q$ for $b<0.5$, the second term is also greater for precision $p$, and we can rule out precision $1$. This completes the proof. 
\end{proof}

We now define the notion of $\epsilon$-optimal policies.

\begin{defn}{$\epsilon$-optimal policies}\label{eps_opt_defn}

    An $\epsilon$-optimal policy $\pi^{\epsilon}(b)$ is any policy such that the following holds
    \begin{align*}
        V^*(b)\leq V^{\pi^{\epsilon}}(b)+\epsilon
    \end{align*}
    for $\epsilon>0$.
\end{defn}

Note that from the definition, any optimal policy is also $\epsilon$-optimal for any $\epsilon>0$. Thus, $\pi^0_B(b)$ is also $\epsilon$-optimal

Following this definition, let $\pi^{\epsilon,0}_B(b)$ denote an $\epsilon$-optimal myopic policy for the biased planner. 

\begin{lem}\label{lem13}
    \hfill\\
    For $b\in(0.5,p]$ and any $\epsilon>0$, there are two possibilities:
    \begin{enumerate}
        \item $\pi^0_B(b)\in\{p,1\}$, or
        \item $\pi^{\epsilon,0}_B(b)=b-\epsilon$
    \end{enumerate}
\end{lem}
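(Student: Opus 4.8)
The plan is to work entirely in the myopic regime, where $V^\pi_B(b)=r_B(b,\pi(b))$ and $V^*_B(b)=\sup_{q\in[0.5,1]}r_B(b,q)$, and to analyse the single-variable function $q\mapsto r_B(b,q)$ directly from its expanded form \eqref{biased_reward_expanded}. Fix $b\in(0.5,p]$ (so in particular $p>0.5$ and $1-b<0.5\le b\le p$). Since $q\ge 0.5>1-b$, the middle case of \eqref{biased_reward_expanded} never applies, and the feasible interval splits cleanly into $A:=[0.5,b)$, on which $r_B(b,q)=-\beta(p-q)$ (using $q<b\le p$), and $\bar A:=[b,1]$, on which $r_B(b,q)=-\beta(|q-p|)-Cz(b,q)$.

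I would first handle $\bar A$. On $[b,p]$ the term $-\beta(|q-p|)=-\beta(p-q)$ is increasing in $q$, and because $b>0.5$ makes $z(b,q)=b+q-2bq$ decreasing in $q$, the term $-Cz(b,q)$ is increasing as well; hence $r_B(b,\cdot)$ is increasing on $[b,p]$ with maximum at $q=p$. On $[p,1]$ we have $\partial_q^2 r_B(b,q)=-\ddot\beta(|q-p|)\ge 0$ by concavity of $\beta$ — this is exactly the convexity already recorded in Lemma~\ref{lem9} — so $r_B(b,\cdot)$ is convex there and attains its maximum at $q\in\{p,1\}$. Therefore $\max_{\bar A}r_B(b,\cdot)=\max\big(r_B(b,p),r_B(b,1)\big)$, attained. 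On $A$, $r_B(b,\cdot)=-\beta(p-q)$ is strictly increasing, so $\sup_A r_B(b,\cdot)=-\beta(p-b)$, but this value is \emph{not} attained on $A$.

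Combining, $V^*_B(b)=\max\big(-\beta(p-b),\,r_B(b,p),\,r_B(b,1)\big)$. If $\max(r_B(b,p),r_B(b,1))\ge-\beta(p-b)$, the supremum is attained at $q=p$ or $q=1$, so $\pi^0_B(b)\in\{p,1\}$ — possibility (1). Otherwise $V^*_B(b)=-\beta(p-b)$ strictly, and I claim it is not attained anywhere: on $A$ it is only approached as $q\uparrow b$; at the boundary point $q=b\in\bar A$ one computes $r_B(b,b)=-\beta(p-b)-2Cb(1-b)<-\beta(p-b)$ (a genuine downward jump of $r_B(b,\cdot)$ at $q=b$); and on the rest of $\bar A$ the maximum is strictly below $-\beta(p-b)$ by the case hypothesis. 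So no optimal precision exists. Since $r_B(b,b-\epsilon)=-\beta(p-b+\epsilon)\to-\beta(p-b)=V^*_B(b)$ as $\epsilon\downarrow 0$ by continuity of $\beta$, the policy $q=b-\epsilon$ is $\epsilon$-optimal for sufficiently small $\epsilon$ — possibility (2).

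I expect the only delicate point to be the bookkeeping of the discontinuity of $r_B(b,\cdot)$ at $q=b$: it is precisely the $-Cz(b,b)=-2Cb(1-b)$ penalty incurred once the precision is large enough to flip the agent's action that prevents the supremum $-\beta(p-b)$ from being attained in case (2), forcing the passage to $\epsilon$-optimal policies; one must also confirm that $q=b$ is dominated by $q=p$ (immediate since $r_B(b,\cdot)$ is increasing on $[b,p]$), so that the optimum over $\bar A$ genuinely reduces to $\{p,1\}$ as claimed. The monotonicity on $[b,p]$ and the convexity on $[p,1]$ are routine consequences of $\operatorname{sign}(\partial_q z)=\operatorname{sign}(1-2b)<0$ and the concavity of $\beta$, respectively.
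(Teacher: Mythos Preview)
Your proof is correct and follows essentially the same approach as the paper: split the control interval at $q=\max(b,1-b)=b$, show on $[b,1]$ the reward is maximised at $q\in\{p,1\}$ (the paper invokes Lemma~\ref{lem9} to get candidates $\{b,p,1\}$ and then rules out $q=b$ by comparison with $q=p$, while you equivalently show monotonicity on $[b,p]$ and convexity on $[p,1]$), and on $[0.5,b)$ the reward $-\beta(p-q)$ is strictly increasing with supremum not attained, forcing the $\epsilon$-optimal policy $q=b-\epsilon$. Your treatment of the downward jump $-Cz(b,b)=-2Cb(1-b)$ at $q=b$ is a bit more explicit than the paper's, but the logic is identical.
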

\begin{proof}
    For $b\in(0.5,p]$, $\max(b,1-b)=b\leq p$. We consider two cases: $q\geq\max(b,1-b)$ or $q<\max(b,1-b)$.

    In the first case, Lemma~\ref{lem9} yields that $\pi^0_B(b)\in\{b,p,1\}$. We rule out precision $b$ by comparing it to precision $p$. The first terms of the reward from (\ref{biased_reward_expanded}) is greater for precision $p$. Furthermore, since $z(b,q)$ is decreasing with respect to $q$ for $b>0.5$, the second term of the reward is also greater for precision $p$. Thus, we can rule out precision $b$. This yields the first claim. 

    In the latter case, $q<\max(b,1-b)=b$. Therefore, (1) since $q\geq0.5$, $q\in[0.5,b)$, and (2) the third case of (\ref{biased_reward_expanded}) applies leading to reward $-\beta(|q-p|)$. Thus, the reward $r_B(b,q)$ is an increasing function of $q\in[0.5,b)$ because $0.5<b\leq p$. This leads directly to the second claim.
\end{proof}

\begin{rem}
    As noted in \cite{PUTERMAN1990331} Section 2.3.1, an optimal policy may not exist for infinite action spaces because the supremum may not be attained. When an optimal policy does not exist, we instead seek $\epsilon$-optimal policies, i.e., policies that yield reward within $\epsilon>0$ of the supremum. In particular, this is true for case 2 of Lemma~\ref{lem13}.
\end{rem}
\paragraph{Proof of Theorem~\ref{myopic_biased_policy}}

We consider the myopic biased policy in 4 cases: $b\in[0,1-p)$, $b\in[1-p,0.5)$, $b\in(0.5,p]$, and $b\in(p,1]$, making use of the fact that $p\geq0.5$. Based on Lemmas~\ref{lem10}-\ref{lem13}, these cases have the following candidate policies: $\{p,1-b\}$, $\{1-b,p\}$, $\{1,p,b-\epsilon\}$, and $\{p\}$, respectively.

\emph{\underline{Case 1}: }$b\in[0,1-p)$, $\pi^0_B(b)\in\{p,1-b\}$ (Lemma~\ref{lem10})

We will show the existence of a threshold $t_1$ such that $\pi^0_B(b)=p$ for $b\in[0,t_1]$ and $\pi^0_B(b)=1-b$ for $b\in(t_1,1-p)$.

We can express the reward under the two candidate policies as follows by applying (\ref{biased_reward_expanded}):
\begin{align*}
    r_B(b,p)&=-C\\
    r_B(b,1-b)&=-\beta(1-b-p)-C(1-2b(1-b))
\end{align*}
The derivative of $r_B(b,1-b)$ with respect to $b$ is
\begin{align*}
    \frac{\partial}{\partial b}r_B(b,1-b)&=\dot\beta(1-p-b)+2C(1-2b)
\end{align*}
which is positive as $\beta(\cdot)$ is increasing and $b<1-p\leq0.5$.
Furthermore, when evaluated at $b=0$ and $b=1-p$ we obtain
\begin{align*}
    r_B(0,1)&=-\beta(1-p)-C<r_B(b,p)\\
    r_B(1-p,p)&=-C(1-2p(1-p))>r_B(b,p)
\end{align*}

Thus, $r_B(b,1-b)$ is increasing on $[0,1-p)$ with $r_B(b,1-b)<r_B(b,p)$ for $b=0$ and vice versa for $b=1-p$. Because $\beta(\cdot)$ is continuous, so are both reward functions. Thus, there exists $t_1\in(0,1-p)$ such that $\pi^0_B(b)=p$ for $b\in[0,t_1]$ and $\pi^0_B(b)=1-b$ for $b\in(t_1,1-p)$.

\emph{\underline{Case 2}: }$b\in[1-p,0.5]$, $\pi^0_B(b)\in\{1-b,p\}$ (Lemma~\ref{lem12})

If $p=0.5$, this interval contains only $\{0.5\}$, and, for $b=0.5$, $1-b=p$. Thus, the two candidate policies are equivalent, and the optimal policy is $\pi^0_B(b)=p$. For the remainder of this case, we consider $p>0.5$.

We now prove the existence of $t_2,t_3$ such that $\pi^0_B(b)=1-b$ for $b\in[t_2,t_3)$, and $\pi^0_B(b)=p$ for $b\in[1-p,t_2)\cup[t_3,0.5]$.

We can express the reward under the candidate policies as follows by applying (\ref{biased_reward_expanded}):
\begin{align*}
    r_B(b,1-b)&=-\beta(p-1+b)-Cz(b,1-b)\\
    r_B(b,p)&=-Cz(b,p)
\end{align*} 

This leads to the following condition for the optimal policy to be $1-b$:
\begin{align}
    r_B(b,1-b)&\geq r_B(b,p)\nonumber\\
    \beta(p-1+b)&\leq C\left[z(b,p)-z(b,1-b)\right]\nonumber\\
    \beta(p-1+b)&\leq C(p-1+b)(1-2b)\label{opt_inequal}
\end{align}

The LHS of (\ref{opt_inequal}) is a concave, increasing function of $b$ on [1-p,0.5). The RHS of (\ref{opt_inequal}) is a concave parabola that increases in the first half of $[1-p,0.5)$ and decreases in the second half. 

Now consider the two halves of the interval $[1-p,0.75-0.5p]$ and $[0.75-0.5p,0.5)$. In the first half, the LHS and RHS of (\ref{opt_inequal}) are increasing and concave, with the RHS also being a parabola. Thus, in the first half, the two intersect at most two points (one of which is at $b=1-p$). In the second half, the LHS of (\ref{opt_inequal}) is increasing while the LHS is decreasing. Thus, they intersect at most once. Let the two possible intersections for $b>1-p$ be $t_2$ and $t_3$ with $t_2\leq t_3$. Thus, what remains is to determine the optimal on each of $(1-p,t_2)$, $(t_2,t_3)$, and $(t_3,0.5)$.
At $b=0.5$, $r_B(b,p)>r_B(b,1-b)$ because $p>0.5$, thus $\pi^0_B(b)=p$ for $b\in(t_3,0.5)$. Subsequently, $\pi^0_B(b)=1-b$ for $(t_2,t_3)$, and $\pi^0_B(b)=p$ for $(1-p,t_2)$.

\emph{\underline{Case 3}: }$b\in(0.5,p]$, $\pi^{\epsilon,0}_B(b)\in\{1,p,b-\epsilon\}$ (Lemma~\ref{lem13})

In this case, we show the existence of $0.5\leq t_4\leq t_5\leq p$ such that the optimal precisions on intervals $(0.5,t_4]$, $(t_4, t_5)$, and $[t_5,p)$ are $p$, $1$, and $b-\epsilon$, respectively.

We can express the reward under the candidate policies as follows by applying (\ref{biased_reward_expanded}):
\begin{align*}
    r_B(b,1)&=-\beta(1-p)-C(1-b)\\
    r_B(b,p)&=-Cz(b,p)\\
    r_B(b,b-\epsilon)&=-\beta(p-b+\epsilon)
\end{align*}

When evaluated at $b=p$, $r_B(b,b-\epsilon)$ is arbitrarily close to $0$, and, therefore, $b-\epsilon$ is the optimal precision on some interval $(t_5,p]$ for $0.5\leq t_5<p$. 

Now, note that $r_B(b,1)$ and $r_B(b,p)$ are linear functions in $b$ while $r_B(b,b-\epsilon)$ is convex in $b$ (from the concavity of $\beta$). Thus, $r_B(b,b-\epsilon)-r_B(b,p)$ and $r_B(b,b-\epsilon)-r_B(b,1)$ are convex and, as a result, have convex level sets. This implies that the set on which both of these expressions are positive, i.e., the set for which $b-\epsilon$ is the optimal policy, must also be convex. Therefore, $b-\epsilon$ cannot be optimal on any subset of $(0.5,t_5)$. 

What remains is to compare precisions $1$ and $p$ on the interval $(0.5,t_5]$. To do so, we inspect the values of $r_B(b,1)$ and $r_B(b,p)$ at 0.5 and their derivatives.
\begin{align*}
    r_B(0.5,1)&=-\beta(1-p)-0.5C\\
    r_B(0.5,p)&=-0.5C
\end{align*}

Because $r_B(0.5,1)<r_B(0.5,p)$ there exists interval $(0.5,t_4)$ with $t_4\leq t_5$ such that $\pi^0_B(b)=p$. Furthermore, because both are linear in $b$, there is at most one crossover point where their ordering changes. Thus, if $t_4<t_5$, then $pi^0_B(b)=1$ on $(t_4,t_5)$.

\emph{\underline{Case 4}: }$b\in(p,1]$, $\pi^0_B(b)=p$

This follows directly from Lemma~\ref{lem11} and completes the proof of Theorem~\ref{myopic_biased_policy}.
\hfill\QEDA

\subsection{Biased value function monotonicity}
\begin{lem}
\label{biased_mono}
    For any $b_1,b_2\in[0,1]$ such that $b_1\leq b_2$, 
    $$V^*_B(b_1)\leq V^*_B(b_2)$$
\end{lem}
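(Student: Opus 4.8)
The plan is to prove monotonicity of $V^*_B$ via value iteration, by showing that the Bellman optimality operator $T$ maps the cone $\mathcal{C}$ of non-positive, non-decreasing functions on $[0,1]$ into itself. Since the rewards are bounded ($\beta(|q-p|)\le\max_{q\in[0.5,1]}\beta(|q-p|)<\infty$), $T$ is a $\delta$-contraction in sup-norm and $V^*_B=\lim_{n\to\infty}T^n\mathbf{0}$ with $\mathbf 0\in\mathcal C$; hence $V^*_B$ is a pointwise limit of non-decreasing functions, so non-decreasing. Non-positivity is immediate because $r_B(b,q)\le 0$ for all $(b,q)$. I would also record, by the same induction, the auxiliary fact that $(T^n\mathbf 0)(b)=0$ whenever $b>p$: there the biased planner may play $q=p$, which by \eqref{actions} triggers a costless herd on $G$ that leaves the belief fixed, so $(TV)(b)\ge\delta V(b)$, forcing the value to $0$ from below while non-positivity bounds it above. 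Monotonicity of the per-stage reward in $b$ for fixed $q$ is elementary from \eqref{biased_reward_expanded}: $\P(a_i=B\mid b,q)$ equals $1$ on $[0,1-q)$, equals $z(b,q)=b+q-2bq$ on $[1-q,q]$ (non-increasing since $q\ge 0.5$, with the endpoint values $1-2q(1-q)\le 1$ and $2q(1-q)\ge 0$ joining up monotonically), and equals $0$ on $(q,1]$. So, given $b_1\le b_2$, a non-decreasing non-positive $V$, and a control $q$ that is (near-)optimal at $b_1$, it suffices to produce a control $q'$ at $b_2$ with $r_B(b_2,q')+\delta\,\mathbb E[V(b')\mid b_2,q']\ \ge\ r_B(b_1,q)+\delta\,\mathbb E[V(b')\mid b_1,q]$.

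The crux — and the reason one cannot simply cite a standard monotone-MDP theorem — is that the transition kernel \eqref{public_update} is \emph{not} stochastically monotone in $b$: when $b$ lies just inside the revelation window $[1-q,q]$, a signal realization can push the posterior below $1-q$ via \eqref{private_update}, i.e.\ below where a slightly smaller belief would have frozen under a herd. I would therefore run a region analysis, taking $q'=q$ whenever it works and re-selecting otherwise. If $b_1,b_2\in[1-q,q]$ the choice $q'=q$ succeeds: using \eqref{private_update}, $b\mapsto\overline b$ and $b\mapsto\underline b$ (posteriors after $G$ and $B$ at precision $q$) are increasing, and the endpoint identities $\overline{\,1-q\,}=\underline{\,q\,}=1/2$ give $\overline{b_1}\ge 1/2\ge\underline{b_2}$; this orders the two-point posterior laws so that the law at $b_2$ first-order stochastically dominates the law at $b_1$, and together with $z(b_2,q)\le z(b_1,q)$ it yields the inequality. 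If $b_1<1-q$ (a herd on $B$ at $b_1$, so its continuation is $V(b_1)$ and its reward carries the full $-C$ term), non-positivity of $V$ gives slack: for $b_2<1-q$ use $q'=q$; otherwise play $q'=p$ at $b_2$, which costs nothing and produces a herd on $G$ (value $0$) if $b_2>p$ and otherwise a herd on $B$ or a revelation at precision $p$ whose reward-plus-continuation is at least $-C+\delta V(b_2)\ge -C+\delta V(b_1)$. The delicate remaining case is $b_1\in[1-q,q]$ with $b_2>q$ (a herd on $G$ at $b_2$): here $q'=q$ works unless $\overline{b_1}>b_2$, a configuration that forces $b_1$ near the top of the window and $b_2\le p$ just above $q$; in that event take $q'=b_2$ (revelation at $b_2$), exploiting that $\overline{b_2}|_{q'=b_2}=b_2^2/(b_2^2+(1-b_2)^2)$ exceeds $\overline{b_1}$, that $\underline{b_2}|_{q'=b_2}=1/2\ge\underline{b_1}$, that $z(b_2,b_2)=2b_2(1-b_2)<2q(1-q)\le z(b_1,q)$, and that $\beta(|b_2-p|)\le\beta(|q-p|)$ since $q<b_2\le p$ — again giving stochastic dominance of the posterior and a no-smaller reward. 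The case $b_2>p$ is dispatched by the auxiliary fact ($V(b_2)=0$ is maximal).

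I expect the main obstacle to be exactly this failure of stochastic monotonicity and the attendant bookkeeping: the control $q'$ at $b_2$ must be chosen differently across several overlapping regions delimited by $p$, $q$, $1-q$ and $1-b_2$, and in each region one must simultaneously control the cost $\beta(|q'-p|)$, the mistake probability $z$, and the first-order ordering of the (one- or two-point) posterior laws. Fortunately, verifying that ordering reduces in every case to the monotonicity of $b\mapsto\overline b$ and $b\mapsto\underline b$ together with the endpoint identities $\overline{\,1-q\,}=\underline{\,q\,}=1/2$, so the argument, while tedious, is mechanical once the regions are laid out; wrapping up by letting the near-optimal gap at $b_1$ tend to $0$ and passing to the limit in value iteration then gives $V^*_B(b_1)\le V^*_B(b_2)$.
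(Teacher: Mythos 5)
You correctly identify the real obstacle---the transition kernel \eqref{public_update} is not stochastically monotone in $b$---and your Case~A and your ``delicate case'' (re-selecting $q'=b_2$, using the endpoint identities $\overline{\,1-q\,}=\underline{\,q\,}=1/2$) check out. But the proposal has a genuine gap at its foundation: the Bellman operator does \emph{not} map the cone $\mathcal C$ of non-positive, non-decreasing functions into itself, so the value-iteration induction cannot close as stated. Concretely, take $p=0.8$, $C=1$, $\beta(x)=kx$, $b_1=0.19<1-p<b_2=0.21$, and $V=-M\,\mathbf 1_{[0,\,0.1)}\in\mathcal C$. At $b_1$ the free herd $q=p<1-b_1$ gives $(TV)(b_1)=-C$, and every revelation from $b_1$ (which needs $q\ge 0.81$) sends the $B$-posterior below $0.1$ and incurs $-\delta z M$. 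At $b_2$, \emph{every} admissible revelation $q'\ge 1-b_2=0.79$ also sends the $B$-posterior below $0.1$ (its supremum over $q'$ is $b_2^2/(b_2^2+(1-b_2)^2)\approx 0.066$), while every herd now costs $\beta(p-q')\ge\beta(0.01)>0$; hence $(TV)(b_2)=-C-0.01k<(TV)(b_1)$ once $M$ is large. So monotonicity is not an invariant of value iteration over your cone, and no cleverer choice of $q'$ can repair the inductive step---the hypothesis carried through the induction is simply too weak.

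The same weakness surfaces locally in your Case~B: the claim that a revelation at precision $p$ from $b_2$ has reward-plus-continuation at least $-C+\delta V(b_2)$ tacitly uses $\E[V(b')\mid b_2,p]\ge V(b_2)$, a Jensen-type inequality that requires convexity of $V$, not monotonicity; indeed the $B$-posterior from $b_2$ can drop strictly below $b_1$, so monotonicity and non-positivity give you nothing there. Any repair would have to thread a substantially stronger invariant through the induction (e.g.\ convexity, as the paper establishes separately for the altruistic value function in Theorem~\ref{alt_convexity}, or a quantitative control on how negative $V$ can be near $0$). For contrast, the paper's own proof of Lemma~\ref{biased_mono} takes a different and much shorter route: it verifies that $r_B(b,q)$ is non-decreasing in $b$ for each fixed $q$ and that the belief process is a martingale, and then invokes Proposition~5 of the cited Smith--McCardle structural-properties framework rather than attempting the explicit control-matching argument you pursue.
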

\begin{proof}
    Applying (\ref{biased_reward_expanded}), the derivative of the biased reward with respect to $b$ can be written as follows:
    \begin{align}
        \frac{\partial}{\partial b}r_B(b,q)=\begin{cases}-C(1-2q)&q\geq\max(b,1-b)\\0&\text{otherwise}\end{cases}
    \end{align}
    Because $q\geq0.5$ and $z(b,q)\in[0,1]$, it follows that $r_B(b,q)$ is increasing with respect to $b$.
    
    Next, note that the Bayesian nature of belief updates implies that the belief process is a martingale. That is, for any fixed $q_i$, $\E[b_{i+1}]=b_i$. To demonstrate this consider two cases: (1) $q_i<\max(b_i,1-b_i)$, and (2) $q_i\geq\max(b_i,1-b_i)$. 
    
    In the former case, it follows from (\ref{public_update}) that $b_{i+1}=b_i$, satisfying the claim. 
    
    In the latter case, (\ref{private_update}) gives the distribution of $b_{i+1}$ which yields the same. 
    
    The monotonicity of $V^*_B(\cdot)$ then follows from Proposition 5 of \cite{smithStructuralPropertiesStochastic2002}.
\end{proof}

\subsection{Proof of Theorem~\ref{biased_policy}: Optimal biased policy}
\label{sec:app_bias_pol_proof}
We first prove that the optimal value function is continuous in the following lemma:
\begin{lem}{Biased value function continuity}\label{biased_val_cts}

    \centering{$V^*_B(\cdot)$ is continuous w.r.t. $b$.}
\end{lem}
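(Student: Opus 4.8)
The plan is to obtain $V^*_B$ as the unique bounded fixed point of the Bellman operator $T$ and to show that $T$ maps continuous functions to continuous functions; continuity of $V^*_B$ then follows because value iteration $W_{n+1}=TW_n$ converges uniformly ($T$ is a $\delta$-contraction on bounded functions) and $C([0,1])$ is closed under uniform limits. To show $TW\in C([0,1])$ for $W\in C([0,1])$, I would split the optimization over $q$ at each belief $b$ according to \eqref{actions}: a precision $q\ge\max(b,1-b)$ is \emph{responsive} (the agent acts on her signal and $b'$ is the two-point Bayesian update of \eqref{public_update}), whereas a precision $q<\max(b,1-b)$ induces \emph{herding} (the agent ignores her signal, $b'=b$). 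This yields $(TW)(b)=\max\{R_W(b),C_W(b)\}$, where the herding branch $C_W$ is present only for $b\ne 1/2$ (no herding precision exists at $b=1/2$, since $q\ge 1/2=\max(b,1-b)$ there).

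For the responsive part, $R_W(b)=\sup_{\max(b,1-b)\le q\le 1}\bigl[-\beta(|q-p|)-Cz(b,q)+\delta\bigl(y(b,q)W(\bar b)+z(b,q)W(\underline b)\bigr)\bigr]$ with $\bar b=qb/y(b,q)$ and $\underline b=(1-q)b/z(b,q)$. On the compact region $\{(b,q):\max(b,1-b)\le q\le 1\}$ the maps $y,z$ are continuous and $\bar b,\underline b$ extend continuously (the only zeros of $y,z$ are the corners $(0,1),(1,1)$, where the weight of the offending branch vanishes), so the bracketed objective is jointly continuous; moreover the constraint correspondence $b\mapsto[\max(b,1-b),1]$ is continuous and compact-valued. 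Hence the Berge Maximum Theorem gives $R_W\in C([0,1])$. For the herding part, when $b\ne 1/2$ the agent's action is deterministic ($G$ if $b>1/2$, $B$ if $b<1/2$), so $C_W(b)=\delta W(b)-C\,\mathbf{1}\{b<1/2\}-\inf_{1/2\le q<\max(b,1-b)}\beta(|q-p|)$; the infimum equals $0$ when $p<\max(b,1-b)$ and $\beta\bigl(p-\max(b,1-b)\bigr)$ otherwise, hence is continuous in $b$ on each of $[0,1/2)$ and $(1/2,1]$. Therefore $TW$ is continuous at every $b\ne 1/2$.

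The remaining point, $b=1/2$, is where I expect the main difficulty. There $(TW)(1/2)=R_W(1/2)$, while $C_W$ jumps at $1/2$: its right limit $\delta W(1/2)-\beta(p-1/2)$ exceeds its left limit $\delta W(1/2)-C-\beta(p-1/2)$ by $C$, because herding to the planner's preferred action $G$—available only when $b>1/2$—escapes the mistake cost incurred by herding to $B$ when $b<1/2$. Continuity of $TW$ at $1/2$ thus reduces to the single inequality $R_W(1/2)\ge\delta W(1/2)-\beta(p-1/2)$, which for $W=V^*_B$ reads $(1-\delta)V^*_B(1/2)\ge-\beta(p-1/2)$: the value at the knife-edge belief must not fall below that of the limiting favorable herd. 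I would attack this by exhibiting, from belief $1/2$, a responsive strategy at least as good as that limit—play a precision just above $1/2$, which (using $y(1/2,q)\equiv z(1/2,q)\equiv 1/2$) sends the belief strictly above $1/2$ with probability $1/2$ and strictly below with probability $1/2$ at per-step cost tending to $\beta(p-1/2)+C/2$; from any belief above $1/2$ switch to the cheap favorable herd, and recurse on the at-or-below-$1/2$ branch, invoking the left-continuity of $V^*_B$ at $1/2$ already obtained from the Berge step to close the recursion. Monotonicity of $V^*_B$ (Lemma~\ref{biased_mono}) lets one collapse the two one-sided checks at $1/2$ into this one inequality. Carrying out this excursion estimate—weighing the geometric escape time against the discounted mistake cost—is the crux; once it is in place, $TW$ is continuous, every value iterate is continuous by induction, and $V^*_B$, their uniform limit, is continuous.
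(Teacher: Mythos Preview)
Your approach differs substantially from the paper's. The paper first argues that the one-stage value $V^0_B(b)=\sup_q r_B(b,q)$ is continuous—by case analysis against the myopic-policy characterization of Theorem~\ref{myopic_biased_policy}—and then invokes Theorem~4.2 of Hinderer (2005) to pass to the infinite horizon. You instead try to show the Bellman operator preserves continuity and appeal to value iteration, handling the responsive branch with Berge's Maximum Theorem. You are right that $b=1/2$ is the crux; the paper does not single it out.

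The excursion estimate you flag as the crux cannot succeed in general, because the inequality $(1-\delta)V^*_B(1/2)\ge -\beta(p-1/2)$ you reduce to is false for some admissible parameters. At $b=1/2$ every $q\in[1/2,1]$ is responsive, so $r_B(1/2,q)=-\beta(|q-p|)-C/2\le -C/2$ and hence $V^*_B(1/2)\le -C/2$. Yet for any $b>1/2$ the stationary policy $\pi\equiv 1/2$ herds to $G$ at per-step cost $\beta(p-1/2)$, giving $V^*_B(b)\ge -\beta(p-1/2)/(1-\delta)$. Whenever $\beta(p-1/2)<C(1-\delta)/2$—e.g.\ $p=0.9$, $\beta(x)=0.01\,x$, $C=1$, $\delta=0.9$, so $\beta(0.4)=0.004<0.05$—we obtain $\liminf_{b\downarrow 1/2}V^*_B(b)\ge -0.04>-0.5\ge V^*_B(1/2)$, a genuine jump. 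The same numbers show $V^0_B$ itself jumps at $1/2$ (right limit $-\beta(p-1/2)$, value $-C/2$), so the paper's route through Hinderer shares the defect; you have actually located the point the paper glosses over. Note, separately, that your value-iteration scheme has a circularity problem even before this: the right-continuity condition $R_W(1/2)\ge\delta W(1/2)-\beta(p-1/2)$ need not hold for arbitrary continuous $W$, so $T$ does not map $C[0,1]$ into itself, and your fallback argument for $W=V^*_B$ (``invoking the left-continuity of $V^*_B$ already obtained from the Berge step'') presupposes regularity of $V^*_B$ that has not yet been established.
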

\begin{proof}
    First consider the one-stage value function defined as $V^0_B(b)=\sup_{q\in[0.5,1]}r_B(b,q)$. From Theorem~\ref{myopic_biased_policy}, we know that $\pi^{\epsilon,0}_B(b)\in\{1-b,1,b-\epsilon,p\}$. Thus, we can equivalently write the one-stage value function as follows:
    \begin{align}
        V^0_B(b)&=\sup_{\epsilon>0}r_B\left(b,\pi^{\epsilon,0}_B(b)\right)\nonumber\\
        &=\sup_{\epsilon>0}\max_{q\in\{p,1-b,1,b-\epsilon,p\}}r_B(b,q)\label{sup_max_V}
    \end{align}
    Using the above expression, we now argue that $V^0_B(\cdot)$ is continuous on $b\in[0,1]$. First, consider $b\leq0.5$. From Theorem~\ref{myopic_biased_policy}, an $\epsilon$-optimal policy applies precision $1-b$ or $p$ for $b\leq 0.5$. In either case, the instantaneous reward will not depend upon $\epsilon$, and the supremum in (\ref{sup_max_V}) can be omitted, leaving $V^0_B(b)=\max_{q\in\{1-b,p\}}r_B(b,q)$. For $b\leq0.5$, $r_B(b,1-b)$ falls in case one of (\ref{biased_reward_expanded}) and is thus continuous with respect to $b$. On the same interval, $r_B(b,p)$ has one discontinuity at $b=1-p$ where it switches from case two to case one of (\ref{biased_reward_expanded}). At $b=1-p$, Theorem~\ref{myopic_biased_policy} shows that the optimal myopic policy switches from $1-b$ to $p$. Both policies fall in the first case of (\ref{biased_reward_expanded}), and, at $b=1-p$, $p=1-b$, so the controls are equal. Thus, the left and right limits coincide with the optimal value, and $V^0_B(\cdot)$ is continuous. 
    
    For $b\geq0.5$, Theorem~\ref{myopic_biased_policy} gives the candidate $\epsilon$-optimal policies $q\in\{1,b-\epsilon,p\}$. When the inner maximum in (\ref{sup_max_V}) is attained for $q=b-\epsilon$, the resulting value is $V^0_B(b)=sup_{\epsilon>0}r_B(b,b-\epsilon)$. From the third case of (\ref{biased_reward_expanded}), we can rewrite this as $V^0_B(b)=sup_{\epsilon>0}-\beta(|b-\epsilon-p|)$. Since $b-\epsilon$ is only selected for $b\leq p$ (Theorem~\ref{myopic_biased_policy}), this reduces to $V^0_B(b)=-\beta(|p-b|)$. Thus, exchanging the order of the maximum and supremum in (\ref{sup_max_V}), we obtain $V^0_B(b)=\max\{r_B(b,1),-\beta(|p-b|),r_B(b,p)\}$ for $b\geq0.5$. For the first expression in this maximum, $r_B(b,1)$ falls into case one of (\ref{biased_reward_expanded}) and is thus continuous. The continuity of $-\beta(|p-b|)$ follows from the continuity of $\beta(\cdot)$. For $r_B(b,p)$, since $b\geq0.5$, the first case of (\ref{biased_reward_expanded}) applies for $b\leq p$ and the third case applies for $b>p$. Thus, $r_B(b,p)$ is potentially discontinuous at $b=p$. Therefore, what remains is to show the continuity of $V^0_B(\cdot)$ at $b=p$. 
    
    At $b=p$, the myopic $\epsilon$-optimal policy switches from $q=b-\epsilon$ to $q=p$ by Theorem~\ref{myopic_biased_policy} with $\pi^{\epsilon,0}_B(p)=b-\epsilon$. Thus, for $b=p$, the inner maximum of (\ref{sup_max_V}) is attained at $q=b-\epsilon$ and, from the third case of (\ref{biased_reward_expanded}), $V^0_B(p)=\sup_{\epsilon>0}-\beta(\epsilon)=0$. Similarly, from Theorem~\ref{myopic_biased_policy}, for each $\epsilon>0$, there exists $t_5<p$ such that on $[t_5,p]$ the inner maximum of (\ref{sup_max_V}) is attained by $q=b-\epsilon$. On this interval, $V^0_B(b)=\sup_{\epsilon>0}-\beta(|p-b+\epsilon|)=-\beta(|p-b|)$. Therefore, $lim_{b\uparrow p}V^0_B(b)=0$. For $b>p$, the myopic optimal policy is $q=p$ from Theorem~\ref{myopic_biased_policy}, and this yields reward $V^0_B(b)=0$ for all $b>p$ from the third case of (\ref{biased_reward_expanded}). This proves the continuity of $V^0_B(\cdot)$ at $b=p$ and completes the proof that $V^0_B(\cdot)$ is continuous on $[0,1]$.

    Applying Theorem 4.2 of \cite{hindererLipschitzContinuityValue2005}, then yields the continuity of $V^*_B(\cdot)$.
\end{proof}

\begin{rem}\label{continuity_remark}
    Given the continuity of $V^*_B(\cdot)$ with respect to public belief and the fact that the control variable lies in the compact set $[0.5,1]$, one might conclude that the optimal is guaranteed to exist at all public beliefs. However, there is an important subtlety which prevents such a conclusion: while $V^*_B(\cdot)$ is continuous with respect to public belief, it need not be continuous with respect to precision. Specifically, $b_{i+1}$ is a function of both $b_i$ and $q_i$ (see \eqref{public_update}); thus, $\E[V^*_B(b_{i+1}]$ may be denoted as $\phi(b_i,q_i)$. Now, $b_{i+1}$ need not be continuous with respect to $q_i$, thus $\phi(\cdot)$ may not be continuous with respect to $q_i.$ Thus, there may be regimes in which the optimal value is not attained by any policy. Indeed, we will show that such cases exist.
\end{rem}
\paragraph{Proof of Theorem~\ref{biased_policy}}

We now leverage Lemma~\ref{biased_val_cts} to prove Theorem~\ref{biased_policy}. We again let $z(b,q)=b+q-2bq$ be the probability of an agent receiving signal $B$ with precision $q$ and prior public belief $b$ and $y(b,q)=1-z(b,q)$ be the corresponding probability of the agent receiving signal $G$. We abuse notation to allow $V_B(b,q)$ to be the expected utility when beginning at public belief $b$, applying precision $q$ in the current time step, and then following the optimal policy $\pi^*_B(\cdot)$ in all future steps. We can express $V_B(b,q)$ as follows by applying (\ref{private_update}), (\ref{public_update}), and (\ref{biased_reward_expanded}):
\begin{align}
\label{biased_v_bq}
    V_B(b,q)=\begin{cases}
        -\beta(|q-p|)-Cz(b,q)\\\ +\delta y(b,q) V^*_B\left(f(b,q,G)\right)\\\ +\delta z(b,q) V^*_B\left(f(b,q,B)\right)&q\geq\max(b,1-b)\\
        -\beta(|q-p|)\\\ -C\mathbf{1}(b<0.5)\\\ +\delta V^*_B(b)&\text{otherwise}
    \end{cases}
\end{align}

Consider the latter case of (\ref{biased_v_bq}) when $q<\max(b,1-b)$. Here, the only term affected by the precision policy is $-\beta(|q-p|)$, which is maximized by choosing $q$ as close to $p$ as possible. Thus, if the optimal precision is less than $\max(b,1-b)$, then it will be as close to $p$ as possible.  

We now consider four cases: (1) $b\in[0,1-p)$, (2) $b\in[1-p,0.5)$, (3) $b\in[0.5,p]$, and (4) $b\in(p,1]$. 

\emph{\underline{Case 1}: }$b\in[0,1-p)$, claims (A) and (B)

In this case, we will prove the existence of $t_1$ such that $\pi^*_B(b)=p$ for $b\leq t_1$.
First note that if $\pi^*(b)<\max(b,1-b)$, we fall into the second case of \eqref{biased_v_bq}. Here, the only term affected by the precision policy is $-\beta(|q-p|)$, which is maximized by minimizing $|q-p|$. Because $p\in[0.5,1]$, when $b\in[0,1-p)$, $p<\max(b,1-b)$. Thus, for $b\in[0,1-p)$, if $\pi^*_B(b)<\max(b,1-b)$, then $\pi^*_B(b)=p$.

Now, we evaluate the value function at belief $b=0$. At $b=0$, the state of the world is known to be $B$ and, thus, no signal, regardless of its precision, will move the public belief. This can be seen mathematically by substituting $b=0$ into the updated belief expressions in (\ref{private_update}). Furthermore, at $b=0$, if $q\geq\max(b,1-b)$, then $q=1$. As argued above, if the precision is less than $\max(b,1-b)$, the optimal precision is $q=p$. We can compare the resultant value under these two candidate policies as follows from \eqref{biased_v_bq}:
\begin{align}
    V_B(0,1)&=-\beta(1-p)-C+\delta V^*_B(0)\\
    V_B(0,p)&=-C+\delta V^*_B(0)\label{0_p_vbq}
\end{align}

Because $V_B(0,1)<V_B(0,p)$, $\pi^*_B(0)=p$ and $V^*_B(0)=V_B(0,p)$. Thus, from \eqref{0_p_vbq}, $V^*_B(0)=\frac{-C}{1-\delta}$. By Lemma~\ref{biased_mono}, this is a lower bound for the value function at any public belief. Therefore, and since $V^*_B(b) \leq 0$, $|V^*_B(b)|\leq\frac{C}{1-\delta}$ for all $b\in[0,1]$.

We now show that there exists an interval $[0,\epsilon)$ with $\epsilon>0$ on which the optimum of the first case of \eqref{biased_v_bq} is less than the optimum of the second case of \eqref{biased_v_bq}. Thus, in this interval the optimum control lies in the second case of \eqref{biased_v_bq} so $\pi^*_B(b) < \max(b,1-b)$. It follows from our initial argument earlier in this case that $\pi^*_B(b) = p$ in this interval, which would complete the proof.

We define function $g(\cdot)$ to be the maximum of the first case of the RHS of \eqref{biased_v_bq}:
\begin{align}
    g(b)= \max_{q\geq\max(b,1-b)}\big[&-\beta(|q-p|)-Cz(b,q)\nonumber\\&+\delta y(b,q) V^*_B\left(f(b,q,G)\right)\nonumber\\&+\delta z(b,q) V^*_B\left(f(b,q,B)\right)\big]\label{g_b}
\end{align}
For $b\in(0,1)$, $f(b,q,G)$ and $f(b,q,B)$ are both continuous with respect to $q$ from \eqref{private_update}. Thus, the maximum in \eqref{g_b} must exist because it is of a function which is continuous with respect to $q$ over a compact set $q\in[\max(b,1-b),1]$.

We now take the limit of $g(\epsilon)$ as $\epsilon$ tends to $0$. Because $q\in[\max(\epsilon,1-\epsilon),1]$, when $\epsilon\rightarrow0$, $q\rightarrow1$ (inside the maximum in \eqref{g_b}). We now reason about each of the four terms in \eqref{g_b}. As $\epsilon\rightarrow0$, $q\rightarrow1$ so that 1) from the continuity of $\beta(\cdot)$, $\beta(|q-p|)\rightarrow\beta(1-p)$, 2) from the continuity of $z(\cdot)$, $z(\epsilon,q)\rightarrow 1$, $y(\epsilon,q)\rightarrow0$. Thus, the second term goes to $C.$ Because $V^*_B(\cdot)$ is bounded and $y(\epsilon,q)\rightarrow0$, the third term goes to $0$. Finally, because $z(\epsilon,q)\rightarrow1$ and $f(\epsilon,q,B)\rightarrow0$ and since $V^*_B(\cdot)$ is continuous from Lemma~\ref{biased_val_cts}, the last term approaches $\delta V^*_B(0)$. Thus, 
\begin{align}
    \lim_{\epsilon\rightarrow0}g(\epsilon)
    =-\beta(1-p)-C+\ &\delta V^*_B(0)\label{eps_delt_vbq}
\end{align}

Applying precision $p$, however, yields the following from the second case of \eqref{biased_v_bq}:
\begin{align}
    \lim_{\epsilon\rightarrow0}V_B(\epsilon,p)&=\lim_{\epsilon\rightarrow0}[-\beta(p-p)-C+\delta V^*_B(\epsilon)]\\
    &=-C+\delta V^*_B(0)\label{eps_p_vbq}
\end{align}
The last equality follows from the continuity of $V^*_B(\cdot)$ from Lemma~\ref{biased_val_cts}. 

Note that the expression in \eqref{eps_delt_vbq} is strictly less than that in \eqref{eps_p_vbq}. Thus,  there exists an interval $[0,\epsilon)$ with $\epsilon>0$ on which the optimum of the first case of \eqref{biased_v_bq} is less than the optimum of the second case of \eqref{biased_v_bq}. The result follows.

\emph{\underline{Case 2}: }$b\in[1-p,0.5)$, Claim (C)

We now argue that for $b_i\in[1-p,0.5)$, $\pi^*(b_i)\geq 1-b_i$, yielding claim (C). Consider a stationary policy $\pi(\cdot)$ such that $\pi(b_i)<1-b_i$. Note that, from \eqref{public_update}, the public belief will remain unchanged under such a policy. Thus, the same control will be applied at every future time step, and, from \eqref{actions}, every agent will select action $B$. 
Using the facts that $1-b_i\leq p$ in this belief range, $\P(a_i=B|b_i, \pi(b_i))=1$, and $b_{i+1}=b_i$, we can apply \eqref{biased_reward} to write the expected total discounted utility under such a policy as follows:
\begin{align*}
    V^{\pi}_B(b_i)&=-\beta(p-\pi(b_i))-C+\delta V^{\pi}_B(b_i)
\end{align*}
Thus, if this policy is applied, it will result in expected value $V^{\pi}_B(b_i)=\frac{-\beta(p-\pi(b_i))-C}{1-\delta}$. That is, the cost of every agent choosing action $B$ and continuing to apply precision $\pi(b_i)$. This is strictly lower than the value function lower bound of $\frac{-C}{1-\delta}$ established using Lemma~\ref{biased_mono} in Case 1 because $\pi(b_i)<1-b_i\leq p$. Therefore, if an optimal policy exists at this point, it must be the case that $\pi^*_B(b_i)\geq1-b_i$. 

The remaining candidate policies are $q_i\in[1-b_i,1]$. Applying Bellman's principle we can write the following:
\begin{align}
    V^*_B(b_i)=\sup_{q_i\in[1-b_i,1]}r_B(b_i,q_i)+\delta\E[V^*_B(b_{i+1})]\label{bellman_case2}
\end{align}
Because $q_i\geq\max(b_i,1-b_i)$ and $b\in[1-p,0.5)\subset(0,1)$, \eqref{private_update} and \eqref{public_update} imply that $b_{i+1}=\frac{b_iq_i}{y(b_i,q_i)}$ or $b_{i+1}=\frac{b_i(1-q_i)}{z(b_i,q_i)}$. Both of these expression are continuous with respect to $q_i$ and $V^*_B(\cdot)$ is continuous by Lemma \ref{biased_val_cts}, thus the RHS of \eqref{bellman_case2} inside the supremum is continuous with respect to $q_i$. Therefore, in \eqref{bellman_case2}, the supremum over a compact set $q_i\in[1-b,1]$ of a function which is continuous with respect to $q_i$ must be attained, implying the existence of an optimal policy. Thus, $\pi^*_B(b_i)\geq1-b_i$.

\emph{\underline{Case 3}: }$b\in[0.5,p]$ Claims (D) and (E)

For $b=0.5$, $q\geq b$ as we assumed $q\in[0.5,1]$. This places us in the first case of \eqref{biased_v_bq}. Furthermore, $\beta(\cdot), y(b,q), z(b,q), f(b,q,G),$ and $f(b,q,B)$ are all continuous with respect to $q$ for $b=0.5$. Thus, the supremum over $q\in[0.5,1]$ of $V(b,q)$ must be attained as it is of a continuous function of $q$ over a compact set of $q$. Therefore, an optimal policy must exist at $b=0.5$.

Now consider $b\in(0.5,p]$ and a stationary policy $\pi'(\cdot)$ such that $\pi'(b)<b$. From \eqref{public_update} and \eqref{actions}, applying such a policy will result in public belief remaining unchanged, every agent selecting action $G$, and the same control being applied at every future time step. Using the facts that $b_i\leq p$, $\P(a_i=G|b_i,\pi'(b_i))=1$, and $b_{i+1}=b_i$, we can apply \eqref{biased_reward} to write the total expected discounted utility under such a policy as follows:
\begin{align*}
    V^{\pi'}_B(b_i)&=-\beta(p-\pi'(b_i))+\delta V^{\pi'}_B(b_i)
\end{align*}
Thus, the policy results in expected value $V^{\pi'}_B(b_i)=\frac{-\beta(p-\pi'(b_i))}{1-\delta}$. Note that, in this case, $V^{\pi'}_B(b_i)$ is an increasing function of $\pi'(b_i)$. Thus, the supremum of $V^*(\pi')_B(b_i)$ over $\pi'(b_i)\in(0.5,b)$ is not attained. 

We now consider two cases: (1) $\max_{q_i\in[b_i,1]}V(b_i,q_i)\geq\sup_{\pi'(b_i)\in(0.5,b_i)}\frac{-\beta(p-\pi'(b_i))}{1-\delta}$, and (2) the complement. In the former case, we can argue as we did for $b=0.5$ that the supremum must be attained and an optimal policy exists such that $\pi^*_B(b_i)\geq b_i$. In the latter case, as argued above, no optimal policy exists. Instead, for sufficiently small $\epsilon>0$, there exists an $\epsilon$-optimal policy such that $\pi^{\epsilon}_B(b_i)=b_i-\epsilon$. This yields claims (D) and (E).

What remains is to show that there exists $t_2\in(0.5,p)$ such that for $b_i\in(t_2,p]$ we fall into the latter case, yielding claim (F). We define functions $m(\cdot)$ and $n(\cdot)$ as follows corresponding to the resultant value from choosing precisions less than $b$ or greater than or equal to $b$, respectively:
\begin{align*}
    m(b)&=\sup_{\pi'(b)\in[0.5,b)}\frac{-\beta(p-\pi'(b))}{1-\delta}\\
    n(b)&=\max_{q\geq b}V(b,q)
\end{align*}

Taking the limit of $m(\cdot)$ as $b\rightarrow p$ yields:
\begin{align}
    \lim_{b\rightarrow p}m(b)&=0\label{m_limit}
\end{align}
When $q\geq b$ we fall into the first case of \eqref{biased_v_bq}. Thus we can expand $n(b)$ as follows:
\begin{align}
    n(b)=\max_{q\geq b}\big[&-\beta(|q-p|)-Cz(b,q)\nonumber\\
    &+\delta y(b,q)V^*_B(b,q)\nonumber\\
    &+\delta z(b,q) V^*_B(b,q)\big]\label{n_expanded}
\end{align}
We will argue that, for $b\in(0.5,p]$ and $q\geq b$, $n(b) \leq -C(1-p)$. Since $p$ is a fixed parameter strictly less than $1$, this means from \eqref{m_limit} that $n(b) \leq -C(1-p) < \lim_{b \rightarrow p} m(b)$, which implies the existence of an open interval $(t_2,p]$ where an optimal policy does not exist, yielding claim (F). We now argue that, for $b\in(0.5,p]$ and $q\geq b$, $n(b) \leq -C(1-p)$.  Note that every term of \eqref{n_expanded} is bounded above by $0$, so it suffices to show that any one term, say the second term $-Cz(b,q)$, is upper bounded by $-C(1-p)$. First note that $z(b,q)=q(1-2b)+b$ so $z(b,q)$ is decreasing in $q$ since $b>0.5$. Thus, $z(b, q) \geq z(b, 1) = 1-b \geq 1-p.$ The last inequality follows because $b\in(0.5,p]$. The claim follows since $p < 1.$ 

Along with \eqref{m_limit}, this implies the existence of an open interval $(t_2,p)$ where an optimal policy does not exist, yielding claim (F). 

\emph{\underline{Case 4}: }$b\in(p,1]$ Claim (A)

For $b\in(p,1]$, consider applying stationary policy $\pi(\cdot)$ such that $\pi(b)=p$. Because $b>p$, \eqref{public_update} and \eqref{actions} imply that such a policy results in public belief remaining unchanged, every agent selecting action $G$, and the same control being applied at every future time step. We can apply \eqref{biased_reward} to write
\begin{align*}
    V^{\pi}_B(b)&=-\beta(p-p)+\delta V^{\pi}_B(b)=\delta V^{\pi}_B(b)
\end{align*}
Thus, $V^{\pi}_B(b)=0$. Since this is also an upper bound of the optimal value function for any $b$, $\pi^*_B(b)=p$.
\hfill\QEDA

\subsection{Social welfare monotonicity}
\label{sec:app_welfare_mono}
\begin{lem}
\label{welfare_mono}{Precision Monotonicity of Social Welfare}

If $\pi_1(b)\leq\pi_2(b)\ \forall b\in[0,1]$, for policies $\pi_1(\cdot), \pi_2(\cdot)$, then 
    $$W^{\pi_1}(b)\leq W^{\pi_2}(b)\ \forall b\in[0,1]$$
\end{lem}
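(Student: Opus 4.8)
The plan is to recast the comparison as a single policy‑improvement step and bootstrap it through the monotonicity of the policy‑evaluation operator. First, by Lemma~\ref{app_mistake_prob} the per‑agent welfare is $w(b,q):=-C\,\P(a_i\neq\omega\mid b,q)=-C\min(b,1-b,1-q)$, which is (i) non‑decreasing in $q$ on $[0.5,1]$ (since $\min(b,1-b,1-q)$ is non‑increasing in $q$) and (ii) convex in $b$ (the negative of a pointwise minimum of affine functions). For a policy $\pi$, let $(T_\pi V)(b)=w(b,\pi(b))+\delta\,\E[V(b')\mid b,\pi(b)]$ with $b'$ distributed according to \eqref{public_update}; then $W^{\pi}$ is the unique fixed point of the contraction $T_\pi$, and $T_{\pi_2}$ is monotone because its transition is a probability kernel and $w$ does not depend on $V$. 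Hence it suffices to prove the one‑step inequality $T_{\pi_2}W^{\pi_1}\geq W^{\pi_1}$ pointwise; iterating gives $W^{\pi_2}=\lim_n T_{\pi_2}^{\,n}W^{\pi_1}\geq W^{\pi_1}$.

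Writing $q_1=\pi_1(b)\leq q_2=\pi_2(b)$, the desired inequality is $w(b,q_2)+\delta\,\E[W^{\pi_1}(b')\mid b,q_2]\geq w(b,q_1)+\delta\,\E[W^{\pi_1}(b')\mid b,q_1]=W^{\pi_1}(b)$. I would split on whether each precision triggers a cascade, i.e.\ lies below $\max(b,1-b)$. If both do, both sides equal $-C\min(b,1-b)+\delta W^{\pi_1}(b)$, since the belief is frozen at $b$ and $w(b,\cdot)$ is constant there. The first substantive observation is the lower bound $W^{\pi_1}(b)\geq L(b):=\tfrac{-C\min(b,1-b)}{1-\delta}$, the value of cascading forever: one checks $T_{\pi_1}L\geq L$ using that $\min(b,1-b)$ is concave and the belief is a martingale (so $\E[L(b')\mid b,q]\geq L(b)$) together with $w(b,\pi_1(b))\geq -C\min(b,1-b)=(1-\delta)L(b)$, and concludes by operator monotonicity. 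Now if $q_1$ triggers a cascade, $W^{\pi_1}(b)=L(b)$ and, because $L$ is convex and $\E[b'\mid b,q]=b$, we get $\E[W^{\pi_1}(b')\mid b,q_2]\geq\E[L(b')\mid b,q_2]\geq L(b)=W^{\pi_1}(b)$, while $w(b,q_2)=-C(1-q_2)\geq -C\min(b,1-b)=w(b,q_1)$; this settles the case.

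The remaining and main obstacle is the case where $\pi_1$ already invests informatively at $b$, so $q_2\geq q_1\geq\max(b,1-b)$ and both precisions produce genuine two‑point splits of $b$ via \eqref{private_update}. The natural route is to observe that as $q$ grows over $[\max(b,1-b),1]$ the posterior pair $(\underline b,\bar b)$ fans out with fixed mean $b$ — equivalently, the higher‑precision binary symmetric channel is Blackwell‑more‑informative than the lower‑precision one, so its posterior is a mean‑preserving spread of the other — which would give $\E[W^{\pi_1}(b')\mid b,q_2]\geq\E[W^{\pi_1}(b')\mid b,q_1]$ \emph{if} $W^{\pi_1}$ were convex on the relevant range; but $W^{\pi_1}$ need not be convex for a general policy. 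Closing this case therefore requires extra structure: exploiting the instantaneous slack $C(q_2-q_1)$ against a Lipschitz estimate for $W^{\pi_1}$ (in the spirit of \cite{hindererLipschitzContinuityValue2005}), or running the comparison directly on the two decision trees by the counterpart‑node coupling used in the proof of Theorem~\ref{alt_convexity}. I expect the bulk of the effort to lie in this last step.
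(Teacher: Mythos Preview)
Your policy-improvement architecture is exactly the paper's: it too reduces to the one-step inequality $T_{\pi_2}W^{\pi_1}\ge W^{\pi_1}$ (phrased there as ``follow the same argument as in the proof of Lemma~\ref{myopic_bound_alt}'') after noting that the one-step welfare $-C\min(b,1-b,1-q)$ is increasing in $q$. The only substantive difference is that where you flag convexity of $W^{\pi_1}$ as the outstanding obstacle, the paper simply asserts it: ``the convexity of social welfare with respect to public belief follows from the same argument as in the proof of Theorem~\ref{alt_convexity}\ldots since social welfare is the altruistic value function with $\beta\equiv 0$.'' So the paper does not route around the convexity hypothesis; it invokes it and moves on.

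Your hesitation is well founded. Theorem~\ref{alt_convexity} establishes convexity of the \emph{optimal} value $V_A^*$, and its counterpart-node coupling only delivers $W^{\pi_1}(x_0)\le t\,W^{\tilde\pi}(m_0)+(1-t)\,W^{\tilde\pi}(n_0)$ for a path-dependent $\tilde\pi$ that mimics $\pi_1$'s precisions node-by-node --- not convexity of the fixed Markov policy's welfare. And $W^{\pi}$ can in fact fail to be convex for a general stationary $\pi$: take $\pi(b)=1$ on $[0.4,0.6]$ and $\pi(b)=0.5$ elsewhere; then $W^{\pi}(0.5)=0$ while $W^{\pi}(0.3)=W^{\pi}(0.7)=-0.3C/(1-\delta)$, so the midpoint lies strictly above the chord. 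The upshot is that the paper's three-line sketch does not supply an ingredient you are missing; the case $q_2\ge q_1\ge\max(b,1-b)$ that you isolate is left open in both writeups. The tree-coupling idea you suggest --- adapted to compare $\pi_1$ against $\pi_2$ from the \emph{same} root, level by level, rather than one policy from different roots --- is the most promising route, but it is not carried out in the paper either.
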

\begin{proof}
    It is clear that the one-step welfare $-C\min(b_i,1-b_i,\pi(b_i))$ is increasing in $\pi(b_i)$. 
    
    Furthermore, note that the convexity of social welfare with respect to public belief follows from the same argument as in the proof of Theorem~\ref{alt_convexity}. This is because social welfare is equivalent to the altruistic value function in the special case where $\beta(\cdot)=0$.
    
    We can then follow the same argument as in the proof of Lemma~\ref{myopic_bound_alt} (Appendix~\ref{sec:app_myopic_alt_bound}) to show that increasing precision increases welfare. 
\end{proof}

\section{Extension to Heterogeneous Agent Preferences}
\label{sec:append_heteroegeneous_agents}
Here we argue that, for the altruistic planner, the problem we consider is equivalent to one that involves two types of individuals with differing preferences.

Assume that there are two types of agents $G$ and $B$ with the type of agent $i$ denoted $t_i$. The first type receives unit reward when $a_i=\omega$ and no reward otherwise. The second receives unit reward when $a_i\neq\omega$ and no reward otherwise. Further, assume that all agents know their own type and the types of all other agents. All other aspects of the model remain the same.

The private belief update of agent $i$ can be expressed as below. This remains unchanged because it only depends on the signal structure, i.e., $\P(s_i=\omega)=q_i$.
\begin{align}
\label{two_type_private_update}
    \tilde{b}_i=\P(\omega=G|\mathcal{H}_i,q_i,s_i)=\begin{cases}
        \frac{q}{1+2b_iq_i-b_i-q_i}b_i&s_i=G\\
        \frac{1-q}{b_i+q_i-2b_iq_i}b_i&s_i=B
    \end{cases}
\end{align}

Using \eqref{two_type_private_update}, agent $i$'s action can be written as follows with superscript indicating the type of agent $i$ and $!$ indicating negation:
\begin{align*}
    a^G_i&=\begin{cases}
        s_i&1-q_i\leq b_i\leq q_i\\
        G&q_i<b_i\\
        B&b_i<1-q_i
    \end{cases}\\
    a^B_i&=\begin{cases}
        !s_i&1-q_i\leq b_i\leq q_i\\
        B&q_i<b_i\\
        G&b_i<1-q_i
    \end{cases}
\end{align*}
Because we assume the type of each agent is known to all, when $q_i\geq\max(b_i,1-b_i)$, all agents can perfectly infer the private signal of agent $i$. If $q_i<\max(b,1-b_i)$, the action of agent $i$ is uninformative. 

Thus, after observing agent $i$'s type $t_i$, precision $q_i$, and action $a_i$, the public belief is updated as follows:
\begin{align}
    b_{i+1}=f(b_i,q_i)=\begin{cases}
        \tilde{b}_i&1-q_i\leq b_i\leq q_i\\
        b_i&\text{o.w.}
    \end{cases}
\end{align}

In this setting, an altruistic planner seeks to induce type-$G$ agents to take action $\omega$ and type-$B$ agents to take action $!\omega$ as these choices align with the personal utility functions of the agents.

Thus, we can express the planner's instantaneous reward as follows with superscript denoting the type of agent~$i$:
\begin{align*}
    &r^G(b_i,q_i)=-\beta(|q_i-p|)-C\P(a_i\neq\omega)\\
    &=\begin{cases}
        -\beta(|q_i-p|)-C(1-q_i)&q_i\geq\max(b_i,1-b_i)\\
        -\beta(|q_i-p|)-C\min(b_i,1-b_i)&q_i<\max(b_i,1-b_i)
    \end{cases}\\
    &r^B(b_i,q_i)=-\beta(|q_i-p|)-C\P(a_i=\omega)\\
    &=\begin{cases}
        -\beta(|q_i-p|)-C(1-q_i)&q_i\geq\max(b_i,1-b_i)\\
        -\beta(|q_i-p|)-C\min(b_i,1-b_i)&q_i<\max(b_i,1-b_i)
    \end{cases}
\end{align*}
These are identical and equal to the reward function in \eqref{alt_reward}, leading to the equivalence of the altruistic planner's problem when we have heterogeneous but observable types.
\hfill\QEDA
\section{Experiments}
\label{sec:appendix_llm_exp}

\subsection{Experimental Setup}
\label{sec:appendix_llm_setup}
To simulate the controlled social learning setting, we consider a sequence of agents deciding whether or not to buy a new model of car informed by observations of past actions and an ad of precision selected by the planner. LLMs served three distinct roles: (1) the agents, (2) the planner, and (3) a oracle. Each agent was assigned a 'profile' containing demographic information and their car use cases and preferences. At the onset, we select the good or bad car as the offering. The information sets and objectives of the planner and agents were exactly as in the model of Section~\ref{sec:model}.

The oracle's role was two-fold: (1) it generated a generic message to serve as the baseline precision signal without intervention by the planner, and (2) once the planner chose a precision, it generated a message with the prescribed precision, customized to the profile of the target agent.

The oracle was provided with a fact sheet (safety rating, customer satisfaction, etc.) for each type of car along with the agents' profiles. Together, the fact sheet and profile facilitated the generation of a tailored message with precision prescribed by the planner. More detail on the experimental setup and prompting can be found in Appendix~\ref{sec:appendix_prompts_params}.

Each agent was tasked with making a binary decision based on observations of previous decisions and their private signal. The agents' goal was to match the true parameter value. 

The planner was provided the same decision history and tasked with choosing a precision level for each agents' message. They were also given the precision of a generic, un-tailored message which they could send for free. In the altruistic case, the planner's objective was to induce correct decisions, while in the biased case, their goal was to induce agents to buy the car. 

We compare the LLM simulations to the analytic solutions obtained via JuliaPOMDP \citep{egorov2017pomdps}.
\subsection{Parameters and Prompts}
\label{sec:appendix_prompts_params}
\subsubsection{Instance Parameters}\label{sec:app_parameters}
Experiments were run using the LG AI EXAONE Deep 32B model \citep{exaone-deep}. We solve the model analytically and run the LLM simulations for the parameters in Table~\ref{tab:parameters}.
\begin{table}[!ht]
    \centering
    \begin{tabular}{c|c|c}
        Parameter & Description & Values\\
        \hline
        $C$ & Cost of a bad action & 1 \\
        $k$ & Scaling factor for control cost & $\{0.1, 0.3, 0.5, 0.7, 0.9\}$ \\
        $p$ & Baseline signal precision & $\{0.6, 0.7, 0.8, 0.9\}$ \\
        $\delta$ & Discount Factor & $\{0, 0.25, 0.5, 0.75, 1\}$
    \end{tabular}
    \caption{Instance parameters}
    \label{tab:parameters}
\end{table}

The car facts were designed so that the good car was good for everyone and the bad car was bad for everyone. Below we include the fact sheets for the GOOD and BAD cars, respectively.
\begin{verbatim}
    fact_sheet["GOOD"] = [
        "EPA-rated range: 290 miles.",
        "Fast-charge 10-80 %% in 24 min on 150 kW DC.",
        "8-year / 100,000-mile battery warranty.",
        "NHTSA 5-Star overall safety rating.",
        "Cold-soak -5 °F keeps 93 %% range.",
        "Cools cabin 100→72 °F in 3 min.",
        "Cargo volume 26 ft³ total.",
        "Tow`s 1 500 lb with factory hitch.",
        "Average 1.2 service visits first 3 yrs.",
        "92 %% owners would buy again (Apr-25).",
        "Qualifies for federal credit.",
        "Quarterly OTA updates incl. battery pre-heat."
    ]
    fact_sheet["BAD"] = [
        "EPA-rated range: 155 mi (combined).",
        "Fast-charge 10 %% → 80 %% in 51 min (50 kW DC hardware limit).",
        "Battery warranty: 3 yr / 36 000 mi, first owner only.",
        "NHTSA overall safety rating 3 Stars (frontal-offset weakness).",
        "Cold-soak 15 °F retains 58 %% range;cabin heat cuts range 20%.",
        "Cabin cools 105 °F → 72 °F in 11 min.",
        "Cargo volume 17 ft³; rear seats fixed (no fold-flat).",
        "No factory towing approval.",
        "Average 3.8 service visits per year (battery-alert issues).",
        'Owner survey: 41 %% "would buy again" (1 200 owners, Feb-25).',
        "Not eligible for federal tax credit.",
        "OTA updates limited to infotainment."
    ]
\end{verbatim}

Profiles were generated programmatically. Profiles included an agent's name, age, city of residence, interests, planned use case for the car (e.g., commuting, road trips, etc.), and major pain points or concerns (e.g., fuel efficiency, winter driving, etc.).

\subsubsection{Prompts}\label{sec:app_prompts}
\emph{Agent LLM Prompt:}
\begin{verbatim}
sys = (
    f"You are deciding whether to buy a car. "
    f"Your profile is {profile}"
    f"You start believing there's a 50% chance the car is GOOD. "
    f"You have two information sources: "
    f"1. Previous agents' decisions "
    f"2. An advisor's message about the car "
    f"Your goal is to maximize your reward "
    f"i.e. BUY if the car is GOOD and "
    f"NOT BUY if the car is BAD. "
    f"If you BUY a GOOD car, you GAIN ${C}. "
    f"If you NOT BUY a BAD car, you GAIN ${C}. "
    f"If you BUY a BAD car, you LOSE ${C}. "
    f"If you NOT BUY a GOOD car, you LOSE ${C}. "
    f"belief = your probability (0.0-1.0) that the car is GOOD. "
    f"decision = exactly 'BUY' or 'NOT BUY'."
)
user = (
    f"PREVIOUS DECISIONS:\n{history_txt or '(none)'}\n\n"
    f"ADVISOR MESSAGE:\n{message}\n\n"
    f"TASK: Use the previous decisions and the advisor message. "
    f"Decide BUY or NOT BUY. "
    f"Update your belief about car quality (0.0-1.0). "\n"
    f'JSON response:{{"belief": <number>,"decision":"<BUY or NOT BUY>"}}'
)
\end{verbatim}
\emph{Planner LLM Prompt:}
\begin{verbatim}
sys = (
    f"You are providing messages to potential car buyers. "
    f"Each potential car buyer (or agent) has the same information. "
    f"Your goal is to get agents to {objective}."
    f"Each decision according to your objective, earns you {C} dollars. "
    f"You and the agents have the same information. "
    f"Initially there is a 50% chance the car is GOOD. "
    f"The agents' aim to BUY if the car is GOOD "
    f"and NOT BUY if the car is BAD. "
    f"You can send a baseline message for free "
    f"or tailor the message to the agent at a cost. "
    f"Tailoring costs {k} dollars times the change in precision. " 
    f"(Message precision is the probability (0.5 to 1.0) "
    f"that the message leaves the buyer with the correct impression. "
    f"For example, if the car is likely GOOD: "
    f"a high precision message will likely suggest BUY "
    f"and a low precision message will likely suggest NOT BUY. "
    f"You may choose to increase or decrease precision. " 
    f"Agents decide based on your message and the decision history. "
    f"They update beliefs by observing previous decisions. "
    f"IMPORTANT: Thus your objective is to maximize:"
    f"{C} * (# of agents who choose according to your objective) "
    f"- sum of ({d}^i*{k}*abs(message_precision - baseline_precision))\n"
    f"where {d} is your discount factor for future cost. "
)
user = (  
    f"BASELINE MESSAGE precision:\n{pi_base:.2f}\n"
    f"DECISION HISTORY:\n{history_txt}\n"
    f"TASK:" 
    f"Based on the history and your objective, "
    f"choose your desired precision.\n"
    f'JSON response: {{"message_precision": <number>}}\n'
)
\end{verbatim}
\emph{Oracle LLM Prompt:}
\begin{verbatim}
sys = (
    f"You will be given "
    f"1. an agent profile, "
    f"2. information about a car, "
    f"3. the true quality of the car, "
    f"4. a baseline message and the 'precision' of that message, and "
    f"5. a desired precision for the message you output. "
    f"(Message precision is the probability (0.5 to 1.0) "
    f"that the message leaves the buyer with the correct impression. "
    f"For example, if the car is likely GOOD: "
    f"a high precision message will likely suggest BUY "
    f"and a low precision message will likely suggest NOT BUY. "
    f"IMPORTANT: Output a message with the requested precision. "
    f"IMPORTANT: The message should be tailored to the profile. "
)
user = (
    f"AGENT PROFILE:{profile}\n"
    f"TRUE CAR QUALITY: {omega}\n\n"
    f"CAR FACTS: {car_facts}\n"
    f"BASELINE MESSAGE:{baseline_message}}\n"
    f"BASELINE MESSAGE precision: {p}\n"
    f"DESIRED MESSAGE precision: {q}\n"
    f"IMPORTANT: Output a mesage with the requested precision. "
    f"TASK: Write a message (MAX 100 words) with the desired precision."
    f'JSON response: {{"message": "<message>"}}'
)
\end{verbatim}

\subsection{Validation of LLM Beliefs and Oracle}\label{sec:appendix_belief_validation}

We now provide further experiments to validate the self-reported beliefs of the LLMs (as shown in~\figref{fig:belief_deviation}) and the performance of the LLM oracle. In these experiments, we utilize the LLMs as rational and utility maximizing, but not necessarily Bayesian, agents. The goal of these experiments is (1) to demonstrate that the model’s self-reported posterior genuinely corresponds to the belief that governs its choices in state-contingent decision problems, and (2) to verify that messages produced by the oracle do have the desired precision.

\begin{figure}[!ht]
    \centering
    \begin{subfigure}[c]{0.48\textwidth}
        \centering
        \includegraphics[width=\linewidth]{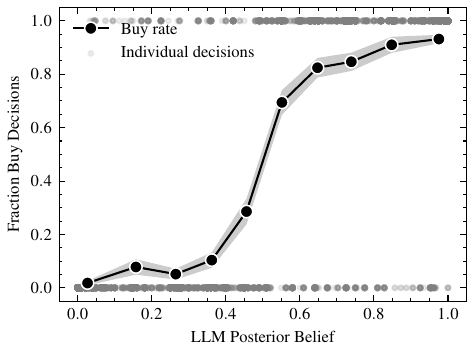}
        \caption{}
        \label{fig:rebuttal_bel_up_1}
    \end{subfigure}
    \begin{subfigure}[c]{0.49\textwidth} 
        \centering
        \includegraphics[width=\linewidth]{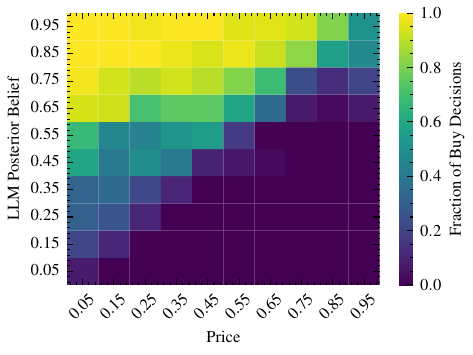}
        \caption{}
        \label{fig:rebuttal_bel_up_2}
    \end{subfigure}
    \caption{The rate at which the LLM chose a `buy' decision as a function of (a) its self-reported posterior belief (b) its self-reported posterior and the price buying.}
    \label{fig:belief_update_rebuttal}
\end{figure}

First, we provide the LLM with a prior and a new signal, obtain its self-reported posterior belief, and then ask it to make a binary “buy/not-buy’’ decision that yields unit reward if and only if the true state is good. A rational agent should choose “buy’’ exactly when its subjective belief exceeds 0.5. As shown in~\figref{fig:rebuttal_bel_up_1}, the LLM’s decisions align sharply with this threshold: it overwhelmingly chooses “buy’’ when its reported posterior exceeds 0.5 and “not buy’’ otherwise.
Next, we elicit the model’s willingness to pay for a payoff that is contingent on the true state. For a range of prices $c$, we ask whether the agent would pay $c$
in exchange for receiving unit utility if the state is good. A rational agent with belief $p$ should accept exactly when $c<p$. \Figref{fig:rebuttal_bel_up_2} plots the fraction of ``buy" decisions across posterior–price pairs, showing a clear separation: acceptance rates are high when $c<p$
and low when $c>p$, precisely as expected from a utility-maximizing decision maker whose belief is $p$. Both experiments demonstrate that the self reported posterior is close to the inherent belief of the model as used in decision making.

Similarly, we validate the performance of the oracle in crafting messages of a given precision. After the oracle generates a message of specified precision $q$, we ask another LLM to form an estimate $\hat q$ of the message's precision. We do this for $q\in\{0.5, 0.6, 0.7, 0.8, 0.9, 1.0\}$ on 100 programmatically generated agent profiles with 10 trials for each combination for a total of 6,000 trials. The mean absolute error $|q-\hat q|$ was just 0.025. This indicates that the realized precision of the oracles message's does closely follow the requested precision. 

\subsection{Impact of Non-Bayesian LLM Belief Updates on Learning Trajectories}
\label{appendix_llm_trajectories}
We now unpack belief trajectories from an example instance of the LLM experiment to elucidate the impact of non-Bayesian belief updating. Here we fix the policy be to be the analytically optimal policy and apply it to LLM-simulated agents. When the car is bad (right panel), the altruistic planner drives the Bayesian belief (dotted black line) rapidly towards 0. The LLM belief (solid black line), however, is more "stubborn." This macro-level trajectory is a direct result of the micro-level biases: the public belief resists entering a cascade \ref{claim3} because individual agents overreact to the occasional positive signal that runs counter to the group's increasingly negative belief \ref{claim2}.

\begin{figure}[!ht]
    \centering
    \includegraphics[width=0.75\linewidth]{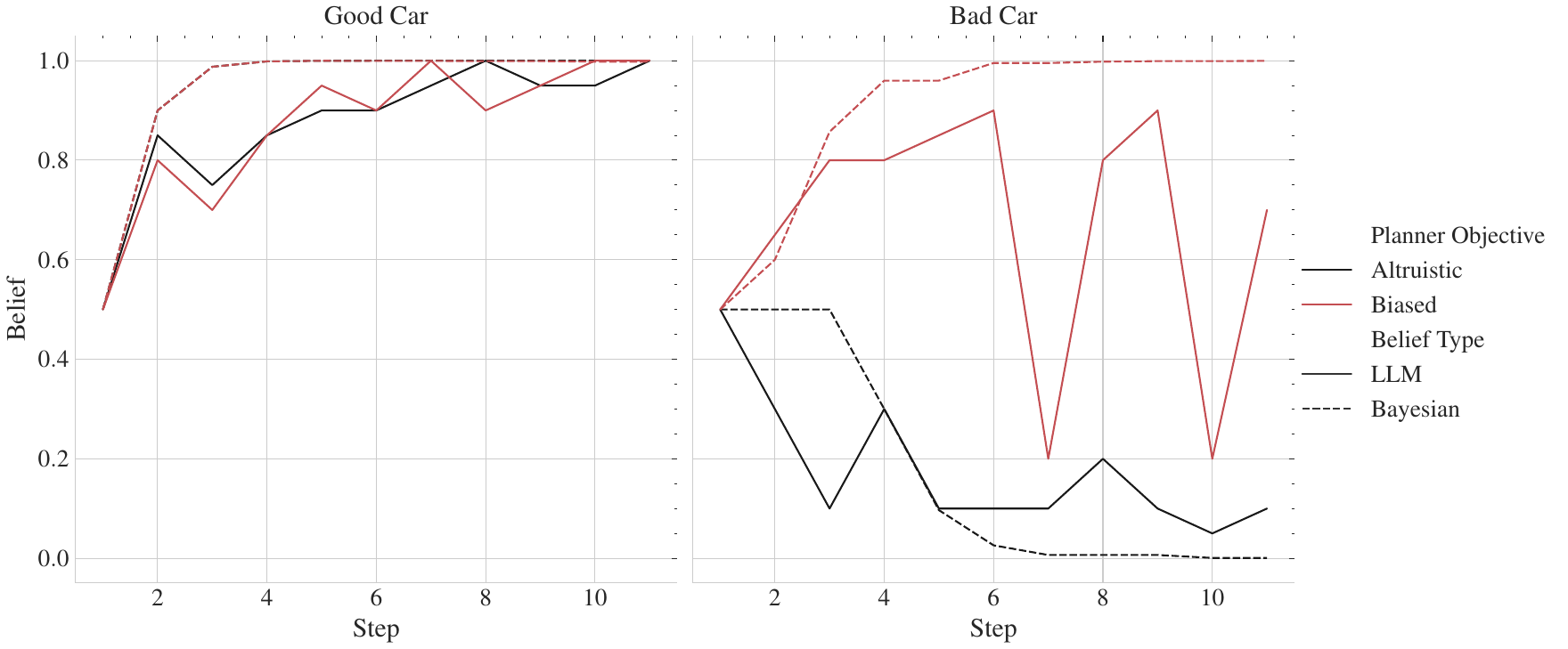}
    \caption{Agent Belief Updating trajectories for the true Bayesian belief and the belief reported by the LLM agent under analytically optimal policies for different planner objectives and true states of the world.}
    \label{fig:llm_exp_beliefs}
\end{figure} 
\end{document}